
\documentclass[10pt,a4paper]{article}

\newcommand{\macrospath}{latex-macros}

\usepackage{ifthen}


\newboolean{talk}
\setboolean{talk}{false}
\newboolean{paper}
\setboolean{paper}{false}


\newboolean{IEEEstyle}
\setboolean{IEEEstyle}{false}
\newboolean{lipicsstyle}
\setboolean{lipicsstyle}{false}
\newboolean{eptcsstyle}
\setboolean{eptcsstyle}{false}
\newboolean{sigplanstyle}
\setboolean{sigplanstyle}{false}


\newboolean{needstheorems}
\setboolean{needstheorems}{false}
\newboolean{withimages}
\setboolean{withimages}{false}
\newboolean{withproofs}
\setboolean{withproofs}{true}


\newboolean{french}
\setboolean{french}{false}

\ifthenelse{\boolean{talk}}{}{\usepackage[utf8]{inputenc}}
\ifthenelse{\boolean{french}}{\usepackage[french]{babel}}{
  \ifthenelse{\boolean{lipicsstyle}}{}{\usepackage[english]{babel}}}
\usepackage{amssymb}
\usepackage{amsmath}
\usepackage{graphicx}
\usepackage{bussproofs}
\usepackage{fancybox}
\usepackage{cmll}
\usepackage{stmaryrd}
\usepackage{multirow}
\ifthenelse{\boolean{talk}}{\usepackage{color}}{
  \ifthenelse{\boolean{lipicsstyle}}{\usepackage{color}}{\usepackage[usenames]{xcolor}}}

\ifthenelse{\boolean{IEEEstyle}}{
\setboolean{needstheorems}{true}}{}

\ifthenelse{\boolean{eptcsstyle}}{
\setboolean{needstheorems}{true}}{}

\ifthenelse{\boolean{sigplanstyle}}{
\setboolean{needstheorems}{true}}{}


\ifthenelse{\boolean{needstheorems}}{

\usepackage{amsthm}

    \newtheorem{theorem}{Theorem}[section]
    \newtheorem{lemma}[theorem]{Lemma}
    \newtheorem{corollary}[theorem]{Corollary}
    \newtheorem{proposition}[theorem]{Proposition}
    \newtheorem{definition}[theorem]{Definition}
    \newtheorem{remark}[theorem]{Remark}
}{}

\newcommand{\myproof}[1]{
\ifthenelse{\boolean{withproofs}}{#1}{}}

\newcommand{\withproofs}[1]{
\ifthenelse{\boolean{withproofs}}{#1}{}}

\newcommand{\withoutproofs}[1]{
\ifthenelse{\boolean{withproofs}}{}{#1}}



\newcommand{\tm}{t}
\newcommand{\tmtwo}{u}
\newcommand{\tmthree}{r}
\newcommand{\tmfour}{p}
\newcommand{\tmfive}{s}

\newcommand{\var}{x}
\newcommand{\vartwo}{y}
\newcommand{\varthree}{z}

\newcommand{\rootRew}[1]{\mapsto_{#1}}
\newcommand{\Rew}[1]{\rightarrow_{#1}}

\renewcommand{\to}{\Rew{}}
\newcommand{\rtob}{\rootRew{\beta}}
\newcommand{\tob}{\Rew{\beta}}












\newcommand{\lssym}{{\tt ls}}

\newcommand{\db}{{\tt dB}}



\newcommand{\ctxholep}[1]{[#1]}
\newcommand{\ctxhole}{\ctxholep{\cdot}}

\newcommand{\ctx}{C}
\newcommand{\ctxtwo}{D}
\newcommand{\ctxthree}{E}
\newcommand{\ctxfour}{F}
\newcommand{\ctxp}[1]{\ctx\ctxholep{#1}}


\newcommand{\apctx}{A}
\newcommand{\apctxtwo}{B}


\newcommand{\nbvctxtwo}[1]{\nbvctxtwo{#1}}

\newcommand{\sctx}{L}
\newcommand{\sctxtwo}{L'}
\newcommand{\sctxthree}{L''}

\newcommand{\sctxp}[1]{\sctx\ctxholep{#1}}

\newcommand{\sctxtwop}[1]{\sctxtwo\ctxholep{#1}}
\newcommand{\sctxthreep}[1]{\sctxthree\ctxholep{#1}}


\newcommand{\defeq}{:=}
\newcommand{\grameq}{::=}

\newcommand{\isub}[2]{\{#1/#2\}}
\newcommand{\esub}[2]{[#1/#2]}


\newcommand{\rtodb}{\rootRew{\db}}

\newcommand{\todbp}[1]{\Rew{\db#1}}
\newcommand{\todb}{\todbp{}}


\newcommand{\rtols}{\rootRew{\lssym}}

\newcommand{\tols}{\Rew{\lssym}}




\newcommand{\llbrace}{\{ \kern -0.27em \vert}
\newcommand{\rrbrace}{\vert \kern -0.27em \}}



\renewcommand{\l}{\lambda}
\newcommand{\ie}{{\em i.e.}}
\newcommand{\eg}{{\em e.g.}}
\newcommand{\ih}{{\em i.h.}}
\newcommand{\fv}[1]{{\tt fv}(#1)}

\newcommand{\deff}[1]{\textbf{#1}}

\newcommand{\ben}[1]{{\color{red} {#1}}}

\newcommand{\cben}[2]{{\color{red} {#2}}}

\newcommand{\ignore}[1]{}

\newcommand{\myinput}[1]{\ifthenelse{\boolean{withimages}}{\input{#1}}{}}

\newcommand{\mellies}{{Melli{\`e}s}}
\newcommand{\levy}{{L{\'e}vy}}


\newcommand{\linlogic}{linear logic}










\newcommand{\set}[1]{\{#1\}}

\newcommand{\nat}{\mathbb{N}}

\newcommand{\setone}{S}
\newcommand{\settwo}{R}


\newcommand{\pns}{proof nets}













\newcommand{\size}[1]{|#1|}















\newenvironment{varitemize}
{
\begin{list}{\labelitemi}
{\setlength{\itemsep}{0pt}
 \setlength{\topsep}{0pt}
 \setlength{\parsep}{0pt}
 \setlength{\partopsep}{0pt}
 \setlength{\leftmargin}{15pt}
 \setlength{\rightmargin}{0pt}
 \setlength{\itemindent}{0pt}
 \setlength{\labelsep}{5pt}
 \setlength{\labelwidth}{10pt}
}}
{
 \end{list} 
}

\newcounter{numberone}

\newenvironment{varenumerate}
{
\begin{list}{\arabic{numberone}.}
{
  \usecounter{numberone}
  \setlength{\itemsep}{0pt}
  \setlength{\topsep}{0pt}
  \setlength{\parsep}{0pt}
  \setlength{\partopsep}{0pt}
  \setlength{\leftmargin}{15pt}
  \setlength{\rightmargin}{0pt}
  \setlength{\itemindent}{0pt}
  \setlength{\labelsep}{5pt}
  \setlength{\labelwidth}{15pt}
}}
{
\end{list} 
}

\newcounter{numbertwo}

\usepackage{tikz}
\usetikzlibrary{matrix,arrows}
\usetikzlibrary{decorations.shapes}
\usetikzlibrary{decorations.text}
\usetikzlibrary{decorations.pathmorphing}
\usetikzlibrary{decorations.markings}

\tikzset{
node distance=1.3cm, auto,
every node/.style={font=\tiny },
ocenter/.style={baseline={([yshift=-.5ex, xshift=-.5ex]current bounding box)}},  
labelBeginAbove/.style={postaction={decorate,decoration={markings,mark=at position 0 with {\node[inner sep= 0.6pt, above=1pt]{\tiny #1};}} } },
labelBeginBelow/.style={postaction={decorate,decoration={markings,mark=at position 0 with {\node[inner sep= 0.6pt, below=1pt]{\tiny #1};}}}},
labelEndAbove/.style={postaction={decorate,decoration={markings,mark=at position 1 with {\node[inner sep= 0.6pt, above=1pt]{\tiny #1};}}}},
labelEndBelow/.style={postaction={decorate,decoration={markings,mark=at position 1 with {\node[inner sep= 0.6pt, below=1pt]{\tiny #1};}}}},
labelEndRight/.style={postaction={decorate,decoration={markings,mark=at position 1 with {\node[inner sep= 0.6pt, right=1pt]{\tiny #1};}}}},
labelEndLeft/.style={postaction={decorate,decoration={markings,mark=at position 1 with {\node[inner sep= 0.6pt, left=1pt]{\tiny #1};}}}}
}

\newcommand{\nodeHorDist}{2cm}
\newcommand{\nodeVerDist}{1cm}

\newcommand{\commDiagramRed}[8]{
%

\begin{tikzpicture}[ocenter]
	\node (s) {\normalsize #1};
  \node at (s.center)  [right=1.7*\nodeHorDist](s1){\normalsize #2};
  \node at (s.center)  [below=\nodeVerDist](s2) {\normalsize #3};
  \node at (s1|-s2) (t) {\normalsize #4};
  
  \draw[->] (s) to node {#5} (s1);
  \draw[->] (s2) to node {#6} (t);
  \draw[->] (s) to node {#7} (s2);
  \draw[->] (s1) to node {#8} (t);
\end{tikzpicture} 
}

\renewcommand{\ctxholep}[1]{\langle #1\rangle}
\newcommand{\ctxtwop}[1]{\ctxtwo\ctxholep{#1}}
\newcommand{\ctxthreep}[1]{\ctxthree\ctxholep{#1}}
\newcommand{\ctxfourp}[1]{\ctxfour\ctxholep{#1}}

\newcommand{\apctxp}[1]{\apctx\ctxholep{#1}}
\newcommand{\apctxtwop}[1]{\apctxtwo\ctxholep{#1}}

\newcommand{\esmeas}[1]{|#1|_{[\cdot]}}

\newcommand{\unfsym}{\rotatebox[origin=c]{-90}{$\rightarrow$}}
\newcommand{\unf}[1]{#1\unfsym}
\newcommand{\relunf}[2]{\unf{#1}_{#2}}

\newcommand{\opt}{useful}



\newcommand{\deriv}{\rho}
\newcommand{\derivtwo}{\tau}
\newcommand{\derivthree}{\sigma}

\newcommand{\sctximp}{\hat\sctx}

\newcommand{\tmsix}{q}
\newcommand{\tmp}{\tm'}
\newcommand{\tmtwop}{\tmtwo'}

\newcommand{\prefix}{\prec_p}
\newcommand{\outin}{\prec_O}
\newcommand{\leftright}{\prec_L}
\newcommand{\leftout}{\prec_{LO}}

\newcommand{\lo}{LO}
\newcommand{\lou}{LOU}

\newcommand{\redex}{R}
\newcommand{\redextwo}{Q}
\newcommand{\sizedb}[1]{|#1|_{\db}}
\newcommand{\sizels}[1]{|#1|_{\ls}}

\newcommand{\tostrat}{\leadsto}
\newcommand{\toes}{\leadsto_{X}}
\newcommand{\reflemma}[1]{Lemma~\ref{l:#1}}

\newcommand{\refprop}[1]{Proposition~\ref{prop:#1}}
\newcommand{\refsect}[1]{Sect.~\ref{sect:#1}}

\newcommand{\refpoint}[1]{Point~\ref{p:#1}}
\newcommand{\refremark}[1]{Remark~\ref{rem:#1}}
\newcommand{\refcoro}[1]{Corollary~\ref{coro:#1}}

\newcommand{\ls}{\lssym}

\newcommand{\bctx}{B}
\newcommand{\bctxp}[1]{\bctx\ctxholep{#1}}
\newcommand{\bctxtwo}{B'}


\newcommand{\ap}[2]{#1[#2]}
\renewcommand{\esub}[2]{[#1{\shortleftarrow}#2]}
\renewcommand{\isub}[2]{\{#1{\shortleftarrow}#2\}}

\newcommand{\gctx}{C}
\newcommand{\gctxtwo}{D}
\newcommand{\gctxthree}{E}

\newcommand{\gctxp}[1]{\gctx\ctxholep{#1}}

\newcommand{\gctxthreep}[1]{\gctxthree\ctxholep{#1}}

\renewcommand{\ctx}{S}
\renewcommand{\ctxtwo}{P}
\renewcommand{\ctxthree}{T}
\renewcommand{\ctxfour}{V}

\newcommand{\lsc}{LSC}
\newcommand{\toblo}{\Rew{LO\beta}}
\newcommand{\tolo}{\Rew{LO}}
\newcommand{\tolou}{\Rew{LOU}}
\newcommand{\tollo}{\Rew{LO}}

\renewcommand{\ap}[2]{#1#2}

\usepackage{a4wide}

\begin{document}

\setlength{\pdfpageheight}{\paperheight}
\setlength{\pdfpagewidth}{\paperwidth}

\title{Beta Reduction is Invariant, Indeed\\ (Long Version)}

\newcommand{\ugo}[1]{\textcolor{blue}{#1}}
\newcommand{\midd}{\; \; \mbox{\Large{$\mid$}}\;\;}

\author{Beniamino Accattoli \and Ugo Dal Lago}

\maketitle

\begin{abstract}
Slot and van Emde Boas' weak invariance thesis states that
\emph{reasonable} machines can simulate each other within a
polynomially overhead in time. Is $\l$-calculus a reasonable machine?
Is there a way to measure the computational complexity of a $\l$-term?
This paper presents the first complete positive answer to this
long-standing problem. Moreover, our answer is completely
machine-independent and based over a standard notion in the theory of
$\l$-calculus: the length of a leftmost-outermost derivation to normal
form is an invariant cost model. Such a theorem cannot be proved by
directly relating $\l$-calculus with Turing machines or random access
machines, because of the \emph{size explosion problem}: there are
terms that in a linear number of steps produce an exponentially long
output. The first step towards the solution is to shift to a notion of
evaluation for which the length and the size of the output are
linearly related. This is done by adopting the linear substitution
calculus (LSC), a calculus of explicit substitutions modelled after
\linlogic\ \pns\ and admitting a decomposition of leftmost-outermost
derivations with the desired property. Thus, the LSC is invariant with
respect to, say, random access machines. The second step is to show
that LSC is invariant with respect to the $\l$-calculus. The size
explosion problem seems to imply that this is not possible: having the
same notions of normal form, evaluation in the LSC is exponentially
longer than in the $\l$-calculus. We solve such an \emph{impasse} by
introducing a new form of shared normal form and shared reduction,
deemed \emph{useful}. Useful evaluation avoids those steps that
only unshare the output without contributing to
$\beta$-redexes, \ie\ the steps that cause the blow-up in size. The
main technical contribution of the paper is indeed the definition of
useful reductions and the thorough analysis of their properties.

\end{abstract}

\section{Introduction}

Theoretical computer science is built around algorithms, computational
models, and machines: an algorithm describes a solution to a problem
with respect to a fixed computational model, whose role is to provide
a handy abstraction of concrete machines. The choice of the model
reflects a tension between different needs. For complexity analysis,
one expects a neat relationship between the model's primitives and the
way in which they are effectively implemented.  In this respect,
random access machines are often taken as the reference model, since
their definition closely reflects the von Neumann architecture. The
specification of algorithms unfortunately lies at the other end of the
spectrum, as one would like them to be as machine-independent as
possible. In this case the typical model is provided by programming
languages. Functional programming languages, thanks to their
higher-order nature, provide very concise and abstract
specifications. Their strength is of course also their weakness: the
abstraction from physical machines is pushed to a level where it is no
longer clear how to measure the complexity of an algorithm. Is there a
way in which such a tension can be solved?

The tools for stating the question formally are provided by complexity
theory and by Slot and van Emde Boas' invariance thesis
\cite{DBLP:conf/stoc/SlotB84}:
\begin{center}
  \emph{Reasonable} computational models simulate each other\\ with
  polynomially bounded overhead in time,\\ and constant factor overhead
  in space.
\end{center}
The \emph{weak} invariance thesis is the variant where the requirement
about space is dropped, and it is the one we will actually work with
in this paper. The idea behind the thesis is that for reasonable
models the definition of every polynomial or super-polynomial class
such as $\mathbf{P}$ or $\mathbf{EXP}$ does not rely on the chosen
model. On the other hand, it is well-known that sub-polynomial classes
depend very much on the model, and thus it does not make much sense to
pursue a linear rather than polynomial relationship.
  
A first refinement of our question then is: are functional languages
invariant with respect to standard models like random access machines
or Turing machines? Such an invariance has to be proved via an
appropriate measure of time complexity for programs, \ie\ a \emph{cost
  model}.

The natural answer is to consider the \emph{unitary} cost model, \ie\
take the number of evaluation steps as the cost of the underlying
term. However, this is not well-defined.  The evaluation of functional
programs, indeed, depends very much on the evaluation strategy chosen
to implement the language, as the $\l$-calculus --- the reference
model for functional languages --- is so machine-independent that it
does not even come with a deterministic evaluation strategy.  And
which strategy, if any, gives us the most natural, or \emph{canonical}
cost model (whatever that means)?  These questions have received some
attention in the last decades.  The number of optimal parallel
$\beta$-steps (in the sense of \levy\ \cite{thesislevy}) to normal
form has been shown \emph{not} to be a reasonable cost model: there
exists a family of terms which reduces in a polynomial number of
parallel $\beta$-steps, but whose complexity is
non-elementary~\cite{DBLP:conf/icfp/LawallM96,DBLP:conf/popl/AspertiM98}. If
one considers the number of \emph{sequential} $\beta$-steps (in a
given strategy, for a given notion of reduction), the literature
offers some partial positive results, all relying on the use of
sharing (see below for more details).  Some quite general
results~\cite{DBLP:journals/corr/abs-1208-0515,DBLP:conf/rta/AvanziniM10}
have been obtained through graph rewriting, itself a form of sharing,
when only first order symbols are considered.

Sharing is indeed a key ingredient, for one of the issues here is due
to the \emph{representation of terms}. The ordinary way of
representing terms indeed suffers from the \emph{size explosion
  problem}: even for the most restrictive notions of reduction
(e.g. Plotkin's weak reduction), there is a family of terms
$\{\tm_n\}_{n\in\nat}$ such that $|\tm_n|$ is linear in $n$, $\tm_n$
evaluates to its normal form in $n$ steps, but at $i$-th step a term
of size $2^i$ is copied, so that the size of the normal form of
$\tm_n$ is exponential in $n$. Put differently, an evaluation sequence
of linear length can possibly produce an output of exponential
size. At first sight, then, there is no hope that evaluation lengths
may provide an invariant cost model. The idea is that such an
\emph{impasse} can be avoided by sharing common sub-terms along the
evaluation process, in order to keep the representation of the output
compact. But is appropriately managed sharing enough? The literature
offers some positive, but partial, answers to this question. The
number of steps is indeed known to be an invariant cost model for weak
reduction~\cite{DBLP:journals/tcs/LagoM08,DBLP:journals/corr/abs-1208-0515}
and for head reduction~\cite{DBLP:conf/rta/AccattoliL12}.

If the problem at hand consists in computing the \emph{normal form} of
an arbitrary $\lambda$-term, however, no positive answer is known. We
believe that not knowing whether the $\lambda$-calculus in its full
generality is a reasonable machine is embarrassing for the
$\lambda$-calculus community. In addition, this problem is relevant
in practice: proof assistants often need to check whether two terms
are convertible, itself a problem that can be reduced to the one under
consideration.

In this paper, we give a positive answer to the question above, by
showing that leftmost-outermost (\lo, for short) reduction \emph{to
  normal form} indeed induces an invariant cost model. Such an
evaluation strategy is \emph{standard}, in the sense of the
standardisation theorem, one of the central theorems in the theory of
$\l$-calculus, first proved by Curry and Feys
\cite{curry1958combinatory}. The relevance of our cost model is given
by the fact that \lo\ reduction is an abstract concept from rewriting
theory which at first sight is totally unrelated to complexity
analysis. In particular, our cost model is completely
machine-independent.

Another view on this problem comes in fact from rewriting theory
itself. It is common practice to specify the operational semantics of
a language via a rewriting system, whose rules always employ some form
of substitution, or at least of copying, of subterms. Unfortunately,
this practice is very far away from the way languages are
implemented. Indeed, actual interpreters perform copying in a very
controlled way (see, \eg,
\cite{Wad:SemPra:71,Pey:ImplFunProgLang:87}). This discrepancy
induces serious doubts about the relevance of the computational
model. Is there any theoretical justification for copy-based models,
or more generally for rewriting theory as a modelling tool? In this
paper we give a very precise answer, formulated within rewriting
theory itself.

As in our previous work \cite{DBLP:conf/rta/AccattoliL12}, we prove
our result by means of the \emph{linear substitution calculus} (see
also \cite{DBLP:conf/rta/Accattoli12,non-standard-preprint}), a simple
calculus of explicit substitutions (ES, for short) arising from linear
logic and graphical syntaxes and similar to calculi studied by De
Bruijn~\cite{deBruijn87}, Nederpelt~\cite{Ned92}, and
Milner~\cite{DBLP:journals/entcs/Milner07}. A peculiar feature of the
linear substitution calculus (\lsc) is the use of rewriting rules
\emph{at a distance}, \ie\ rules defined by means of contexts, that
are used to closely mimic reduction in linear logic proof nets. Such a
framework --- whose use does not require any knowledge of these areas
--- allows an easy management of sharing and, in contrast to previous
approaches to ES, admits a theory of standardisation and a notion of
\lo\ evaluation~\cite{non-standard-preprint}. The proof of our result
indeed is a \emph{tour de force} based on a fine quantitative study of
the relationship between \lo derivations for the $\l$-calculus and a
variation over \lo\ derivations for the \lsc. Roughly, the latter
avoids the size explosion problem while keeping a polynomial
relationship with the former.

Let us point out that invariance results usually have two directions,
while we here study only one of them (namely that the $\l$-calculus
can be efficiently simulated by, say, Turing machines). The missing
half is a much simpler problem already solved in
\cite{DBLP:conf/rta/AccattoliL12}: there is an encoding of Turing
machines into $\l$-terms s.t. their execution is simulated by weak
head $\beta$-reduction with only a linear overhead.

\emph{On Invariance and Complexity Analysis.}  
Before proceeding, let us stress some crucial points:
\begin{varenumerate}
\item 
  \emph{ES Are Only a Tool}. Although ES are an essential
  \emph{tool} for the proof of our result, the \emph{result itself} is
  about the usual, pure, $\lambda$-calculus. In particular, the
  invariance result can be used without any need to care about ES: we
  are allowed to measure the complexity of problems by simply bounding
  the \emph{number} of \lo\ $\beta$-steps taken by any $\lambda$-term
  solving the problem.
\item 
  \emph{Complexity Classes in the $\l$-Calculus}. The main
  consequence of our invariance result is that every polynomial or
  super-polynomial class, like $\mathbf{P}$ of $\mathbf{EXP}$, can be
  defined using $\l$-calculus (and \lo\ $\beta$-reduction) instead of
  Turing machines.
\item 
  \emph{Our Cost Model is Unitary}. An important point is that our
  cost model is \emph{unitary}, and thus attributes a constant cost to
  any \lo\ step. One could argue that it is always possible to reduce
  $\lambda$-terms on abstract or concrete machines and take that
  number of steps as \emph{the} cost model. First, such a measure of
  complexity would be very machine-dependent, against the very essence
  of $\l$-calculus. Second, these cost models invariably attribute a
  more-than-constant cost to any $\beta$-step, making the measure much
  harder to use and analyse. It is not evident that a computational
  model enjoys a unitary invariant cost model. As an example, if
  multiplication is a primitive operation, random access machines need
  to be endowed with a \emph{logarithmic} cost model in order to
  obtain invariance.
\end{varenumerate}
The next section explains why the problem at hand is hard, and in
particular why iterating our previous results on head reduction
\cite{DBLP:conf/rta/AccattoliL12} does not provide a solution.

\section{Why is The Problem Hard?}
In principle, one may wonder why sharing is needed at all, or whether
a relatively simple form of sharing suffices. In this section, we will
show that sharing is unavoidable and that a new subtle notion of
sharing is necessary.

If we stick to explicit representations of terms, in which sharing is
not allowed, counterexamples to invariance can be designed in a fairly
easy way.  Let $\tmtwo$ be the lambda term $\vartwo\var\var$ and
consider the sequence $\{\tm_n\}_{n\in\nat}$ of $\lambda$-terms
defined as $\tm_0=\tmtwo$ and $\tm_{n+1}=(\lambda\var.\tm_n)\tmtwo$
for every $n\in\nat$.  The term $\tm_n$ has size linear in $n$, and
$\tm_n$ rewrites to its normal form $\tmthree_n$ in exactly $n$ steps,
following the \lo\ reduction order; as an example:
\begin{align*}
\tm_0&=\tmtwo=\tmthree_0;\\
\tm_1&\rightarrow\vartwo\tmtwo\tmtwo=\vartwo\tmthree_0\tmthree_0=\tmthree_1;\\
\tm_2&\rightarrow(\lambda\var.\tm_0)(\vartwo\tmtwo\tmtwo)=(\lambda\var.\tmtwo)\tmthree_1
  \rightarrow\vartwo\tmthree_1\tmthree_1=\tmthree_2.
\end{align*}
For every $n$, however, $\tmthree_{n+1}$ contains two copies of
$\tmthree_n$, hence the size of $\tmthree_n$ is \emph{exponential} in
$n$. As a consequence, the unitary cost model \emph{is not} invariant:
in a linear number of $\beta$-steps we reach an object which cannot
even be written down in polynomial time.

The solution the authors proposed in~\cite{DBLP:conf/rta/AccattoliL12}
is based on ES, and allows to tame the size explosion problem in a
satisfactory way when \emph{head} reduction suffices. In particular,
the head steps above become the following \emph{linear} head steps:
\begin{align*}
\tm_0&=\tmtwo=\tmfour_0;\\
\tm_1&\rightarrow(\vartwo\var\var)\esub\var\tmtwo=\tmtwo\esub\var\tmtwo=\tmfour_1;\\
\tm_2&\rightarrow((\lambda\var.\tm_0)\tmtwo)\esub\var\tmtwo=((\lambda\var.\tmtwo)\tmtwo)\esub\var\tmtwo\\
  &\rightarrow\tmtwo\esub\var\tmtwo\esub\var\tmtwo=\tmfour_2.
\end{align*}
As one can easily verify, the size of $\tmfour_n$ is linear in
$n$. More generally, linear head reduction (LHR) has the \emph{subterm
  property}, \ie\ it only duplicates subterms of the initial
term. This fact implies that the size of the result and the length of
the derivation are linearly related. In other words, the size
explosion problem has been solved. Of course one needs to show that 1)
the compact results \emph{unfold} to the expected result (that may be
exponentially bigger), and 2) that compact representations can be
managed efficiently (typically they can be tested for equality in time
polynomial in the size of the compact representation),
see~\cite{DBLP:conf/rta/AccattoliL12} or below for more details.

It may seem that one is then forced to use ES to measure
complexity. In~\cite{DBLP:conf/rta/AccattoliL12} we also showed that
LHR is at most quadratically longer than head reduction, so that the
polynomial invariance of LHR lifts to head reduction. This is how we
exploit sharing to circumvent the size explosion problem: we are
allowed to take the length of the head derivation as a cost model,
even if it suffers of the size explosion problem, because the actual
implementation is meant to be done via LHR and be only polynomially
(actually quadratically) longer.

There is a natural candidate for extending the approach to reduction
to \emph{normal form}: just iterate the (linear) head strategy on the
arguments, obtaining the (linear) \lo\ strategy, that \emph{does}
compute normal forms \cite{non-standard-preprint}. As we will show,
for linear \lo\ derivations the subterm property holds. The size of
the output is still under control, being linearly related to the
length of the \lo\ derivation. Unfortunately, when computing normal
forms this is not enough.

One of the key points in our previous work was that there is a notion
of \emph{linear head} normal form that is a compact representation for
\emph{head} normal forms. The generalisation of such an approach to
normal forms has to face a fundamental problem: what is a
\emph{linear} normal form? Indeed, terms with and without ES share the
same notion of normal form. Consider again the family of terms
$\{\tm_n\}_{n\in\nat}$: if we go on and unfold all substitutions in
$\tmfour_n$, we end up in $\tmthree_n$. Thus, by the subterm property,
the linear \lo\ strategy takes an exponential number of steps, and so
it cannot be polynomially related to the \lo\ strategy.

Summing up, we need a strategy that 1) implements the \lo\ strategy,
2) has the subterm property and 3) never performs \emph{useless}
substitution steps, \ie\ those steps whose role is simply to explicit
the normal form, without contributing in any way to
$\beta$-redexes. The main contribution of this work is the definition
of such a linear \emph{useful} strategy, and the proof that it is
indeed polynomially related to both the \lo\ strategy and a concrete
implementation model.

This is not a trivial task, actually. One may think that it is enough
to evaluate a term $\tm$ in a \lo\ way, stopping as soon as the
unfolding $\unf{\tmtwo}$ of the current term $\tmtwo$ --- the
term obtained by expanding the ES of $\tmtwo$ --- is a $\beta$-normal
form. Unfortunately, this simple approach does not work, because the
exponential blow-up may be caused by ES \emph{lying between} two
$\beta$-redexes, so that proceeding in a \lo\ way would
\cben{execute}{unfold} the problematic substitutions anyway.

Our notion of \emph{useful} step \ben{will} elaborate on this idea, by
computing \emph{partial} unfoldings, to check if a substitution step
contributes or will contribute to some future $\beta$-redex. Of
course, we will have to show that such tests can be themselves
perfomed in polynomial time, and that the notion of \lo\ \emph{useful}
reduction retains all the good properties of \lo\ reduction.

\section{The Calculus}
We assume familiarity with the $\l$-calculus (see
\cite{Barendregt84}).  The language of the \emph{linear substitution
  calculus} (\lsc\ for short) is given by the following grammar for
terms:
$$
\tm,\tmtwo,\tmthree,\tmfour\grameq\var\midd \l \var. \tm \midd \ap\tm \tmtwo\midd  \tm\esub\var\tmtwo.
$$ 
The constructor $\tm\esub{\var}{\tmtwo}$ is called an \emph{explicit
  substitution} (of $\tmtwo$ for $\var$ in $\tm$, the usual (implicit)
substitution is instead noted $\tm\isub{\var}{\tmtwo}$). Both $\l
\var. \tm$ and $\tm\esub{\var}{\tmtwo}$ bind $\var$ in $\tm$, and we
silently work modulo $\alpha$-equivalence of these bound variables,
\eg\ $(\var\vartwo)\esub\vartwo\tm\isub\var\vartwo =
(\vartwo\varthree)\esub\varthree\tm$. We use $\fv\tm$ for the set of
free variables of $\tm$.

\emph{Contexts.} The operational semantics of the \lsc\ is parametric
in a notion of (one-hole) context. General contexts are defined by:
$$
\gctx \grameq \ctxhole\midd \l \var. \gctx\midd \ap\gctx \tm \midd\ap\tm\gctx\midd\gctx\esub{\var}{\tm}\midd\tm\esub{\var}{\gctx},
$$ 
and the plugging of a term $\tm$ into a context $\gctx$ is defined as
$\ctxhole\ctxholep\tm\defeq\tm$, $(\l\var.\gctx)\ctxholep\tm \defeq
\l\var.(\gctx\ctxholep\tm)$, and so on. As usual, plugging in a
context can capture variables,
\eg\ $((\ctxhole\vartwo)\esub\vartwo\tm)\ctxholep\vartwo =
(\vartwo\vartwo)\esub\vartwo\tm$. The plugging $\gctxp\gctxtwo$ of a
context $\gctxtwo$ into a context $\gctx$ is defined analogously.

Along most of the paper, however, we will not need such a general
notion of context. In fact, our study takes a simpler form if the
operational semantics is defined with respect to \deff{shallow}
contexts, defined as (note the absence of the production
$\tm\esub{\var}{\ctx}$):
$$
\ctx,\ctxtwo,\ctxthree,\ctxfour \grameq \ctxhole\midd \l \var. \ctx\midd \ap\ctx \tm \midd\ap\tm\ctx\midd\ctx\esub{\var}{\tm}.
$$
In the following, whenever we refer to a \emph{context} without
further specification it is implicitly assumed that it is a
\emph{shallow} context. A special class of contexts is that of
\deff{substitution contexts}: 
$$
\sctx\grameq\ctxhole\midd
\sctx\esub{\var}{\tm}.
$$

\emph{Operational Semantics.} The (shallow) rewriting rules $\todb$
($\db$ = \emph{$\beta$ at a distance}) and $\tols$ (linear
substitution) are given by the closure by (shallow) contexts of the
following rules:
\begin{align*}
  \ap{\sctxp{\l \var.\tm}} \tmtwo & \rtodb \sctxp{\tm\esub{\var}{\tmtwo}};\\
  \ctxp{\var}\esub{\var}{\tmtwo} &\rtols \ctxp{\tmtwo}\esub{\var}{\tmtwo}.
\end{align*}
The union of $\todb$ and $\tols$ is simply noted $\to$. The rewriting
rules are assumed to use \emph{on-the-fly} $\alpha$-equivalence to
avoid variable capture. For instance,
\begin{align*}
    (\l \var.\tm)\esub\vartwo\tmtwo \vartwo & \todb
  \tm\isub\vartwo\varthree\esub\var\vartwo\esub\varthree\tmtwo\qquad\mbox{for
    $\varthree\notin\fv\tm$};\\ (\l
  \vartwo.(\var\vartwo))\esub{\var}{\vartwo} &\tols (\l
  \varthree.(\vartwo\varthree))\esub{\var}{\vartwo}.
\end{align*}
Moreover, in rule $\ls$ the context $\ctx$ is assumed to not capture
$\var$, so that $(\l \var.\var)\esub{\var}{\vartwo} \not\tols (\l
\var.\vartwo)\esub{\var}{\vartwo}$.

The just defined shallow fragment simply ignores garbage collection
(that in the \lsc\ can always be postponed
\cite{DBLP:conf/rta/Accattoli12}) and lacks some of the nice
properties of the LSC (obtained simply by replacing shallow contexts
by general contexts). Its relevance is the fact that it is the
smallest fragment implementing linear \lo\ reduction. The following
are examples of shallow steps:
\begin{align*}
  \ap{(\l\var.\var)}\vartwo&\todb\var\esub\var\vartwo;\\
  (\ap\var\var)\esub\var\tm&\tols(\ap\var\tm)\esub\var\tm;
\end{align*}
while the following steps are not 
\begin{align*}
  \tm\esub\varthree{\ap{(\l\var.\var)}\vartwo}&\todb\tm\esub\varthree{\var\esub\var\vartwo};\\
  \var\esub\var\vartwo\esub\vartwo\tm&\tols\var\esub\var\tm\esub\vartwo\tm.
\end{align*}

Taking the external context into account, a substitution step has the
following \emph{explicit} form: $
\ctxtwop{\ctxp{\var}\esub{\var}{\tmtwo}} \tols
\ctxtwop{\ctxp{\tmtwo}\esub{\var}{\tmtwo}}$. We shall often use a
\emph{compact} form, writing $\ctxthreep{\var}\tols\ctxthreep{\tmtwo}$
where it is implicitly assumed that $\ctxthree =
\ctxtwop{\ctx\esub{\var}{\tmtwo}}$. We use $\redex$ and $\redextwo$ as
metavariables for redexes. A \deff{derivation} $\deriv:\tm\to^k\tmtwo$
is a finite sequence of reduction steps, sometimes given as
$\redex_1;\ldots; \redex_k$, \ie\ as the sequence of reduced
redexes. We write $\size\tm$ for the size of $\tm$, $\esmeas\tm$ for
the number of substitutions in $\tm$, $\size\deriv$ for the length of
$\deriv$, and $\sizedb\deriv$ for the number of $\db$-steps in
$\deriv$.

\emph{(Relative) Unfoldings.} The unfolding $\unf{\tm}$ of a term
$\tm$ is the $\l$-term obtained from $\tm$ by turning its explicit
substitutions into implicit ones:
\begin{align*}
  \unf{\var}&\defeq\var; &
  \unf{(\ap\tm \tmtwo)}&  \defeq \ap{\unf{\tm}} {\unf{\tmtwo}};\\
  \unf{(\l \var. \tm)} & \defeq  \l \var. \unf{\tm};&
  \unf{(\tm\esub{\var}{\tmtwo})} & \defeq  \unf{\tm}\isub{\var}{\unf{\tmtwo}}.
\end{align*}
We will also need a more general notion, the unfolding
$\relunf{\tm}{\ctx}$ of $\tm$ in a context $\ctx$:
\begin{align*}
\relunf{\tm}{\ctxhole}&\defeq\unf{t};&\relunf{\tm}{\tmtwo\ctx}&\defeq\relunf{\tm}{\ctx};
    &\relunf{\tm}{\ctx\esub{\var}{\tmtwo}}&\defeq\relunf{\tm}{\ctx}\isub{\var}{\unf{\tmtwo}};\\
\relunf{\tm}{\l \var. \ctx} &\defeq\relunf{\tm}{\ctx};&\relunf{\tm}{\ctx\tmtwo}&\defeq\relunf{\tm}{\ctx}.
\end{align*}
For instance, 
\begin{align*}
  \unf{(\ap\var{(\ap\vartwo\varthree))}\esub\vartwo\var\esub\var\varthree} 
  &=
  \ap\varthree{(\ap\varthree\varthree)};\\		
  \relunf{(\ap\var\vartwo)}{(\ap{\ctxhole\esub\vartwo\var}\tm)\esub\var{\l\varthree.(\ap\varthree\varthree)}}
  &=
  {(\l\varthree.(\ap\varthree\varthree))}{\l\varthree.(\ap\varthree\varthree)}.
\end{align*}
We extend implicit substitutions and unfoldings to contexts by setting
$\ctxhole\isub\var\tm\defeq\ctxhole$ and
$\unf{\ctxhole}\defeq\ctxhole$ (all other cases are defined as
expected, \eg\ $\unf{\ctx\esub\var\tm}\defeq
\unf{\ctx}\isub\var{\unf{\tm}}$). We also write $\ctx\prefix\tm$ if
there is a term $\tmtwo$ s.t. $\ctxp\tmtwo=\tm$, call it the
\deff{prefix relation}. We have the following properties, that only hold
because our contexts are shallow (implying that the hole cannot
be duplicated during the unfolding).

\begin{lemma}
\label{l:ctx-unf} 
Let $\ctx$ be a shallow context. Then:
\begin{varenumerate}
\item\label{p:ctx-unf-one}
  $\unf{\ctx}$ is a shallow context;
\item\label{p:ctx-unf-two}
  $\ctxp\tm\isub\var\tmtwo={\ctx\isub\var\tmtwo}\ctxholep{\tm\isub\var\tmtwo}$;
\item\label{p:ctx-unf-three}
  $\unf{\ctxp\tm}=\unf{\ctx}\ctxholep{\relunf{\tm}{\ctx}}$, in
  particular if $\ctx\prefix\tm$ then $\unf{\ctx}\prefix\unf{\tm}$.
\end{varenumerate}
\end{lemma}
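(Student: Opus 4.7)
All three items will be proved by a straightforward induction on the structure of the shallow context $\ctx$. The crucial role of shallowness, as hinted by the authors, is that the hole $\ctxhole$ never occurs inside the body of an explicit substitution; hence when we unfold an ES $\ctxtwo\esub{\vartwo}{\tmthree}$ appearing in $\ctx$ (where the hole lives in $\ctxtwo$, not in $\tmthree$), the meta-level substitution $\isub{\vartwo}{\unf{\tmthree}}$ is applied to $\unf{\ctxtwo}$, a term with one hole, and this substitution traverses $\unf{\ctxtwo}$ without ever touching the hole itself. Thus the hole is preserved with multiplicity one, which is precisely what makes $\unf{\ctx}$ a (shallow) context.

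For item~\ref{p:ctx-unf-one}, the induction is immediate on the grammar of shallow contexts. The only case that requires comment is $\ctx = \ctxtwo\esub{\vartwo}{\tmthree}$: by the induction hypothesis $\unf{\ctxtwo}$ is a shallow context, and unfolding gives $\unf{\ctxtwo}\isub{\vartwo}{\unf{\tmthree}}$, which by the action of implicit substitution on contexts (with $\ctxhole\isub{\vartwo}{\cdot} \defeq \ctxhole$) is again shallow; the other productions are treated analogously.

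For item~\ref{p:ctx-unf-two}, the standard context-substitution lemma, I proceed by induction on $\ctx$, with appropriate $\alpha$-renaming in the binding cases to avoid capture of $\var$ or variables free in $\tmtwo$. The base case $\ctx = \ctxhole$ is direct since $\ctxhole\isub{\var}{\tmtwo} = \ctxhole$. In each inductive case (abstraction, application on either side, ES) the implicit substitution commutes with the constructor, and the induction hypothesis rewrites the subterm as the plugging into the substituted subcontext.

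Item~\ref{p:ctx-unf-three} is the main statement and is again proved by induction on $\ctx$. The base case $\ctx = \ctxhole$ unfolds both sides to $\unf{\tm}$. The abstraction and application cases are immediate from the definition of $\relunf{\cdot}{\cdot}$, which ignores these constructors. The interesting case is $\ctx = \ctxtwo\esub{\vartwo}{\tmthree}$, where
\[
\unf{\ctxp{\tm}} \;=\; \unf{\ctxtwop{\tm}}\isub{\vartwo}{\unf{\tmthree}}
\;\stackrel{\mathrm{IH}}{=}\; \bigl(\unf{\ctxtwo}\ctxholep{\relunf{\tm}{\ctxtwo}}\bigr)\isub{\vartwo}{\unf{\tmthree}};
\]
then item~\ref{p:ctx-unf-two} applied to the shallow context $\unf{\ctxtwo}$ (provided by item~\ref{p:ctx-unf-one}) pushes the implicit substitution through the plugging, giving
\[
\unf{\ctxtwo}\isub{\vartwo}{\unf{\tmthree}}\,\ctxholep{\relunf{\tm}{\ctxtwo}\isub{\vartwo}{\unf{\tmthree}}} \;=\; \unf{\ctx}\,\ctxholep{\relunf{\tm}{\ctx}},
\]
using the defining clause $\relunf{\tm}{\ctxtwo\esub{\vartwo}{\tmthree}} \defeq \relunf{\tm}{\ctxtwo}\isub{\vartwo}{\unf{\tmthree}}$. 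Finally, if $\ctx\prefix\tm$, writing $\tm = \ctxp{\tmtwo}$ and applying the identity just proved yields $\unf{\tm} = \unf{\ctx}\ctxholep{\relunf{\tmtwo}{\ctx}}$, whence $\unf{\ctx}\prefix\unf{\tm}$. The main potential obstacle is keeping the $\alpha$-conventions clean in the binder cases of item~\ref{p:ctx-unf-two}, since the implicit substitution commutes with binders only after renaming; this is routine but must be invoked carefully.
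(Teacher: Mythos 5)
Your proposal is correct and follows exactly the paper's (very terse) proof: all three points by induction on $\ctx$, with point~\ref{p:ctx-unf-three} invoking point~\ref{p:ctx-unf-two} in the case $\ctx=\ctxtwo\esub{\vartwo}{\tmthree}$ to push the implicit substitution through the plugging. Your additional observation that shallowness is what prevents the hole from being duplicated or erased under unfolding matches the remark the authors make just before stating the lemma.
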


\begin{proof}
  All points are by induction on $\ctx$. In particular,
  \refpoint{ctx-unf-three} uses \refpoint{ctx-unf-two} when
  $\ctx=\ctxtwo\esub\var\tmtwo$.
\end{proof}

Given a derivation $\deriv:\tm\to^*\tmtwo$ in the \lsc, we often
consider the $\beta$-derivation
$\unf{\deriv}:\unf{\tm}\tob^*\unf{\tmtwo}$ obtained by projecting
$\deriv$ via unfolding.

\newcommand{\str}{\rightarrow} 
\newcommand{\nos}[2]{\#_{#1}(#2)}
\newcommand{\strp}[3]{#1^{#2}\!\!(#3)}
\emph{Reduction Combinatorics.} Given any calculus, a deterministic
strategy $\str$ for it, and a term $\tm$, the expression
$\nos{\str}{\tm}$ stands for the number of reduction steps necessary
to reach the normal form of $\tm$ along $\str$, or $\infty$ if $\tm$
diverges. Similarly, given a natural number $n$, the expression
$\strp{\str}{n}{\tm}$ stands for the term $\tmtwo$ such that
$\tm\str^n\tmtwo$, if $n\leq\nos{\str}{\tm}$, or for the normal form
of $\tm$ otherwise.

\section{The Proof, Made Abstract}
\label{sect:abstract-proof}
Our proof method can be described abstractly. Such an approach both
clarifies the structure of the proof and prepares the ground for
possible generalisations to, \eg, the call-by-value $\l$-calculus or
calculi with additional features as pattern matching or control
operators. We want to show that a certain strategy $\tostrat$ for the
$\l$-calculus provides a unitary and invariant cost model, \ie\ that
the number of $\tostrat$ steps is a measure polynomially related to
the number of transitions on Turing machines. As explained in the
introduction, we pass through an intermediary computational model, a
calculus with ES, the \emph{linear substitution calculus}, playing the
role of a very abstract machine for $\l$-terms.

We are looking for an appropriate strategy $\toes$ within the LSC
which is invariant with respect to both $\tostrat$ and Turing
machines. Then we need two theorems, which together form the main
result of the paper:
\begin{varenumerate}
\item 
  \emph{High-Level Implementation}: $\tostrat$ terminates iff
  $\toes$ terminates. Moreover, $\tostrat$ is implemented by $\toes$
  with only a polynomial overhead. Namely, $\tm\toes^k\tmtwo$ iff
  $\tm\tostrat^h\unf{\tmtwo}$ with $k$ polynomial in $h$ (our actual
  bound will be quadratic);
\item 
  \emph{Low-Level Implementation}: $\toes$ is implemented on
  Turing machines with an overhead in time which is polynomial in both
  $k$ and the size of $\tm$.
\end{varenumerate}
The high-level part relies on the following notion.
\begin{definition}
Let $\tostrat$ be a deterministic strategy on $\l$-terms and $\toes$ a
strategy of the \lsc. The pair $(\tostrat,\toes)$ is a
\deff{high-level implementation system} if whenever $\tm$ is a
$\l$-term and $\deriv:\tm\toes^*\tmtwo$ then:
\begin{varenumerate}
  \item 
    \emph{Normal Form}: if $\tmtwo$ is a $\toes$-normal form then
    $\unf{\tmtwo}$ is a $\tostrat$-normal form.
  \item 
    \emph{Projection}: $\unf{\deriv}:\tm\tostrat^*\unf{\tmtwo}$ and
    $\size{\unf{\deriv}}=\sizedb\deriv$.
  \item 
    \emph{Trace}: the number $\esmeas\tmtwo$ of ES in $\tmtwo$ is
    exactly the number $\sizedb\deriv$ of $\db$-steps in $\deriv$;
  \item 
    \emph{Syntactic Bound}: the length of a sequence of substitution
    steps from $\tmtwo$ is bounded by $\esmeas\tmtwo$.
\end{varenumerate}
\end{definition}
Concretely, the high-level implementation system at work in the paper
will take as $\tostrat$ the \lo\ strategy of the $\l$-calculus and as
$\toes$ \emph{a variant} of the linear \lo\ strategy for the \lsc. A
variant is required because, as we will explain, the linear
\lo\ strategy of the \lsc\ does not satisfy the syntactic bound
property.

The normal form and projection properties address the
\emph{qualitative} part of the high-level implementation theorem, \ie\
the part about termination. The normal form property guarantees that
$\toes$ does not stop prematurely, so that when $\toes$ terminates
$\tostrat$ cannot keep going. The projection property guarantees that
termination of $\tostrat$ implies termination of $\toes$. The two
properties actually state a stronger fact: \emph{$\tostrat$ steps can
  be identified with the $\db$-steps of the $\toes$ strategy}.

The trace and syntactic bound properties are instead used for the
\emph{quantitative} part of the theorem, \ie\ to provide the
polynomial bound. The two properties together provide a bound the
number of $\ls$-steps in a $\toes$ derivation with respect to the
number of $\db$-steps, that---by the identification of $\beta$ and
$\db$ redexes---is exactly the length of the associated $\tostrat$
derivation.

The high-level part can now be proved abstractly.
\begin{theorem}[High-Level Implementation]
  Let $\tm$ be an ordinary $\l$-term and $(\tostrat,\toes)$ a
  high-level implementation system. Then:
  \begin{varenumerate}
  \item 
    $\tm$ is $\tostrat$-normalising iff it is $\toes$-normalising.
  \item 
    If $\deriv:\tm\toes^*\tmtwo$ then
    $\unf{\deriv}:\tm\tostrat^*\unf{\tmtwo}$ and
    $\size\deriv=O(\size{\unf{\deriv}}^2)$.
  \end{varenumerate}
\end{theorem}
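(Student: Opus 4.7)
The plan is to prove the two clauses separately, using the four properties of a high-level implementation system as black boxes.

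For clause (1), the ``only if'' direction is the easy one: any $\toes$-derivation $\deriv : \tm \toes^* \tmtwo$ with $\tmtwo$ a $\toes$-normal form yields, via the Normal Form and Projection properties, a $\tostrat$-derivation $\unf{\deriv} : \tm \tostrat^* \unf{\tmtwo}$ terminating at the $\tostrat$-normal form $\unf{\tmtwo}$. The ``if'' direction requires more care: assume $\tm$ is $\tostrat$-normalising in $n$ steps, and suppose for contradiction that $\tm$ admits an infinite $\toes$-derivation $\deriv_\infty$. By Projection, every finite prefix of $\deriv_\infty$ projects to a $\tostrat$-derivation from $\tm$; since $\tostrat$ is deterministic and halts after $n$ steps, no prefix of $\deriv_\infty$ may contain more than $n$ $\db$-steps. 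Hence $\deriv_\infty$ has a suffix consisting purely of $\ls$-steps, starting from some term $\tmtwo$; but then Trace gives $\esmeas\tmtwo \leq n$, and Syntactic Bound caps this $\ls$-suffix by $\esmeas\tmtwo$, contradicting its infiniteness.

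For clause (2), the idea is to combine Trace and Syntactic Bound to count the $\ls$-steps in $\deriv : \tm \toes^* \tmtwo$ as a function of the number $k \defeq \sizedb\deriv$ of its $\db$-steps. Factor $\deriv$ into maximal runs of $\ls$-steps separated by single $\db$-steps, and write $\tm_i$ for the term reached right after the $i$-th $\db$-step, with $\tm_0 \defeq \tm$. Because $\ls$-steps preserve the number of explicit substitutions and each $\db$-step increases it by exactly one, and because $\tm$ is a pure $\l$-term, the Trace property applied to the prefix leading to $\tm_i$ yields $\esmeas{\tm_i} = i$. The Syntactic Bound then bounds the $i$-th $\ls$-run (starting at $\tm_i$) by $i$, for $i = 0, \ldots, k$, so the total number of $\ls$-steps in $\deriv$ is at most $\sum_{i=0}^k i = O(k^2)$. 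Adding the $k$ $\db$-steps and using the identity $k = \sizedb\deriv = \size{\unf\deriv}$ supplied by Projection gives $\size\deriv = O(\size{\unf\deriv}^2)$, as required; the associated $\tostrat$-derivation $\unf\deriv$ is of course furnished by Projection directly.

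The only delicate point I expect is the bookkeeping around $\esmeas{\tm_i} = i$, which implicitly relies on the fact that the reduction rules of the \lsc\ (as defined here, without garbage collection) affect the substitution count in precisely the expected way; once that is granted, everything else follows mechanically from the four black-box properties of a high-level implementation system.
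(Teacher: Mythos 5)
Your proposal is correct and follows essentially the same route as the paper: both clauses are derived from the four black-box properties, with clause~(1) proved by projecting for one direction and by a contradiction combining Projection, Trace and Syntactic Bound for the other, and clause~(2) by decomposing $\deriv$ into alternating $\db$/$\ls$ blocks and summing the per-block bounds $b_i\leq\esmeas{\tm_i}=i$ to get the quadratic estimate. The only (immaterial) differences are that the paper groups consecutive $\db$-steps into blocks $\todb^{a_i}$ rather than isolating single $\db$-steps, and runs the termination contradiction by first bounding the $\ls$-steps rather than the $\db$-steps; your "delicate point" about $\esmeas{\tm_i}=i$ is exactly the Trace property, so nothing further is needed.
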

\proof
\begin{varenumerate}
\item 
  $\Leftarrow$) Suppose that $\tm$ is $\toes$-normalisable and let
  $\deriv:\tm\toes^*\tmtwo$ a derivation to $\toes$-normal form. By
  the projection property there is a derivation
  $\tm\tostrat^*\unf{\tmtwo}$. By the normal form property
  $\unf{\tmtwo}$ is a $\tostrat$-normal form.
  
  $\Rightarrow$) Suppose that $\tm$ is $\tostrat$-normalisable and let
  $\derivtwo:\tm\tostrat^k\tmtwo$ be the derivation to
  $\tostrat$-normal form (unique by determinism of
  $\tostrat$). Assume, by contradiction, that $\tm$ is not
  $\toes$-normalisable. Then there is a family of $\toes$-derivations
  $\deriv_i:\tm\toes^i\tmtwo_i$ with $i\in\nat$, each one extending
  the previous one. By the syntactic bound property, $\toes$ can make
  only a finite number of $\ls$ steps (more generally, $\tols$ is
  strongly normalising in the \lsc). Then the sequence
  $\set{\sizedb{\deriv_i}}_{i\in\nat}$ is non-decreasing and
  unbounded. By the projection property, the family
  $\set{\deriv_i}_{i\in\nat}$ unfolds to a family of
  $\tostrat$-derivations $\set{\unf{\deriv_i}}_{i\in\nat}$ of
  unbounded length (in particular greater than $k$), absurd.
  
\item 
  By the projection property, it follows that
  $\unf{\deriv}:\tm\tostrat^*\unf{\tmtwo}$. Moreover, to show
  $\size\deriv=O(\size{\unf{\deriv}}^2)$ it is enough to show
  $\size\deriv=O(\sizedb{\deriv}^2)$. Now, $\deriv$ has the shape:
  \begin{center}
    {\footnotesize
      $\tm=\tmthree_1\todb^{a_1}\tmfour_1\tols^{b_1}\tmthree_2\todb^{a_2}\tmfour_2\tols^{b_2}\ldots\tmthree_k\todb^{a_k}\tmfour_k\tols^{b_k}\tmtwo$.}
  \end{center}
  By the syntactic bound property, we obtain
  $b_i\leq\esmeas{\tmfour_i}$.  By the trace property we obtain
  $\esmeas{\tmfour_i}=\sum_{j=1}^i a_j$, and so $b_i\leq\sum_{j=1}^i
  a_j$. Then:
  \begin{center}
    $\sizels\deriv=\sum_{i=1}^k b_i\leq\sum_{i=1}^k\sum_{j=1}^i a_j$.
  \end{center}
  Note that $\sum_{j=1}^i a_j\leq \sum_{j=1}^k a_j=\sizedb\deriv$ and
  $k\leq\sizedb\deriv$. So
  \begin{center}
    $\sizels\deriv\leq\sum_{i=1}^k\sum_{j=1}^i a_j\leq\sum_{i=1}^k\sizedb\deriv\leq \sizedb\deriv^2$.
  \end{center}
  Finally, $\size\deriv=\sizedb\deriv+\sizels\deriv\leq
  \sizedb\deriv+\sizedb\deriv^2=O(\sizedb\deriv^2)$.\qed
\end{varenumerate}
\medskip
\noindent For the low-level part we rely on the following notion.

\begin{definition}
\label{def:mech}
A strategy $\toes$ on \lsc\ terms is \deff{mechanisable} if given a
derivation $\deriv:\tm\toes^*\tmtwo$:
\begin{varenumerate}
\item 
  \emph{Subterm}: the terms duplicated along $\deriv$ are subterms of
  $\tm$.
\item 
  \emph{Selection}: the search of the next $\toes$ redex to reduce in
  $\tmtwo$ takes polynomial time in $\size\tmtwo$.
\end{varenumerate}
\end{definition}
The \emph{subterm property} \ugo{---} essentially \ugo{---} guarantees
that any step has a linear cost in the size of the initial term, the
fundamental parameter for complexity; it will be discussed in more
detail in \refsect{standard}. At first sight the \emph{selection
  property} is always trivially verified: finding a redex in $\tmtwo$
takes time linear in $\size\tmtwo$. However, our strategy for ES will
reduce only redexes satisfying a side-condition whose na\"ive
verification is exponential in $\size\tmtwo$. Then one has to be sure
that such a computation can be done in polynomial time.

\begin{theorem}[Low-Level Implementation]
  Let $\toes$ be a mechanisable strategy.  Then there is an algorithm
  that on input $\tm$ and $k$ outputs $\strp{\toes}{k}{\tm}$, and
  which works in time polynomial in $k$ and $\size\tm$.
\end{theorem}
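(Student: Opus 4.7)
The plan is to run a naive step-by-step interpreter for $\toes$ on a standard tree (or pointer) representation of LSC terms, and show that each of the first $k$ steps costs time polynomial in $k$ and $\size\tm$, so that the total is polynomial as well. Fix a representation in which a term of size $n$ takes $O(n)$ space and the usual operations (locating a position, plugging into a context, copying a subterm, $\alpha$-renaming) run in time polynomial in the current size.

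The first step is a uniform bound on the sizes of the intermediate terms $\tm=\tmtwo_0\toes\tmtwo_1\toes\cdots\toes\tmtwo_k=\strp{\toes}{k}{\tm}$. A $\db$-step replaces $\ap{\sctxp{\l \var.\tmthree}}\tmfour$ by $\sctxp{\tmthree\esub{\var}{\tmfour}}$ and hence changes the size by $O(1)$. An $\ls$-step replaces an occurrence of a variable by the content $\tmfour$ of the corresponding ES, so the size grows by $\size{\tmfour}-1$. By the subterm property, $\tmfour$ is a subterm of the initial $\tm$, so $\size{\tmfour}\leq\size\tm$. Iterating, $\size{\tmtwo_i}\leq(1+i)\cdot\size\tm$, and every intermediate term therefore has size $O(k\cdot\size\tm)$.

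The second step is to bound the cost of a single simulation step. By the selection property, locating the next $\toes$-redex in $\tmtwo_i$ takes time polynomial in $\size{\tmtwo_i}=O(k\cdot\size\tm)$, hence polynomial in $k$ and $\size\tm$. Performing the rewrite on a tree representation is then direct: a $\db$-step rearranges $O(1)$ constructors, while an $\ls$-step writes a fresh copy of a subterm of $\tm$ (of size at most $\size\tm$) into a fixed position, followed by a linear $\alpha$-renaming pass to respect the capture side-condition. Each step therefore costs polynomial in $k$ and $\size\tm$, and summing over at most $k$ steps gives a total running time polynomial in $k$ and $\size\tm$, as required.

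The heavy lifting is hidden inside the mechanisability hypothesis itself. Without the subterm property the intermediate $\tmtwo_i$ could blow up exponentially (for instance along the family $\{\tm_n\}$ of Section~2), destroying the size bound that makes each individual step cheap. Without the selection property, locating the next redex could itself be exponential, which is exactly the delicate point for the useful strategy we shall later instantiate as $\toes$: its redexes are constrained by a side-condition whose na\"ive check is exponential, so the bulk of the work in the concrete instance will lie in designing a polynomial selection procedure. Once both clauses of Definition~\ref{def:mech} are in place, the algorithm above and its analysis are immediate.
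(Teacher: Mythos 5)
Your proof is correct and follows essentially the same route as the paper's: the subterm property yields the no-size-explosion bound $\size{\tmtwo_i}\leq(i+1)\cdot\size\tm$, which makes each individual rewrite cheap, and the selection property makes locating the next redex polynomial in the (now controlled) size of the current term, so the total over $k$ steps is polynomial in $k$ and $\size\tm$. Your version merely spells out the per-step accounting (the $O(1)$ cost of a $\db$-step versus the copy of an initial subterm in an $\ls$-step) in more detail than the paper does.
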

\begin{proof}
  By the subterm property, implementing one step takes time polynomial
  (if not linear) in $\size\tm$. An immediate consequence of the
  subterm property is the \emph{no size explosion property}, \ie\ that
  $\size\tmtwo\leq(k+1)\cdot\size\tm$. By the selection property
  selecting the next redex takes time polynomial in $\size\tmtwo$,
  that by the no size explosion property is polynomial in $k$ and
  $\size\tm$. The composition of polynomials is again a polynomial,
  and so selecting the redex takes time polynomial in $k$ and
  $\size\tm$. Hence, the reduction can be implemented in polynomial
  time.
\end{proof}

In \cite{DBLP:conf/rta/AccattoliL12}, we proved that \emph{head
  reduction} and \emph{linear head reduction} form a high-level
implementation system and that linear head reduction is mechanisable,
even if we did not use such a terminology, nor were we aware of the
presented abstract scheme. In order to extend such a result to normal
forms we need to replace head reduction with a normalising strategy
(\ie\ a strategy reaching the $\beta$-normal form, if any).

One candidate for $\tostrat$ is the \lo\ strategy $\toblo$. Such a
choice is natural, as $\toblo$ is normalising, it produces standard
derivations, and it is an iteration of head reduction. What is left to
do, then, is to find a strategy $\toes$ for ES, which is both
mechanisable and a high-level implementation of
$\toblo$. Unfortunately, the linear \lo\ strategy, here noted $\tollo$
and first defined in \cite{non-standard-preprint}, is mechanisable but
the pair $(\toblo,\tollo)$ is not a high-level implementation system.

In general, mechanisable strategies are not hard to find. As we will
show in \refsect{standard}, the whole class of \emph{standard}
derivations for ES has the subterm property. In particular, the linear
strategy $\tollo$ --- which is standard --- enjoys all the other
properties \emph{but} for the syntactic bound property.

Such a problem will be solved by \lo\ \emph{useful} derivations, to be
introduced in \refsect{useful}, that will be shown to be both
mechanisable and a high-level implementation of $\toblo$. Useful
derivations avoid those substitution steps that only explicit the
normal form without contributing to explicit $\beta/\db$-redexes
(that, by the projection property, can be identified). \lo\ useful
derivations will have all the nice properties of \lo\ derivations and
moreover will stop on shared, minimal representations of normal forms,
solving the problem with linear \lo\ derivations.

Let us point out that our analysis would be vacuous without evidence
that useful normal forms are a reasonable representation of
$\l$-terms. In other words, we must be sure that ES do not hide (too
much of) the inherent difficulty of reducing $\lambda$-terms under the
carpet of sharing. In~\cite{DBLP:conf/rta/AccattoliL12}, we solved
this issue by providing an efficient algorithm for checking the
equality of any two \lsc\ terms --- thus in particular of useful normal
forms --- without computing their unfoldings (that otherwise would
reintroduce an exponential blow-up). Some further discussion can be
found in Sect.~\ref{sect:properties}.

\section{Useful Derivations}
\label{sect:useful}
In this section we define a constrained, optimised notion of
reduction, that will be the key to the High-Level Implementation
Theorem. The idea is that an optimised step takes place only if it
somehow contributes to explicit a $\beta/\db$-redex. Let an
\deff{applicative context} be defined by $\apctx \grameq
\ctxp{\sctx\tm}$, where $\ctx$ and $\sctx$ are a shallow and a
substitution context, respectively (note that applicative contexts are
\emph{not} made out of applications only; for instance $\tm\l
\var.(\ctxhole \esub\vartwo\tmtwo \tmthree)$ is an applicative
context). Then:
\begin{definition}[Useful/Useless Steps and Derivations]
  A \deff{useful} step is either a $\db$-step or a $\ls$-step
  $\ctxp{\var}\tols \ctxp{\tmthree}$ (in compact form)
  s.t. $\relunf{\tmthree}{\ctx}$:
  \begin{varenumerate}
  \item 
    either contains a $\beta$-redex,
  \item 
    or is an abstraction and $\ctx$ is an applicative context.
  \end{varenumerate}
  A \deff{useless} step is a $\ls$-step that is not useful. A
  \deff{useful derivation} (resp. \deff{useless derivation}) is a
  derivation whose steps are useful (resp. useless).
\end{definition}
Let us give some examples. The steps 
\begin{align*}
  (\ap\tm\var)\esub\var{\ap{(\l\vartwo.\vartwo)}\tmtwo}
  &\tols
  (\ap\tm{(\ap{(\l\vartwo.\vartwo)}\tmtwo)})\esub\var{\ap{(\l\vartwo.\vartwo)}\tmtwo};\\
  (\ap\var\tm)\esub\var{\l\vartwo.\vartwo}
  &\tols
  (\ap{(\l\vartwo.\vartwo)}\tm)\esub\var{\l\vartwo.\vartwo};
\end{align*}
are useful because they move or create a $\beta/\db$-redex (first and
second case of the definition, respectively) while
$$
(\l\var.\vartwo)\esub\vartwo{\ap\varthree\varthree}\tols(\l\var.(\ap\varthree\varthree))\esub\vartwo{\ap\varthree\varthree}
$$
is useless. However, useful steps are subtler, for instance 
$$
(\ap\tm\var)\esub\var{\ap\varthree\varthree}\esub\varthree{\l\vartwo.\vartwo}
\tols
(\ap\tm{(\ap\varthree\varthree)})\esub\var{\ap\varthree\varthree}\esub\varthree{\l\vartwo.\vartwo}
$$
is useful also if it does not move or create $\beta/\db$-redexes,
because it does so up to relative unfolding, \ie\
$\relunf{(\ap\varthree\varthree)}{\ctxhole\esub\varthree{\l\vartwo.\vartwo}}=\ap{(\l\vartwo.\vartwo)}{\l\vartwo.\vartwo}$
that is a $\beta/\db$-redex.

Note that \opt\ steps concern future creations of $\beta$-redexes and
yet their definition circumvent the explicit use of residuals, relying
on relative unfoldings only.

\paragraph{Leftmost-Outermost Useful Derivations.} 
The notion of small-step evaluation that we will use to implement \lo\
$\beta$-reduction is the one of \lo\ useful derivation. We need some
preliminary definitions.

Let $\redex$ be a redex. Its \deff{position} is defined as follows:
\begin{varenumerate}
\item 
  If $\redex$ is a $\db$-redex $\ctxp{\sctxp{\l \var.\tm} \tmtwo}
  \todb \ctxp{ \sctxp{\tm\esub{\var}{\tmtwo}}}$ then its position is
  given by the context $\ctx$ surrounding the changing expression;
  $\beta$-redexes are treated as $\db$-redexes.
\item 
  If $\redex$ is a $\ls$-redex, expressed in compact form
  $\ctxp{\var} \tols \ctxp{\tmtwo}$, then its position is the context
  $\ctx$ surrounding the variable occurrence to substitute.
\end{varenumerate}
The left-to-right outside-in order on redexes is expressed as an order
on positions, \ie\ contexts. Let us warn the reader about a possible
source of confusion. The \emph{left-to-right outside-in} order in the
next definition is sometimes simply called \emph{left-to-right} (or
simply \emph{left}) order. The former terminology is used when terms
are seen as trees (where the left-to-right and the outside-in orders
are disjoint), while the latter terminology is used when terms are
seen as strings (where the left-to-right is a total order). While the
study of standardisation for the \lsc~\cite{non-standard-preprint}
uses the string approach (and thus only talks about the
\emph{left-to-right} order and the \emph{leftmost} redex), here some
of the proofs require
a delicate analysis of the relative positions of redexes and so we
prefer the more informative tree approach and define the order
formally.

\begin{definition}
  The following definitions are given with respect to general (not
  necessarily shallow) contexts, even if apart from the next section
  we will use them only for shallow contexts.
  \begin{varenumerate}
  \item 
    The \deff{outside-in order}: 
    \begin{varenumerate}
    \item 
      \emph{Root}: $\ctxhole\outin\gctx$ for every context
      $\gctx\neq\ctxhole$;
    \item 
      \emph{Contextual closure}: If $\gctx\outin\gctxtwo$ then
      $\gctxthreep\gctx\outin\gctxthreep\gctxtwo$ for any context
      $\gctxthree$.
    \end{varenumerate}
    Note that $\outin$ can be seen as the prefix relation $\prefix$ on
      contexts.
  \item 
    The \deff{left-to-right order}: $\gctx\leftright\gctxtwo$ is defined by:
    \begin{varenumerate}
    \item 
      \emph{Application}: If $\gctx\prefix \tm$ and
      $\gctxtwo\prefix\tmtwo$ then
      $\ap\gctx\tmtwo\leftright\ap\tm\gctxtwo$;
    \item 
      \emph{Substitution}: If $\gctx\prefix \tm$ and
      $\gctxtwo\prefix\tmtwo$ then
      $\gctx\esub\var\tmtwo\leftright\tm\esub\var\gctxtwo$;
    \item 
      \emph{Contextual closure}: If $\gctx\leftright\gctxtwo$ then
      $\gctxthreep\gctx\leftright\gctxthreep\gctxtwo$ for any context
      $\gctxthree$.
    \end{varenumerate}  
  \item 
    The \deff{left-to-right outside-in order}:
    $\gctx\leftout\gctxtwo$ if $\gctx\outin\gctxtwo$ or
    $\gctx\leftright\gctxtwo$:
  \end{varenumerate}
\end{definition}

The following are a few examples. For every context $\gctx$, it holds
that $\ctxhole\not\leftright\gctx$. Moreover,
\begin{align*}
  \ap{(\l\var.\ctxhole)}\tm
  &\outin 
  \ap{(\l\var.(\ap{\ctxhole\esub\vartwo\tmtwo)}\tmthree)}\tm;\\
  \ap{(\ap\ctxhole\tm)}\tmtwo
  &\leftright
  \ap{(\ap\tmthree\tm)}\ctxhole;\\	
  \ap{\tm\esub\var\ctxhole}\tmtwo
  &\leftright
  \ap{\tm\esub\var\tmthree}\ctxhole.
\end{align*}
The next lemma guarantees that we defined a total order.
\begin{lemma}[Totality of $\leftout$]\label{l:lefttor-basic} 
  If $\gctx\prefix\tm$ and $\gctxtwo\prefix\tm$ then either
  $\gctx\leftout\gctxtwo$ or $\gctxtwo\leftout\gctx$ or
  $\gctx=\gctxtwo$.
\end{lemma}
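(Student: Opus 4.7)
The plan is a straightforward induction on the structure of $\tm$, mirrored by a case analysis on how $\gctx$ and $\gctxtwo$ sit inside $\tm$. The point of the lemma is that two prefixes of the same term can always be compared, so the key observation is that each of $\gctx, \gctxtwo$ is either the empty context $\ctxhole$ or else is obtained by placing a strictly smaller context under the same unique head constructor of $\tm$ (abstraction, application, or ES); in the former case $\ctxhole \outin \gctx'$ for any nontrivial $\gctx'$, yielding $\leftout$ immediately.

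First I would dispatch the base case $\tm = \var$: both contexts must be $\ctxhole$, hence equal. For the inductive step I distinguish the three nontrivial shapes of $\tm$. If $\tm = \l \var. \tm'$, then modulo the empty-context case each of $\gctx, \gctxtwo$ has the form $\l\var.\gctx'$ with $\gctx' \prefix \tm'$, so the induction hypothesis on $\tm'$ plus contextual closure of $\leftout$ concludes. The genuinely interesting cases are $\tm = \ap{\tm_1}{\tm_2}$ and $\tm = \tm_1\esub{\var}{\tm_2}$, where $\gctx$ and $\gctxtwo$ may sit in different branches.

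For $\tm = \ap{\tm_1}{\tm_2}$, there are three sub-shapes for a nontrivial prefix: $\ap{\gctx'}{\tm_2}$ with $\gctx' \prefix \tm_1$, or $\ap{\tm_1}{\gctx'}$ with $\gctx' \prefix \tm_2$. If $\gctx$ and $\gctxtwo$ land on the same branch, the induction hypothesis (applied to $\tm_1$ or $\tm_2$) plus contextual closure gives the comparison. If they land on different branches, say $\gctx = \ap{\gctx_1}{\tm_2}$ with $\gctx_1 \prefix \tm_1$ and $\gctxtwo = \ap{\tm_1}{\gctx_2}$ with $\gctx_2 \prefix \tm_2$, then the \textbf{Application} clause in the definition of $\leftright$ directly yields $\gctx \leftright \gctxtwo$, hence $\gctx \leftout \gctxtwo$. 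The case $\tm = \tm_1 \esub\var {\tm_2}$ is fully symmetric, using the \textbf{Substitution} clause.

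I do not expect any serious obstacle: the definition of $\leftright$ has been tailored precisely so that the cross-branch cases are immediate, and the same-branch cases reduce to the induction hypothesis. The only bookkeeping point is to make sure that when $\gctx = \ctxhole$ and $\gctxtwo \neq \ctxhole$ (or vice versa), one correctly invokes the \emph{Root} clause of $\outin$ (rather than trying to use $\leftright$, which by definition never has $\ctxhole$ on either side). With this caveat the induction goes through cleanly.
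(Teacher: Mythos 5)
Your proposal is correct and follows essentially the same route as the paper's own proof: induction on $\tm$, dispatching the cases where one context is $\ctxhole$ via the \emph{Root} clause, reducing same-branch cases to the induction hypothesis with contextual closure, and settling cross-branch cases in the application and substitution shapes directly by the corresponding clauses of $\leftright$. The only cosmetic slip is announcing ``three sub-shapes'' for a nontrivial prefix of an application while listing two, which does not affect the argument.
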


\begin{proof}
  By induction on $\tm$. In general we can avoid to analyse the cases
  where $\gctx$ or $\gctxtwo$ is empty, because
  $\ctxhole\outin\gctxthree$ (and so $\ctxhole\leftout\gctxthree$) for
  any context $\gctxthree$ and there is no context
  $\gctxthree\neq\ctxhole$ s.t. $\gctxthree\leftout\ctxhole$. Cases:
  \begin{varenumerate}
  \item 
    \emph{Variable}. Both $\gctx$ and $\gctxtwo$ are the empty
    context, and so $\gctx=\gctxtwo$.
  \item 
    \emph{Abstraction $\l\var.\tmtwo$}. It follows from the \ih.
  \item 
    \emph{Application $\tmtwo\tmthree$}. If both contexts have
    their hole in $\tmtwo$ or both in $\tmthree$ we conclude by the
    \ih. Otherwise $\gctx$ has its hole in, say, $\tmtwo$ and
    $\gctxtwo$ in $\tmthree$, and then $\gctx\leftright\gctxtwo$, \ie\
    $\gctx\leftout\gctxtwo$.
  \item 
    \emph{Substitution $\tmtwo\esub\var\tmthree$}. Exactly as the
    previous case.
  \end{varenumerate}
\end{proof}

The orders above can be extended from contexts to redexes, in the
expected way, \eg\ for $\leftout$ given two redexes $\redex$ and
$\redextwo$ of positions $\ctx$ and $\ctxtwo$ we write
$\redex\leftout\redextwo$ if $\ctx\leftout\ctxtwo$. Now, we can define
the notions of derivations we are interested in.

\begin{definition}[Leftmost-Outermost (Useful) Redex]
  Let $\tm$ be a term and $\redex$ a redex of $\tm$. $\redex$ is the
  \deff{leftmost-outermost} (resp. \deff{leftmost-outermost useful},
  \lou\ for short) redex of $\tm$ if $\redex\leftout\redextwo$ for
  every other redex (resp. useful redex) $\redextwo$ of $\tm$. We
  write $\tm\tolo\tmtwo$ (resp. $\tm\tolou\tmtwo$) if a step reduces
  the \lo\ (resp. \lou) redex.
\end{definition}

We need to ensure that \lou\ derivations are mechanisable and form a
high-level implementation system when paired with \lo\ derivations. In
particular, we will show:
\begin{varenumerate}
\item 
  the \emph{subterm} and \emph{trace properties}, by first showing
  that they hold for every standard derivation, in \refsect{standard},
  and then showing that \lou\ derivations are standard, in
  \refsect{subterm-via-standard};
\item 
  the \emph{normal form} and \emph{projection properties}, by a
  careful study of unfoldings and \lo/\lou\ derivations, in
  \refsect{projection};
\item 
  the \emph{syntactic bound property}, passing through the
  abstract notion of \emph{nested derivation}, in \refsect{nested};
\item 
  the \emph{selection property}, by exhibiting a polynomial
  algorithm to test whether a redex is useful or not, in
  \refsect{algorithm}.
\end{varenumerate}

\section{Standard Derivations}
\label{sect:standard}
We need to show that \lou\ derivations have the subterm property. It
could be done directly. However, we will proceed in an abstract way,
by first showing that the subterm property is a property of standard
derivations for the LSC, and then showing (in
\refsect{subterm-via-standard}) that \lou\ derivations are
standard. The detour has the purpose of sheding a new light on the
notion of standard derivation, a classic concept in rewriting
theory. For the sake of readability, we use the concept of residual
without formally defining it (see \cite{non-standard-preprint} for
details).

\begin{definition}[Standard Derivation]
A derivation $\deriv:\redex_1;\ldots;\redex_n$ is \deff{standard} if
$\redex_i$ is not the residual of a redex $\redextwo\leftout\redex_j$
for every $i\in\set{2,\ldots,n}$ and $j<i$.
\end{definition}
The same definition where terms are ordinary $\l$-terms gives the
ordinary notion of standard derivation.

Note that any single reduction step is standard. Then,
notice that standard derivations select redexes in a left-to-right and
outside-in way, but they are not necessarily \lo. For instance, the
derivation
$$(\ap{(\l\var.\vartwo)}\vartwo)\esub\vartwo\varthree
\tols
(\ap{(\l\var.\varthree)}\vartwo)\esub\vartwo\varthree
\tols
(\ap{(\l\var.\varthree)}\varthree)\esub\vartwo\varthree$$
is standard even if the \lo\ redex (\ie\ the $\db$-redex on $\var$) is not reduced. The extension of the derivation with $(\ap{(\l\var.\varthree)}\varthree)\esub\vartwo\varthree\todb \varthree\esub\var\varthree\esub\vartwo\varthree$ is not standard. Last, note that the position of a $\ls$-step is given by the substituted occurrence and not by the ES, that is $(\ap\var\vartwo)\esub\var\tmtwo\esub\vartwo\tm
\tols
(\ap\var\tm)\esub\var\tmtwo\esub\vartwo\tm
\tols
(\ap\tmtwo\tm)\esub\var\tmtwo\esub\vartwo\tm$ is not standard.

In \cite{non-standard-preprint} it is showed that in the full
\lsc\ standard derivations are complete, \ie\ that whenever
$\tm\to^*\tmtwo$ there is a standard derivation from $\tm$ to
$\tmtwo$. The shallow fragment does not enjoy such a standardisation
theorem, as the residuals of a shallow redex need not be shallow. This
fact however does not clash with the technical treatment in this
paper.  The shallow restriction is indeed compatible with
standardisation in the sense that:
\begin{varenumerate}
\item 
  \emph{The linear \lo\ strategy is shallow}: if the initial term is a
  $\l$-term then every redex reduced by the linear \lo\ strategy is
  shallow (every non-shallow redex $\redex$ is contained in a
  substitution, and every substitution is involved in an outer
  redex $\redextwo$);
\item 
  \emph{$\leftout$-ordered shallow derivations are standard}: any
  strategy picking shallow redexes in a left-to-right and outside-in
  fashion does produce standard derivations (it follows from the easy
  fact that a shallow redex $\redex$ cannot turn a non-shallow redex
  $\redextwo$ s.t. $\redextwo\leftout\redex$ into a shallow redex).
\end{varenumerate}
Moreover, the only redex swaps we will consider (\reflemma{useless-pers}) will produce shallow residuals.

We are now going to show a fundamental property of standard
derivations. The \emph{subterm property} states that at any point of a
derivation $\deriv:\tm\to^*\tmtwo$ only sub-terms of the initial term
$\tm$ are duplicated. It immediately implies that any rewriting step
can be implemented in time polynomial in the size $\size\tm$ of
$\tm$. A first consequence is the fact that $\size\tmtwo$ is linear in
the size of the starting term and the number of steps, that we call
the \emph{no size explosion} property.

These properties are based on a technical lemma relying on the notions
of box context and box subterm, where a \emph{box} is the argument of
an application or the content of an explicit substitution,
corresponding to explicit boxes for promotions in the
\pns\ representation of $\l$-terms with ES.

\begin{definition}[Box Context, Box Subterm]
  Let $\tm$ be a term. \deff{Box contexts} (that are not necessarily
  shallow) are defined by the following grammar, where $\gctx$ is an
  generic context:
  \begin{center}
    $\begin{array}{rcl} 
      \bctx &\grameq& \ap\tm\ctxhole\midd
      \tm\esub\var\ctxhole\midd \gctxp\bctx.
    \end{array}$
  \end{center}
  A \deff{box subterm} of $\tm$ is a term $\tmtwo$
  s.t. $\tm=\bctxp\tmtwo$ for some box context $\bctx$.
\end{definition}

We are now ready for the lemma stating the fundamental invariant of standard derivations.

\begin{lemma}[Standard Derivations Preserve Boxes on Their Right]\label{l:subterm-aux}
  Let $\deriv:\tm_0\to^k\tm_k\to\tm_{k+1}$ be a standard derivation
  and let $\ctx$ be the position of the last contracted redex, $k\geq
  0$, and $\bctx\prefix\tm_{k+1}$ be a box context
  s.t. $\ctx\leftout\bctx$. Then the box subterm $\tmtwo$ identified
  by $\bctx$ (\ie\ s.t. $\tm_{k+1}=\bctxp{\tmtwo}$) is a box subterm
  of $\tm_0$.
\end{lemma}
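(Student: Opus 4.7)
The proof proceeds by induction on $k$. The key technical tool I would extract first is \emph{single-step reflection}: for any single step at position $\pi$ and any box $\bctx$ in the reduct with $\pi \leftout \bctx$, there is a position $\bctx^-$ in the source with $\pi \leftout \bctx^-$ carrying the same box subterm. Granted this, one chains reflections along $\deriv$ and lets the induction hypothesis take care of all but the last step.

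For the base case $k = 0$, the statement is exactly single-step reflection for $\tm_0 \to \tm_1$. I would split on whether $\ctx \leftright \bctx$, in which case the step does not touch the subtree containing $\bctx$ and $\bctx^- = \bctx$ works, or $\ctx \outin \bctx$, so $\bctx$ lies inside the region rewritten by the step. In the latter case a direct enumeration of the box subterms of the rewritten subterm---$\sctxp{\tm\esub\var\tmtwo}$ for a $\db$-step, or $\ctxp{\tmtwo}\esub\var\tmtwo$ for a $\ls$-step, in the notation of the rules---matches each of them with a box subterm already present in the unreduced subterm and thus in $\tm_0$. The cases are: for $\db$, the argument moves from the outer application to the content of a fresh explicit substitution while the sub-boxes of body, argument, and each ES content in $\sctx$ are preserved; for $\ls$, the boxes sitting inside the fresh copy of $\tmtwo$ are already boxes of the $\tmtwo$ retained in the explicit substitution.

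For the inductive step, factor $\deriv$ as $\tm_0 \to^{k-1} \tm_{k-1} \to \tm_k \to \tm_{k+1}$ with the last two step positions denoted $\ctx'$ and $\ctx$. Single-step reflection applied to the final step yields a position $\bctx^- \prefix \tm_k$ with $\ctx \leftout \bctx^-$ carrying the same box subterm as $\bctx$. To conclude by the induction hypothesis on the standard prefix $\tm_0 \to^{k-1} \tm_{k-1} \to \tm_k$, I need $\ctx' \leftout \bctx^-$; by transitivity of $\leftout$---a total order on positions of a common term by Lemma~\ref{l:lefttor-basic}---it suffices to establish the auxiliary inequality $\ctx' \leftout \ctx$ in $\tm_k$.

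The main obstacle is precisely this auxiliary inequality, and it is the only point where standardness genuinely enters. I would argue by cases on how the redex at $\ctx$ appears in $\tm_k$. If it was \emph{created} by the step at $\ctx'$, an inspection of the $\db$ and $\ls$ rules shows that newly created redexes live inside the region modified by the step, giving $\ctx' \outin \ctx$. Otherwise the redex at $\ctx$ is the residual of some redex at a position $\pi$ in $\tm_{k-1}$; standardness forbids $\pi \leftout \ctx'$, so by totality $\ctx' \leftout \pi$. It then remains to transport this inequality through the step at $\ctx'$: positions strictly right of $\ctx'$ are untouched by the step and their residuals coincide with themselves, while positions inside the subtree at $\ctx'$ map to residuals still inside the same subtree, since $\db$ and $\ls$ only modify that subterm. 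Either way $\ctx' \leftout \ctx$, closing the chain $\ctx' \leftout \ctx \leftout \bctx^-$ and hence the induction.
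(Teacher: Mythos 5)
Your skeleton (single-step reflection of boxes plus an auxiliary ordering between the last two contracted positions) is close in spirit to the paper's induction, and both the reflection step and the residual half of your case analysis are sound. The gap is in the creation case: the claim that ``newly created redexes live inside the region modified by the step, giving $\ctx'\outin\ctx$'' is false. Both rules of the \lsc\ create redexes \emph{upwards}, strictly outside the contracted position. For $\db$: $(\lambda x.\lambda y.t)\,s\,u\todb((\lambda y.t)\esub{x}{s})\,u$, where the created $\db$-redex sits at the outer application node while the contracted one sat at its left subterm. For $\ls$: $(x\,u)\esub{x}{\lambda y.t}\tols((\lambda y.t)\,u)\esub{x}{\lambda y.t}$, where the contracted redex is the occurrence of $x$ and the created $\db$-redex is at the surrounding application. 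In both situations $\ctx\outin\ctx'$ strictly, hence $\ctx\leftout\ctx'$; since $\leftout$ is a strict total order on positions of a common term (\reflemma{lefttor-basic}), your auxiliary inequality $\ctx'\leftout\ctx$ cannot hold there. Consequently reflection hands you a $\bctx^-$ with $\ctx\leftout\bctx^-$ but not necessarily $\ctx'\leftout\bctx^-$, and the induction hypothesis is not applicable in the way you invoke it.

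This upward-creation configuration is precisely the delicate case of the paper's proof (its case ``$\ctxtwo_{k-1}$ is internal to $\ctxtwo_k$'', which can only arise through creation, by standardness), and it is not repaired by transitivity: one must inspect \emph{which} boxes of the reduct newly fall under $\ctx\leftout\bctx$ once the redex position has jumped outwards. Concretely, in the $\db$-creates-$\db$ and $\ls$-creates-$\db$ scenarios these are the argument of the outer application and the contents of the traversed substitution contexts; each of them corresponds to a box of $\tm_k$ lying to the \emph{right} of $\ctx'$ (so the induction hypothesis applies to it directly), while the remaining boxes of the contractum are boxes of the source of the step at $\ctx$ and are traced back through it. Your proof needs this explicit per-box analysis in place of the chain $\ctx'\leftout\ctx\leftout\bctx^-$; the rest of the argument (base case, the cases $\ctx'=\ctx$, $\ctx'\leftright\ctx$ and $\ctx'\outin\ctx$, and the use of standardness to exclude residuals of redexes to the left of $\ctx'$) stands.
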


\begin{proof}
	By induction on $k$. If $k=0$ the statement trivially holds. Otherwise, let us call $\redex_k$ the step $\tm_k\to\tm_{k+1}$ and consider the last contracted redex $\redex_{k-1}:\tm_{k-1}\to\tm_k$ of position $\ctxtwo_{k-1}$. By \ih\ the statement holds wrt box contexts $\bctx\prefix\tm_k$ s.t. $\ctxtwo_{k-1}\leftout\bctx$. The proof analyses the position $\ctxtwo_k$ of $\redex_k$ with respect to the position $\ctxtwo_{k-1}$ of $\redex_{k-1}$, often distinguishing between the left-to-right $\leftright$ and the outside-in $\outin$ suborders of $\leftout$. Cases:
	\begin{varenumerate}
		\item \emph{$\ctxtwo_k$ is equal to $\ctxtwo_{k-1}$}. Clearly if $\ctxtwo_{k-1}=\ctxtwo_k\leftright\bctx$ then the statement holds because of the \ih\ (reduction clearly does not affect boxes on the left of the hole of the position). We need to check the box contexts s.t. $\ctxtwo_k\outin\bctx$. Note that  $\redex_{k-1}$ cannot be a $\todb$ redex, because if $\tm_{k-1}=\ctxtwo_{k-1}\ctxholep{\sctxp{\l\var.\tmthree}\tmfour} \todb \ctxtwo_{k-1}\ctxholep{\sctxp{\tmthree\esub\var\tmfour}}=\tm_k$ then $\ctxtwo_{k-1}=\ctxtwo_k$ is not the position of any redex in $\tm_k$. Hence, $\redex_{k-1}$ is a $\tols$ step and there are two cases. If it substitutes a:
		\begin{varenumerate}
			\item \emph{Variable}, \ie\ the sequence of steps $\redex_{k-1};\redex_k$ is
			$$\tm_{k-1}=\ctxtwo_{k-1}\ctxholep\var\tols\ctxtwo_{k-1}\ctxholep\vartwo\tols\ctxtwo_{k-1}\ctxholep\tmthree=\tm_{k+1}$$
			Then all box subterms of $\tmthree$ come from box subterms that also appear on the left of $\ctxtwo_{k-1}$ in $\tm_k$ and so they are box subterms of $\tm_0$ by \ih.
			\item \emph{$\db$-redex}, \ie\ the sequence of steps $\redex_{k-1};\redex_k$ is
			$$\tm_{k-1}=\ctxtwo_{k-1}\ctxholep\var\tols\ctxtwo_{k-1}\ctxholep{\sctxp{\l\vartwo.\tmthree}\tmfour} \tols \ctxtwo_{k-1}\ctxholep{\sctxp{\tmthree\esub\vartwo\tmfour}}=\tm_{k+1}$$
			Then all box subterms of $\sctxp{\tmthree\esub\vartwo\tmfour}$ come from box subterms of $\sctxp{\l\vartwo.\tmthree}\tmfour$ and so they are box subterms of $\tm_0$ by \ih.
		\end{varenumerate}
		
		\item \emph{$\ctxtwo_{k-1}$ is internal to $\ctxtwo_k$}, \ie\ $\ctxtwo_k\outin\ctxtwo_{k-1}$ and $\ctxtwo_k\neq\ctxtwo_{k-1}$. This case is only possible if $\redex_k$ has been created \emph{upwards} by $\redex_{k-1}$, otherwise the derivation would not be $\leftout$-standard.	 There are only two possible cases of creations \emph{upwards}:
		\begin{varenumerate}
			\item $\db$ creates $\db$, \ie\ $\redex_{k-1}$ is 
			$$\tm_{k-1}=\ctxtwo_{k-1}\ctxholep{\sctxp{\l\vartwo.\sctxtwop{\l\varthree.\tmthree}}\tmfour}\todb\ctxtwo_{k-1}\ctxholep{\sctxp{\sctxtwop{\l\varthree.\tmthree}\esub\vartwo\tmfour}}=\tm_k$$
			and $\ctxtwo_{k-1}$ is applicative, that is $\ctxtwo_{k-1}=\ctxtwo_k\ctxholep{\sctxthreep\cdot\tmfive}$, so that $\redex_k$ is 
			$$\tm_k=\ctxtwo_k\ctxholep{\sctxthreep{\sctxp{\sctxtwop{\l\varthree.\tmthree}\esub\vartwo\tmfour}}\tmfive} \todb \ctxtwo_k\ctxholep{\sctxthreep{\sctxp{\sctxtwop{\tmthree\esub\varthree\tmfive}\esub\vartwo\tmfour}}}=\tm_{k+1}$$
			The box subterms of $\sctxthree$ and $\tmthree$ (included) are box subterms of $\tm_k$ whose box context $\bctx$ is $\ctxtwo_{k-1}\leftright\bctx$ and so they are box subterms of $\tm_0$ by \ih. The other box subterms of $\sctxthreep{\sctxp{\sctxtwop{\tmthree\esub\varthree\tmfive}\esub\vartwo\tmfour}}$ are instead box subterms of $\sctxp{\l\vartwo.\sctxtwop{\l\varthree.\tmthree}}\tmfour$ and so they are box subterms of $\tm_0$ by \ih.
			
			\item $\lssym$ creates $\db$, \ie\ $\redex_{k-1}$ is 
			$$\tm_{k-1}=\ctxtwo_{k-1}\ctxholep{\var}\todb\ctxtwo_{k-1}\ctxholep{\sctxp{\l\vartwo.\tmthree}}=\tm_k$$
			and $\ctxtwo_{k-1}$ is applicative, \ie\ $\ctxtwo_{k-1}=\ctxtwo_k\ctxholep{\sctxtwop\cdot\tmfour}$ so that $\redex_k$ is 
			$$\ctxtwo_k\ctxholep{\sctxtwop{\sctxp{\l\vartwo.\tmthree}}\tmfour} \todb \ctxtwo_k\ctxholep{\sctxtwop{\sctxp{\tmthree\esub\vartwo\tmfour}}}$$
			The box subterms of $\sctxtwo$ and $\tmfour$ (included) are box subterms of the ending term of $\redex_{k-1}$ whose box context $\bctx$ is $\ctxtwo_{k-1}\leftright\bctx$ and so they are box subterms of $\tm_0$ by \ih. The other box subterms of $\sctxtwop{\sctxp{\tmthree\esub\vartwo\tmfour}}$ are also box subterms of $\sctxp{\l\vartwo.\sctxtwop{\l\varthree.\tmthree}}\tmfour$ and so they are box subterms of $\tm_0$ by \ih.
		\end{varenumerate}

		\item \emph{$\ctxtwo_k$ is internal to $\ctxtwo_{k-1}$}, \ie\ $\ctxtwo_{k-1}\outin\ctxtwo_k$ and $\ctxtwo_k\neq\ctxtwo_{k-1}$. Cases of $\redex_{k-1}$:
		\begin{varenumerate}
			\item \label{p:subterm}\emph{$\db$-step}, \ie\ $\redex_{k-1}$ is 
			$$\tm_{k-1}= \ctxtwo_{k-1}\ctxholep{\sctxp{\l\var.\tmthree}\tmfour} \todb \ctxtwo_{k-1}\ctxholep{\sctxp{\tmthree\esub\var\tmfour}}=\tm_k$$
			Then the hole of $\ctxtwo_k$ is necessarily inside $\tmthree$. Box subterms identified by a box context $\bctx$ s.t. $\ctxtwo_k\leftright\bctx$ in $\tm_{k+1}$ are also box subterms of $\tm_k$, and so we conclude applying the \ih. For box subterms identified by a $\bctx$ of $\tm_{k+1}$ s.t. $\ctxtwo_k\outin\bctx$ we have to analyse $\redex_k$. Suppose that $\redex_k$ is a:
			\begin{varitemize}
				\item \emph{$\db$-step}. Note that in a root $\db$-step (\ie\ at top-level) all the box subterms of the reduct are box subterms of the redex. In this case the redex is contained in $\tmthree$ and so by \ih\ all such box subterms are box subterms of $\tm_0$. 
				
				\item \emph{$\lssym$-step}, \ie\ $\redex_k$ has the form $\tm_k=\ctxtwo_k\ctxholep{\var} \tols \ctxtwo_k\ctxholep{\tmfive}=\tm_{k+1}$. In $\tm_k$ $\tmfive$ is identified by a box context $\bctx$ s.t. $\ctxtwo_k\leftright\bctx$. From $\ctxtwo_{k-1}\leftout\ctxtwo_k$ we obtain $\ctxtwo_{k-1}\leftright\bctx$ and so all box subterms of $\tmfive$ are box subterms of $\tm_0$ by \ih.
			\end{varitemize}

			\item $\lssym$-step: $\redex_{k-1}$ is $\tm_{k-1}=\ctxtwo_{k-1}\ctxholep{\var} \todb \ctxtwo_{k-1}\ctxholep{\tmfive}=\tm_k$. It is analogous to the $\db$-case: $\redex_k$ takes place inside $\tmfive$, whose box subterms are box subterms of $\tm_0$, by \ih. If $\redex_k$ is a $\db$-redex then it only rearranges constructors in $\tmfive$ without changing box subterms, otherwise it substitutes something coming from a substitution that is on the left of $\ctxtwo_{k-1}$ and so whose box subterms are box subterms of $\tm_0$ by \ih.
		\end{varenumerate}

			\item \emph{$\ctxtwo_k$ is on the left of $\ctxtwo_{k-1}$}, \ie\ $\ctxtwo_{k-1}\leftright\ctxtwo_k$. So for $\bctx$ s.t. $\ctxtwo_k\leftright\bctx$ we conclude immediately using the \ih, because there is a box context $\bctxtwo$ in $\tm_k$ s.t. $\ctxtwo_{k-1}\leftright\bctxtwo$ and identifying the same box subterm of $\bctx$. For $\bctx$ s.t. $\ctxtwo_k\outin\bctx$ we reason as in case \ref{p:subterm}.		
	\end{varenumerate}
\end{proof}

From the invariant, one easily obtains the subterm property, that in
turn implies the no size explosion and the trace properties.

\begin{corollary}
	\label{coro:subterm}
	Let $\deriv:\tm\to^k\tmtwo$ be a standard derivation.
	\begin{varenumerate}
		\item \label{p:subterm-1}\emph{Subterm}: every $\tols$-step in $\deriv$ duplicates a subterm of $\tm$.

		\item \label{p:subterm-2}\emph{No Size Explosion}: $\size{\tmtwo}\leq (k+1)\cdot\size{\tm}$.
		
		\item \label{p:subterm-3}\emph{Trace}: if $\tm$ is an ordinary $\l$-term then $\esmeas\tmtwo=\sizedb\deriv$.
	\end{varenumerate}
\end{corollary}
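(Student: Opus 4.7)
The plan is to derive all three points from \reflemma{subterm-aux}: Subterm (\refpoint{subterm-1}) directly, and then No Size Explosion (\refpoint{subterm-2}) and Trace (\refpoint{subterm-3}) by straightforward inductions on the length $k$ of $\deriv$ that use Subterm to control the effect of each $\tols$-step.

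For Subterm, I would fix an arbitrary $\tols$-step in $\deriv$, say the $(i{+}1)$-th, and apply \reflemma{subterm-aux} to the prefix of $\deriv$ ending at that step (which is again standard, since prefixes of standard derivations are standard). Writing the step in compact form as $\tm_i = \ctxtwop{\ctxp{\var}\esub{\var}{s}} \tols \ctxtwop{\ctxp{s}\esub{\var}{s}} = \tm_{i+1}$, the redex position is $\ctxthree = \ctxtwop{\ctx\esub{\var}{s}}$ and the duplicated subterm $s$ sits inside $\tm_{i+1}$ at the box context $\bctx = \ctxtwop{\ctxp{s}\esub{\var}{\ctxhole}}$. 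The substitution clause of $\leftright$, applied with $\ctx\prefix\ctxp{s}$ and $\ctxhole\prefix s$, delivers $\ctx\esub{\var}{s} \leftright \ctxp{s}\esub{\var}{\ctxhole}$; closing under the outer $\ctxtwo$ upgrades this to $\ctxthree \leftout \bctx$. This is precisely the hypothesis the lemma needs, and it concludes that $s$ is a box subterm of $\tm_0 = \tm$, hence a subterm of $\tm$.

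For No Size Explosion, induct on $k$: the base case is immediate, and in the inductive step a $\db$-step does not enlarge the term, while a $\tols$-step $\tmthree \tols \tmtwo$ satisfies $\size\tmtwo - \size\tmthree = \size s - 1 \leq \size\tm$ by Subterm, giving $\size\tmtwo \leq (k+1)\size\tm + \size\tm = (k+2)\size\tm$. For Trace, again induct on $k$: $\esmeas\tm = 0$ because $\tm$ is an ordinary $\l$-term, each $\db$-step creates exactly one new ES, and each $\tols$-step duplicates a subterm of $\tm$ that is itself ordinary (and hence has no ES), so $\esmeas$ is unaffected by $\tols$-steps and keeps matching $\sizedb\deriv$ step by step. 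The only delicate point in the whole argument is the $\leftout$ comparison between $\ctxthree$ and $\bctx$: one must view $\bctx$ as a prefix of $\tm_{i+1}$ (rather than of $\tm_i$) so that the substitution clause of $\leftright$ applies cleanly; once that is set up correctly, everything else is routine.
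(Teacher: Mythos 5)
Your proof is correct and follows essentially the same route as the paper: all three points are derived from \reflemma{subterm-aux}, applied to the standard prefix ending at the $\tols$-step under consideration, with No Size Explosion and Trace then following by routine inductions. The only difference is that you spell out the choice of box context $\ctxtwop{\ctxp{s}\esub{\var}{\ctxhole}}$ and the verification that the redex position is $\leftout$-below it via the substitution clause of $\leftright$ --- a detail the paper leaves implicit --- and this check is carried out correctly.
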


The subterm property of standard derivations is specific to small-step evaluation at a distance, and it is the crucial half of the notion of mechanisable strategy. It allows to see the standardisation theorem as the unveiling of a \emph{very} abstract machine, hidden inside the calculus itself.

\begin{proof}
	\begin{varenumerate}
		\item By induction on $k$. If $k=0$ the statement is evidently true. Otherwise, by \ih\ for every $\deriv:\tm\to^{k-1}\tmthree$ all its $\tols$-steps duplicated subterms of $\tm$. If the next step is a $\db$-step we conclude, otherwise by \reflemma{subterm-aux} it duplicates a subterm of $\tmtwo$ which is a box subterm, and so a subterm, of $\tm$.
		
		\item It follows immediately from the subterm property.
		
		\item If $\tm$ is an ordinary $\l$-term then by the subterm property only ordinary $\l$-terms are duplicated. In particular, no explicit substitution constructor is ever duplicated by $\lssym$-steps in $\deriv$: if $\tmthree\tols\tmfour$ is a step of $\deriv$ then $\esmeas{\tmthree}=\esmeas{\tmfour}$. Every $\db$-step, instead, introduces a substitution, \ie\ $\esmeas\tmtwo=\sizedb\deriv$. 
		
	\end{varenumerate}
\end{proof}

Let us conclude the section with a further invariant of standard derivations. It is not needed for the invariance result, but sheds some light on the shallow subsystem under. Let a term be
\deff{shallow} if its substitutions do not contain substitutions. The
invariant is that if the initial term is a $\l$-term then standard shallow derivations involve only shallow terms. This fact is the only point of this section relying on the assumption that reduction is shallow (the standard hypothesis is also necessary, consider $(\l\var.\var) ((\l\vartwo.\vartwo)\varthree)\todb
(\l\var.\var)
(\vartwo\esub\vartwo\varthree)\todb\var\esub\var{\vartwo\esub\vartwo\varthree}$).

\begin{lemma}[Shallow Invariant]
Let $\tm$ be a $\l$-term and $\deriv:\tm\to^k\tmtwo$ be a standard derivation. Then $\tmtwo$ is a shallow term.
\end{lemma}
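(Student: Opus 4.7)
The plan is to proceed by induction on $k$, the length of the derivation. The base case $k = 0$ is immediate since $\tmtwo = \tm$ is a pure $\l$-term. For the inductive step, write $\deriv = \deriv'; \redex_k$ with $\deriv' : \tm \to^{k-1} \tmthree$; this prefix is itself standard, so by the induction hypothesis $\tmthree$ is shallow, and it suffices to verify that $\redex_k$ preserves shallowness.

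If $\redex_k$ is an $\ls$-step $\ctxtwop{\ctxthreep{\var}\esub{\var}{\tmseven}} \tols \ctxtwop{\ctxthreep{\tmseven}\esub{\var}{\tmseven}}$, the case is easy: $\tmseven$ is the body of an ES already present in $\tmthree$, so shallowness of $\tmthree$ gives that $\tmseven$ contains no ES (one could alternatively invoke the subterm property, Corollary~\ref{coro:subterm}, to see that $\tmseven$ is a subterm of the pure $\l$-term $\tm$). Since the hole of the shallow context $\ctxthree$ sits outside every ES, plugging the pure $\tmseven$ there cannot place an ES inside another, and the outer ES retains its unchanged, still-pure body $\tmseven$. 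Hence $\tmtwo$ is shallow.

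If $\redex_k$ is a $\db$-step $\ctxp{\ap{\sctxp{\l\var.\tmfive}}\tmsix} \todb \ctxp{\sctxp{\tmfive\esub{\var}{\tmsix}}}$ at position $\ctx$, the only freshly created ES has body $\tmsix$, so I must show that $\tmsix$ is pure. The plan here is to apply Lemma~\ref{l:subterm-aux} to the standard derivation $\deriv'$ with the box context $\bctx = \ctxp{\ap{\sctxp{\l\var.\tmfive}}\ctxhole}$ that identifies $\tmsix$ in $\tmthree$: once we know $\ctx_{k-1} \leftout \bctx$ (where $\ctx_{k-1}$ is the position of $\redex_{k-1}$), the lemma yields that $\tmsix$ is a box subterm---hence a subterm---of the pure term $\tm$, and is therefore itself pure.

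The main obstacle will be justifying the side condition $\ctx_{k-1} \leftout \bctx$. By totality of $\leftout$ (Lemma~\ref{l:lefttor-basic}), the only bad case is $\bctx \outin \ctx_{k-1}$ with $\bctx \neq \ctx_{k-1}$, i.e., $\redex_{k-1}$ occurring strictly inside $\tmsix$. But a step strictly inside $\tmsix$ modifies only the interior of $\tmsix$ and leaves the surrounding application at $\ctx$ untouched; hence if the outer $\db$-redex at $\ctx$ is present in $\tmthree$, it was already present as an ancestor $\redextwo$ in $\tm_{k-2}$. Since $\ctx \outin \ctx_{k-1}$ yields $\redextwo \leftout \redex_{k-1}$ and $\redex_k$ is the residual of $\redextwo$, the derivation $\deriv$ would fail to be standard---a contradiction. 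Therefore $\ctx_{k-1} \leftout \bctx$, the invocation of Lemma~\ref{l:subterm-aux} goes through, and the induction closes.
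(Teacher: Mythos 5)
Your proof follows the paper's argument essentially verbatim: induction on $k$, the $\ls$-case handled by shallowness of contexts plus purity of the substituted body, and the $\db$-case by applying Lemma~\ref{l:subterm-aux} to the box context of the argument, with standardness ruling out the previous redex sitting at or inside that argument (the paper merely asserts ``it is easily seen that the argument of the $\db$-step is on the right of the previous step''). One small imprecision: the failure of $\ctx_{k-1}\leftout\bctx$ also covers the cases $\bctx=\ctx_{k-1}$ and $\bctx\leftright\ctx_{k-1}$, not only $\bctx\outin\ctx_{k-1}$ strictly, but your ``the surrounding application was untouched, so $\redex_k$ has an ancestor to the left of $\redex_{k-1}$'' argument dispatches these identically, so the proof stands.
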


\begin{proof}
By induction on $k$. If $k=0$ the statement is evidently true. Otherwise, by \ih\ every explicit substitution in $\tmthree$, where $\deriv:\tm\to^{k-1}\tmthree$, contains a $\l$-term. If the next step is a:
			\begin{varenumerate}
				\item \emph{$\ls$-step}) By the subterm property the step duplicates a term without substitutions, and --- since reduction is shallow --- it does not put the duplicated term in a substitution. Therefore, for every substitution of $\tmtwo$ there is a substitution of $\tmthree$ with the same content. We then conclude applying the \ih.
				\item \emph{$\db$-step}) It is easily seen that the argument of the $\db$-step is on the right of the previous step, so that by \reflemma{subterm-aux} it contains a (box) subterm of $\tm$. Then, the substitution created by the $\db$-step contains a subterm of $\tm$, that is an ordinary $\l$-term by hypothesis. The step does not affect any other substitution, because reduction is shallow, and so we conclude.
			\end{varenumerate}
\end{proof}

\section{The Subterm and Trace Properties, via Standard Derivations}
\label{sect:subterm-via-standard}
While \lo\ derivations are evidently standard, a priori \lou\ derivations may not be standard, if the reduction of a useful redex $\redex$ could turn a useless redex  $\redextwo\leftout\redex$ into a useful redex. Luckily, this is not possible, \ie\ uselessness is stable by reduction of $\leftout$-majorants, as proved by the next lemma.

\begin{lemma}[Useless Persistence]
	\label{l:useless-pers}
	Let $\redex:\tm\tols\tmtwo$ be a useless redex and $\redextwo:\tm\to\tmthree$ be a useful redex s.t. $\redex\leftout\redextwo$. The unique residual $\redex'$ of $\redex$ after $\redextwo$ is shallow and useless.
\end{lemma}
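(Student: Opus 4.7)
The plan is to pin down the geometric relationship between $\redex$ and $\redextwo$ inside $\tm$. Write $\redex$ in its compact form $\ctxp{\var}\tols\ctxp{a}$, so that $\ctx$ ends at a single occurrence of the variable $\var$ and $a$ is the content of the binding substitution. Because the subterm at $\ctx$'s hole is just $\var$, there is no room for another redex to be strictly outside-in nested below $\ctx$: denoting by $\ctxtwo$ the position of $\redextwo$, the relation $\ctx\outin\ctxtwo$ would force a non-trivial context to sit inside the variable $\var$, which is absurd. First I would use this to rule out $\redex\outin\redextwo$, and thus reduce the hypothesis $\redex\leftout\redextwo$ to $\redex\leftright\redextwo$.

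Next, I would analyze $\ctx\leftright\ctxtwo$ by an easy induction on its derivation. The branching occurs either at an application or at a substitution constructor. The substitution case is excluded because it would place $\ctxtwo$'s hole inside the content of a substitution, making $\redextwo$ non-shallow, whereas all redexes under consideration are shallow. So the branching is at an application $\ap{u_1}{u_2}$, with $\ctx$'s hole in $u_1$ and $\ctxtwo$'s hole in $u_2$. Consequently $\redextwo$ is entirely confined to $u_2$: its reduction only modifies $u_2$, leaving $u_1$ and everything above the application untouched. The residual $\redex'$ therefore has its position inside $u_1$; in particular it is shallow.

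It remains to verify uselessness of $\redex'$. Let $\ctxthree$ be the new context of $\redex'$, obtained from $\ctx$ by replacing $u_2$ with its reduct $u'_2$. I claim that $\relunf{a}{\ctxthree}=\relunf{a}{\ctx}$. By the definition of relative unfolding, this quantity depends only on the contents of the explicit substitutions of $\ctx$ whose body contains the hole. Since the path from the root to the hole of $\ctx$ descends through the branching application and into $u_1$, all such substitutions lie either above the branching application or inside $u_1$, so their contents are disjoint from $u_2$ and unaffected by $\redextwo$. In particular the binding substitution $\esub{\var}{a}$ is one of them, so $a$ itself is unchanged. A parallel structural argument shows that applicativeness of the context is preserved: if $\ctx = \gctxp{\sctx t}$ is applicative, then either the enclosing application sits entirely inside $u_1$ (and is untouched) or it is precisely the branching application, in which case $\ctxthree$ is still applicative with $u'_2$ in place of $u_2$; and if $\ctx$ is not applicative, neither is $\ctxthree$, since the structure between the hole and its potential enclosing application is unchanged. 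Both clauses of the definition of useful step therefore deliver the same verdict for $\redex'$ as for $\redex$, and $\redex'$ is useless.

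The main obstacle is precisely the bookkeeping in this last step: one must make sure that neither a substitution relevant to $\relunf{a}{\ctx}$ nor an enclosing application witnessing applicativeness of $\ctx$ can be affected by $\redextwo$. Both facts rest on the key feature of shallow contexts, namely that they cannot descend into the content of an explicit substitution, which cleanly isolates the two redexes on opposite sides of the branching application.
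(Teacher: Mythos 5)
Your proof is correct, but it is organised quite differently from the one in the paper. The paper first disposes of the case where $\redextwo$ is a $\ls$-step (arguing that such steps can change neither relative unfoldings nor the applicative nature of shallow positions), and then, for $\db$-steps, proceeds by induction on the external context of $\redex$'s compact form; the crux there is the base case, where the one dangerous interaction --- a $\db$-step erasing a $\l$ that shields $\redex$'s hole from an enclosing application, thereby making a non-applicative position applicative --- is shown to require $\redextwo\leftout\redex$, contradicting the hypothesis. You instead avoid the induction and the $\db$/$\ls$ split altogether: ruling out $\redex\outin\redextwo$ (the hole of $\redex$'s position contains a bare variable), you reduce to $\redex\leftright\redextwo$, use shallowness to force the branching constructor to be an application $\ap{u_1}{u_2}$ with $\redex$ on the left and $\redextwo$ confined to $u_2$, and then observe that everything the usefulness test depends on --- the substituted term $a$, the contents of the explicit substitutions along the path to the hole, and the constructor sequence of that path (which determines applicativeness) --- is disjoint from $u_2$ and hence untouched. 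The dangerous case of the paper's proof never arises in your setting because it corresponds to $\redextwo\outin\redex$, which is incompatible with $\redex\leftright\redextwo$ by totality. Your route is more uniform and makes the underlying disjointness intuition explicit; the paper's route makes the single genuinely delicate creation case (where applicativeness can be created upwards by a $\db$-step) more visible, which is instructive but costs an induction. The only points you leave implicit --- that $\leftright$ determines a unique branching constructor, and that the residual is unique because neither the occurrence of $\var$ nor its binder is duplicated or erased --- are standard and at the same level of informality as the paper's own use of residuals.
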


\begin{proof}
	Let $\redex:\ctxtwop{\ctxp\var\esub\var\tmthree} \tols \ctxtwop{\ctxp\tmthree\esub\var\tmthree}$. Uselessness of $\redex$ implies that $\relunf{\tmthree}{\ctxtwo}$ is a normal $\l$-term and if $\relunf{\tmthree}{\ctxtwo}$ is an abstraction then $\ctxtwop{\ctx\esub\var\tmthree}$ is not an applicative context. Note that $\ls$-steps cannot change the useless nature of $\redex$, because 1) useful/useless redexes are defined via unfolding and 2) $\ls$-steps cannot change the applicative nature of a context (indeed that could happen if the hole is in a substitution and it is substituted on an applied variable occurrence, but here it is impossible because contexts are shallow). So, in the following we suppose that $\redextwo$ is a $\db$-redex. 
	
	By induction on $\ctxtwo$, the external context of $\redex$. Cases:
	\begin{varenumerate}
		\item \emph{Empty context $\ctxhole$}. Consider $\redextwo$, that necessarily takes the form 
		$$\redextwo: \ctxp\var\esub\var\tmthree \to \ctxtwop\var\esub\var\tmthree$$
		\ie\ reduction takes place in the context $\ctx$. The only way in which $\redex'$ can be useful is if $\redextwo$ turned the non-applicative context $\ctx$ into an applicative context $\ctxtwo$, assuming that $\unf{\tmthree}$ is an abstraction. The useful redex $\redextwo$ can change the nature of $\ctx$ only if $\ctx=\ctxthreep{\sctxp{\l\vartwo.\sctxtwo}\tmfour}$ and $\ctxthree$ is applicative, so that $\redextwo$ is 
			$$\ctxthreep{\sctxp{\l\vartwo.\sctxtwop\var}\tmfour}\esub\var\tmthree \todb
			\ctxthreep{\sctxp{\sctxtwop\var\esub\vartwo\tmfour}}\esub\var\tmthree $$
		with $\ctxtwo=\ctxthreep{\sctxp{\sctxtwo\esub\vartwo\tmfour}}$ applicative context. But then $\redextwo\leftout\redex$, against hypothesis. Absurd.
		
		\item \emph{Inductive cases}:
		\begin{varenumerate}
			\item \emph{Abstraction}, \ie\ $\ctxtwo=\l\vartwo.\ctxthree$. Both redexes $\redex$ and $\redextwo$ take place under the outermost abstraction, so we conclude using the \ih.
			
			\item \emph{Left of an application}, \ie\ $\ctxtwo=\ctxthree\tmfive$. Note that $\redextwo$ cannot be the eventual root $\db$-redex (\ie\ if $\ctxthree$ is of the form $\sctxp{\l\vartwo.\ctxfour}$ then $\redextwo$ is not the $\db$-redex involving $\l\vartwo$ and $\tmfive$), because this would contradict $\redex\leftout\redextwo$. If the redex $\redextwo$ takes place in $\ctxthreep{\ctxp\var\esub\var\tmthree}$ then we use the \ih. Otherwise $\redextwo$ takes place in $\tmfive$, the two redexes are disjoint, and commute. Evidently, the residual $\redex'$ of $\redex$ after $\redextwo$ is still shallow and useless.
			
			\item \emph{Right of an application}, \ie\ $\ctxtwo=\tmfive\ctxthree$. For the useless part it is similar to the previous case.  Now, $\redex'$ is shallow, because $\redex$ is shallow and $\redextwo$ cannot put its residual into an explicit substitution, because---as in the previous case---$\redextwo$ cannot be the eventual root $\db$-redex.
			
			\item \emph{Substitution}, \ie\ $\ctxtwo=\ctxthree\esub\vartwo\tmfive$. Both redexes $\redex$ and $\redextwo$ take place under the outermost explicit substitution $\esub\vartwo\tmfive$, so we conclude using the \ih.
		\end{varenumerate}
	\end{varenumerate}
\end{proof}

Using the lemma above and a technical property of standard derivations (the \emph{enclave axiom}, see \cite{non-standard-preprint}) we obtain:

\begin{proposition}[\lou-Derivations Are Standard]
\label{prop:lou-der-are-standard}
Let $\deriv$ be a \lou\ derivation. Then $\deriv$ is a standard derivation.
\end{proposition}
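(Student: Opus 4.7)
The plan is induction on the length $n$ of $\deriv=\redex_1;\ldots;\redex_n$. The case $n\leq 1$ is immediate, since every derivation of length at most one is standard. For $n\geq 2$ the prefix $\deriv':=\redex_1;\ldots;\redex_{n-1}$ is a shorter \lou\ derivation and is therefore standard by the inductive hypothesis, so the task reduces to showing that $\redex_n$ is not the residual of any redex $\redextwo$ in some $\tm_{j-1}$ with $\redextwo\leftout\redex_j$ and $j<n$.

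Suppose, for contradiction, that such $j$ and $\redextwo$ exist; pick the smallest such $j$. Since $\redex_j$ is the leftmost-outermost \emph{useful} redex of $\tm_{j-1}$ and $\redextwo\leftout\redex_j$, the redex $\redextwo$ cannot itself be useful, so it is a useless $\lssym$-redex. The heart of the argument is to follow the residual $\redextwo_k$ of $\redextwo$ in each subsequent term $\tm_k$ for $k=j,j+1,\ldots,n-1$ and to establish, by a nested induction on $k$, the invariant that $\redextwo_k$ is a well-defined, shallow, useless redex and, whenever $k<n-1$, that $\redextwo_k\leftout\redex_{k+1}$. The base case $k=j$ follows from a single application of Useless Persistence (\reflemma{useless-pers}) to $\redextwo\leftout\redex_j$ with the useful contracted redex $\redex_j$, which also records that the residual stays shallow.

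The inductive step is the real obstacle. Granting the invariant at $k$, a further application of \reflemma{useless-pers} with the useful step $\redex_{k+1}$ yields that $\redextwo_{k+1}$ is again shallow and useless; what must still be argued is $\redextwo_{k+1}\leftout\redex_{k+2}$ (when $k+1<n-1$). Totality of the order $\leftout$ (\reflemma{lefttor-basic}) makes the two positions comparable, and the usefulness of $\redex_{k+2}$ together with the uselessness of $\redextwo_{k+1}$ rules out equality, leaving only two strict orderings. The wrong one, $\redex_{k+2}\leftout\redextwo_{k+1}$, is excluded by the enclave axiom for standard derivations of the \lsc\ (see \cite{non-standard-preprint}): the prefix $\redex_1;\ldots;\redex_{n-1}$ is standard by the outer inductive hypothesis, and a residual of $\redextwo$, which started out $\leftout$-smaller than $\redex_j$, cannot come to sit $\leftout$-greater than a later contracted redex along a standard derivation.

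Iterating the invariant up to $k=n-1$ shows that the residual $\redextwo_{n-1}$ of $\redextwo$ in $\tm_{n-1}$ is still useless. But by the contradicting assumption $\redex_n$ is precisely this residual, while $\redex_n$ must be useful as a step of a \lou\ derivation: contradiction. The only delicate ingredient is the invocation of the enclave axiom: Useless Persistence is a single-step statement, and without the enclave property nothing would a priori prevent the residual of $\redextwo$ from overtaking the next leftmost-outermost useful redex, which would break the iteration.
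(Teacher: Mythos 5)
Your proof is correct and follows essentially the same route as the paper's: both arguments observe that the ancestor redex $\redextwo$ must be useless (being $\leftout$-below the \lou\ redex), then propagate uselessness of its residuals forward with \reflemma{useless-pers}, using the enclave axiom together with standardness of the prefix and totality of $\leftout$ to keep each residual $\leftout$-below the next contracted redex, until the offending step itself is shown useless --- contradicting that every step of a \lou\ derivation is useful. The only difference is organizational (your outer induction on length plus an inner forward invariant versus the paper's single induction on the distance from the ancestor to the first non-standard step), which does not change the substance.
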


\begin{proof}
	Suppose not. Then $\deriv$ writes as $\derivtwo; \redex;\derivthree$ where $\derivtwo$ is the maximum standard prefix of $\deriv$ (that is necessarily non-empty, because a single step is always standard). Now, let $\derivtwo$ be $\redex_1;\ldots\redex_k$ and $\redex_i:\tm_{i}\to\tm_{i+1}$ with $i\in\set{1,\ldots,k}$. If $\derivtwo; \redex$ is not standard there is a term $\tm_i$ and a redex $\redextwo$ of $\tm_i$ s.t. 
	\begin{varenumerate}
		\item $\redex$ is a residual of $\redextwo$ after $\redex_i;\ldots;\redex_k$;
		\item $\redextwo\leftout\redex_i$.
	\end{varenumerate}
	By induction on $l=k-i+1$ (\ie\ on the length of the sequence $\redex_i;\ldots;\redex_k$) we prove that $\redex$ is useless. This fact will give us a contradiction, because by hypothesis $\redex$ is useful. 
	
	If $l=1$ by \reflemma{useless-pers} the unique residual $\redextwo_{i+1}$ of $\redextwo$ after $\redex_k$ is useless. But this residual, being unique, is necessarily equal to $\redex$, and we conclude. If $l>0$ consider $\redex_{i+1}$ and the unique residual $\redextwo_{i+1}$ of $\redextwo$ after $\redex_i$, which is useless by \reflemma{useless-pers}. Both $\redextwo_{i+1}$ and $\redex_{i+1}$ are redexes of $\tm_{i+1}$. Two cases:
	\begin{varenumerate}
		\item $\redex_{i+1}\leftout\redextwo_{i+1}$. The order $\leftout$ satisfies \mellies' axiomatics for standardisation, in particular the axiom \emph{Enclave} (see \cite{non-standard-preprint}, Sect. 9). We recall such an axiom, that will give us a contradiction. The axiom assumes two redexes $A$ and $B$ in a term $\tm$ s.t. $B\leftout A$ and so $B$ has a unique residual $B'$ after the reduction of $A$, and it has two parts:
		\begin{varenumerate}
			\item \emph{Creation}: if $A$ creates a redex $C$ then $B'\leftout C$;
			\item \emph{Nesting}: If $A\leftout C$ and $C'$ is a residual of $C$ after $A$ then $B'\leftout C'$
		\end{varenumerate}
Now, we have two cases:
		\begin{varenumerate}
			\item \emph{$\redex_{i+1}$ has been created by $\redex_i$}. Then by the creation part of the enclave axiom applied to $\redextwo\leftout\redex_i$ we obtain $\redextwo_{i+1}\leftout\redex_{i+1}$, absurd.
			\item \emph{$\redex_{i+1}$ is a residual after $\redex_i$ of a redex $\redex_{i+1}^{-1}$ of $\tm_i$}. Then by the nesting part of the enclave axiom applied to $\redextwo\leftout\redex_i$ we obtain $\redextwo_{i+1}\leftout\redex_{i+1}$, absurd.
		\end{varenumerate}
		\item $\redextwo_{i+1}\leftout\redex_{i+1}$. Then we can apply the \ih\ and conclude that $\redex$ is useless.
	\end{varenumerate}
\end{proof}

We conclude applying \refcoro{subterm}:
\begin{corollary}[Subterm and Trace]
\label{coro:subterm-for-lou}
\lou\ derivations have the subterm and the trace properties, and only involve shallow terms.
\end{corollary}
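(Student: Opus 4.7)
The corollary is essentially a packaging of results already established in Sections~\ref{sect:standard} and~\ref{sect:subterm-via-standard}, so the plan is to chain them together rather than to prove anything new. Let $\deriv:\tm\to^*\tmtwo$ be a \lou\ derivation, where $\tm$ is an ordinary $\l$-term (the setting of the high-level implementation theorem).

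The first step is to invoke Proposition~\ref{prop:lou-der-are-standard} to conclude that $\deriv$ is a standard derivation. This is the only non-trivial ingredient, and it has already been proved via the Useless Persistence Lemma and the enclave axiom. Once standardness is in hand, the subterm property follows from Corollary~\ref{coro:subterm}.\refpoint{subterm-1}, and the trace property $\esmeas{\tmtwo}=\sizedb\deriv$ follows from Corollary~\ref{coro:subterm}.\refpoint{subterm-3}; the latter crucially uses the hypothesis that $\tm$ is an ordinary $\l$-term, which guarantees that no $\ls$-step can duplicate an ES constructor, so that ES occurrences in $\tmtwo$ are in bijection with the $\db$-steps of $\deriv$.

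Finally, the claim that $\deriv$ only involves shallow terms follows by applying the Shallow Invariant Lemma to the standard derivation $\deriv$, again using that $\tm$ is an ordinary $\l$-term (hence in particular shallow) and that \lou\ reduction is a shallow reduction by definition.

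There is no real obstacle here: the whole technical content has been absorbed into Proposition~\ref{prop:lou-der-are-standard}, Corollary~\ref{coro:subterm} and the Shallow Invariant Lemma. The only thing one has to be mindful of is to state the corollary under the implicit assumption that the initial term of the \lou\ derivation is an ordinary $\l$-term, which is the regime in which \lou\ derivations will be used to implement $\toblo$.
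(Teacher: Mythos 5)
Your proposal is correct and follows essentially the same route as the paper: the paper's proof is precisely to invoke Proposition~\ref{prop:lou-der-are-standard} (\lou\ derivations are standard) and then apply Corollary~\ref{coro:subterm} for the subterm and trace properties, with the Shallow Invariant Lemma covering the claim about shallow terms. Your additional remark that the trace property and shallowness rely on the initial term being an ordinary $\l$-term matches the hypotheses under which those results are stated in the paper.
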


\section{The Normal Form and Projection Properties}
\label{sect:projection}
For the normal form property it is necessary to show that the position
of a redex in an unfolded term $\unf{\tm}$ can be traced back to the
position of a useful redex in the original term
$\tm$. 

  Given that unfoldings turn explicit into implicit
  substitutions, we need a few preliminary lemmas about the stability
  by substitutions of the notions of applicative contexts, outside-in
  order, and position. Remember that all contexts are implicitly assumed to be shallow.

\begin{lemma}[Applicative Contexts Are Stable by Substitution]
\label{l:apctx-stable}
	Let $\apctx$ be an applicative context. Then $\apctx\isub\var\tm$ is an applicative context.
\end{lemma}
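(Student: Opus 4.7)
The plan is to expand the definition of applicative context, push the implicit substitution through using \reflemma{ctx-unf}, and then verify that the two context classes involved (shallow contexts and substitution contexts) are each individually stable under substitution by a straightforward induction. That last verification is really the only work; there is no serious obstacle.

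First, by definition, $\apctx = \ctxp{\sctxp{\ctxhole}\,\tmthree}$ for some shallow context $\ctx$, substitution context $\sctx$, and term $\tmthree$. Applying \refpoint{ctx-unf-two} of \reflemma{ctx-unf} to the shallow context $\ctx$, with $\tmtwo \defeq \tm$, yields
\[
\apctx\isub\var\tm \;=\; (\ctx\isub\var\tm)\ctxholep{(\sctxp{\ctxhole}\,\tmthree)\isub\var\tm}
\;=\; (\ctx\isub\var\tm)\ctxholep{(\sctx\isub\var\tm)\ctxholep{\ctxhole}\,(\tmthree\isub\var\tm)},
\]
using the compositional definition of $\isub\var\tm$ on contexts (which leaves the hole untouched, by convention $\ctxhole\isub\var\tm \defeq \ctxhole$).

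Second, I would establish two easy lemmas by induction on the syntactic structure of the context. (a) If $\ctx$ is shallow then $\ctx\isub\var\tm$ is shallow: the substitution distributes over each production $\l\vartwo.\ctx,\; \ap\ctx\tmtwo,\; \ap\tmtwo\ctx,\; \ctx\esub\vartwo\tmtwo$ (using $\alpha$-equivalence to avoid capture in the abstraction and substitution cases, exactly as the paper stipulates), and never introduces the forbidden production $\tm\esub\var\ctx$ because the hole of $\ctx$ is never placed inside a substitution body. (b) If $\sctx$ is a substitution context then $\sctx\isub\var\tm$ is a substitution context, by the same induction on $\sctx \grameq \ctxhole \mid \sctx'\esub\vartwo\tmtwo$: the base case $\ctxhole\isub\var\tm = \ctxhole$ is immediate, and the step case gives $(\sctx'\esub\vartwo\tmtwo)\isub\var\tm = (\sctx'\isub\var\tm)\esub\vartwo(\tmtwo\isub\var\tm)$, which is again a substitution context by the inductive hypothesis.

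Third, combining these ingredients, the displayed equality above exhibits $\apctx\isub\var\tm$ in exactly the form $\ctx'\ctxholep{\sctx'\ctxholep{\ctxhole}\,\tmthree'}$ with $\ctx' \defeq \ctx\isub\var\tm$ shallow (by (a)), $\sctx' \defeq \sctx\isub\var\tm$ a substitution context (by (b)), and $\tmthree' \defeq \tmthree\isub\var\tm$. Hence $\apctx\isub\var\tm$ is an applicative context, as required. The only delicate point throughout is the handling of bound variables in (a), but this is standard and silently absorbed by the paper's on-the-fly $\alpha$-equivalence convention.
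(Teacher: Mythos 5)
Your proof is correct and is essentially the paper's argument in expanded form: the paper disposes of this lemma with a one-line ``straightforward induction on $\apctx$'', and your decomposition into the shallow part, the substitution-context part, and the argument term --- each checked stable by an easy structural induction --- is just that induction spelled out. No gap; the only mildly nonstandard move is invoking \reflemma{ctx-unf}.\ref{p:ctx-unf-two} for plugging a \emph{context} rather than a term, which is harmless since the paper defines context-into-context plugging analogously.
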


\begin{proof}
	Straightforward induction on $\apctx$.
\end{proof}

\begin{lemma}[$\outin$ Is Stable by Substitution]
	\label{l:inst-and-sub}
	If $\ctx\outin\ctxtwo$ then $\ctx\isub\var\tm\outin\ctxtwo\isub\var\tm$.
\end{lemma}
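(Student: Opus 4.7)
The plan is to proceed by induction on the derivation of $\ctx \outin \ctxtwo$, using the fact (implicit in the definition in the paper) that $\outin$ coincides with the strict prefix relation on contexts. The substitution operation $(\cdot)\isub{\var}{\tm}$ acts uniformly on the constructors of a context and fixes the hole, so prefixes should be preserved in the expected way.

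\textbf{Base case.} Suppose $\ctx = \ctxhole$ and $\ctxtwo \neq \ctxhole$, obtained by the \emph{Root} rule. Then $\ctx\isub\var\tm = \ctxhole\isub\var\tm = \ctxhole$ by definition. I would first observe that implicit substitution preserves the hole, i.e.\ if $\ctxthree \neq \ctxhole$ then $\ctxthree\isub\var\tm \neq \ctxhole$, which follows by a one-line inspection of the grammar of shallow contexts (every non-hole production starts with a constructor that is preserved by substitution). Hence $\ctxtwo\isub\var\tm \neq \ctxhole$ and therefore $\ctxhole \outin \ctxtwo\isub\var\tm$, which is the desired conclusion.

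\textbf{Inductive case.} Suppose $\ctx = \gctxthreep{\ctxfive}$ and $\ctxtwo = \gctxthreep{\ctxsix}$ with $\ctxfive \outin \ctxsix$, obtained by the contextual closure rule. The key technical ingredient is a plugging/substitution swap at the level of contexts, analogous to \reflemma{ctx-unf}.\refpoint{ctx-unf-two}, namely
\[
  \gctxthreep{\ctxseven}\isub\var\tm \;=\; (\gctxthree\isub\var\tm)\ctxholep{\ctxseven\isub\var\tm},
\]
for any shallow contexts $\gctxthree$ and $\ctxseven$. This is a routine induction on $\gctxthree$ that mirrors the term-level statement already proved in \reflemma{ctx-unf}; the shallow restriction is what guarantees the hole is not duplicated, so there is a unique way to interpret the right-hand side. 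Applying this identity twice gives
\[
  \ctx\isub\var\tm = (\gctxthree\isub\var\tm)\ctxholep{\ctxfive\isub\var\tm}, \qquad
  \ctxtwo\isub\var\tm = (\gctxthree\isub\var\tm)\ctxholep{\ctxsix\isub\var\tm}.
\]
By the induction hypothesis on $\ctxfive \outin \ctxsix$ we have $\ctxfive\isub\var\tm \outin \ctxsix\isub\var\tm$. Applying the contextual closure clause of $\outin$ with the context $\gctxthree\isub\var\tm$ then yields $\ctx\isub\var\tm \outin \ctxtwo\isub\var\tm$.

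\textbf{Expected obstacle.} The only nontrivial point is the auxiliary plugging lemma for contexts in contexts; it is where the shallowness hypothesis enters (so that substitution cannot descend into a box containing the hole and duplicate it) and it is the analogue of the term-level swap already used elsewhere in the paper. Beyond that, the argument is a direct structural induction on the prefix witness.
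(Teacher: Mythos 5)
Your proof is correct and follows essentially the same route as the paper, whose entire proof is the one-line ``By induction on $\ctx$''; inducting on the derivation of $\ctx\outin\ctxtwo$ rather than on the structure of the prefix $\ctx$ is only a cosmetic reorganization, with the auxiliary plugging/substitution swap playing the role that the case analysis on the top constructor of $\ctx$ plays in the structural version. One small nitpick: the shallowness hypothesis is not really what protects the hole here --- a single implicit substitution never duplicates a hole (holes are not variables); shallowness matters for \emph{unfoldings}, not for $\isub\var\tm$ --- but this does not affect the correctness of your argument.
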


\begin{proof}
By induction on $\ctx$.
\end{proof}

The definition of useful redex has a case about applicative contexts. The normal form property asks us to prove that the projection of a useful normal form is a $\beta$ normal form. Thus, we are forced to study how applicative contexts interact with substitution and unfoldings. This is why the following two lemmas have particularly technical statements and proofs.

\begin{lemma}[Positions and Substitution]
	\label{l:ctx-and-sub}
	Let $\tm$ and $\tmtwo$ be $\l$-terms and $\tm\isub\vartwo\tmtwo=\ctxp\var$. Then 
	\begin{varenumerate}
		\item there exists $\ctxtwo$ s.t. $\tm=\ctxtwop\varthree$ with $\varthree\in\set{\var,\vartwo}$ and $\ctxtwo\isub\vartwo\tmtwo\outin\ctx$.
		\item  \label{p:ctx-and-sub-app}if in addition $\ctx$ is applicative then either
		\begin{varenumerate}
			\item	\label{p:ctx-and-sub-app-1} $\ctxtwo$ is applicative and $\ctxtwo\isub\vartwo\tmtwo=\ctx$, or
			\item \label{p:ctx-and-sub-app-2} $\varthree=\vartwo$ and there is an applicative context $\ctxthree$ s.t. $\tmtwo=\ctxthreep\var$.
		\end{varenumerate}
	\end{varenumerate}
\end{lemma}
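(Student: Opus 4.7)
The plan is structural induction on $\tm$, which is a pure $\l$-term. A useful preliminary observation is that because $\ctxp\var=\tm\isub\vartwo\tmtwo$ is a pure $\l$-term, the context $\ctx$ carries no substitution on its spine; hence if $\ctx$ is applicative, then the substitution context in the decomposition $\apctx=\ctxp{\sctx\tmthree}$ must be $\ctxhole$. In other words, in the pure setting applicativity reduces to the hole sitting directly on the left of some application. This simplifies checking applicativity after wrapping the inductive sub-context in various constructors.

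The base case $\tm=\varthree$ splits on whether $\varthree=\vartwo$. If not, then $\ctxp\var=\varthree$ forces $\ctx=\ctxhole$; taking $\ctxtwo=\ctxhole$ gives part 1, and part 2 holds vacuously since $\ctxhole$ is not applicative. If yes, then $\tmtwo=\ctxp\var$; taking $\ctxtwo=\ctxhole$ and $\varthree=\vartwo$ handles part 1, and clause (b) with $\ctxthree=\ctx$ handles part 2. In the abstraction case $\tm=\l y.\tm_1$, the equation forces $\ctx=\l y.\ctx_1$ for some shallow $\ctx_1$ with $\ctx_1\ctxholep{\var}=\tm_1\isub\vartwo\tmtwo$; applying the IH to $\tm_1$ yields a $\ctxtwo_1$ that I wrap as $\ctxtwo=\l y.\ctxtwo_1$, with Lemma~\ref{l:inst-and-sub} carrying the $\outin$ statement through. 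The applicative case is routine: since the outermost $\l y$ cannot realise applicativity, $\ctx_1$ is forced to be applicative and the IH part 2 propagates unchanged.

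The application case $\tm=\ap{\tm_1}{\tm_2}$ subdivides according to whether the hole of $\ctx$ is in the left or right subterm, writing $\ctx=\ap{\ctx_1}{\tm'}$ or $\ctx=\ap{\tm'}{\ctx_1}$; the IH is applied to $\tm_1$ or $\tm_2$ respectively, and $\ctxtwo$ is set to $\ap{\ctxtwo_1}{\tm_2}$ or $\ap{\tm_1}{\ctxtwo_1}$. The main obstacle is the hole-left sub-case of part 2 when $\ctx_1=\ctxhole$, so $\ctx=\ap{\ctxhole}{\tm'}$: the applicative structure of $\ctx$ now comes \emph{from} the outermost application itself, $\ctx_1$ is not applicative, and the IH part 2 cannot be invoked. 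This edge case is handled by hand: from $\tm_1\isub\vartwo\tmtwo=\var$ one sees that $\tm_1$ is either $\var\neq\vartwo$ or $\vartwo$ (with $\tmtwo=\var$), and in both situations $\ctxtwo=\ap{\ctxhole}{\tm_2}$ is applicative and satisfies $\ctxtwo\isub\vartwo\tmtwo=\ctx$, verifying clause (a). When instead $\ctx_1\neq\ctxhole$ or the hole is on the right, $\ctx_1$ is already applicative and both IH clauses propagate in the same way as in the abstraction case.
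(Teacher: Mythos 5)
Your proof is correct and takes essentially the same route as the paper's: structural induction on $\tm$ with the identical case split, including the same by-hand resolution of the delicate sub-case where the hole of $\ctx$ sits immediately to the left of the outermost application (the paper derives that the inductively obtained context must be $\ctxhole$ from the $\outin$ clause of the induction hypothesis, whereas you inspect $\tm_1\isub\vartwo\tmtwo=\var$ directly, but both land on clause (a) with $\ctxtwo=\ctxhole\,\tm_2$). One small inaccuracy: pushing $\outin$ through the surrounding abstraction or application uses the contextual-closure clause in the definition of $\outin$, not Lemma~\ref{l:inst-and-sub}, which states stability of $\outin$ under substitution and is not what is needed at that point.
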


\begin{remark}
	\label{rem:compact-condition}
	In Point \ref{p:ctx-and-sub-app}.\ref{p:ctx-and-sub-app-1} asking that $\ctxtwo\isub\vartwo\tmtwo=\ctx$ is equivalent to:
	\begin{varenumerate}
		\item $\varthree\isub\vartwo\tmtwo=\var$, because both $\tm\isub\vartwo\tmtwo=\ctxtwop\varthree\isub\vartwo\tmtwo=\ctxtwo\isub\vartwo\tmtwo\ctxholep{\varthree\isub\vartwo\tmtwo}$ and $\tm\isub\vartwo\tmtwo=\ctxp\var$;
		\item $\varthree=\var$ or $\tmtwo=\var$, because it is equivalent to $\varthree\isub\vartwo\tmtwo=\var$.
	\end{varenumerate}
\end{remark}

\begin{proof}
By induction on $\tm$. Cases:
\begin{varitemize}
		\item \emph{Variable}. Sub-cases:
			\begin{varitemize}
				\item $\tm =\var$. 
				\begin{varenumerate}
					\item then $\ctx=\ctxhole$ and it is enough to take $\ctxtwo\defeq\ctxhole$, for which $\ctxhole\isub\vartwo\tmtwo=\ctxhole=\ctx$.
					\item $\ctx=\ctxhole$ is not applicative, so there is nothing to prove.
				\end{varenumerate}

				\item $\tm = \vartwo$.
				\begin{varenumerate}
					\item then $\ctx\prefix\tmtwo$ and $\ctxtwo\defeq\ctxhole$ verifies the statement (because $\ctxhole\outin\ctxthree$ for any context $\ctxthree$).
					\item If $\ctx$ is applicative then we are in case \ref{p:ctx-and-sub-app-2} and the applicative context $\ctxthree$ for $\tmtwo$ is $\ctx$ itself.
				\end{varenumerate}
				
				\item $\tm=\varthree$. Impossible because then $\tm\isub\vartwo\tmtwo=\varthree\neq\ctxp\var$.
			\end{varitemize}
		
		\item \emph{Abstraction}. It follows immediately from the \ih.
		
		\item \emph{Application $\tm=\tmthree\tmfour$}. Then $\tm\isub\vartwo\tmtwo=\tmthree\isub\vartwo\tmtwo\tmfour\isub\vartwo\tmtwo$. Two cases:
		\begin{varitemize}
			\item \emph{The hole of $\ctx$ is in the left subterm $\tmthree\isub\vartwo\tmtwo$}. Then $\ctx=\ctxthree\tmfour\isub\vartwo\tmtwo$ and so $\tmthree\isub\vartwo\tmtwo=\ctxthreep\var$. 
			\begin{varenumerate}
				\item By \ih\ we obtain $\ctxfour$ s.t. $\tmthree=\ctxfourp\varthree$ with $\varthree\in\set{\var,\vartwo}$ and $\ctxfour\isub\vartwo\tmtwo\outin\ctxthree$. Then $\ctxtwo\defeq\ctxfour\tmfour$ is s.t. $\tm=\ctxtwop\varthree$ and $\ctxtwo\isub\vartwo\tmtwo = \ctxfour\isub\vartwo\tmtwo \tmfour\isub\vartwo\tmtwo \outin \ctxthree \tmfour\isub\vartwo\tmtwo = \ctx$.

				\item If $\ctx$ is applicative there are two sub-cases:
					\begin{varenumerate}
						\item \emph{$\ctxthree$ is applicative}. Then we conclude using the \ih.
						\item \emph{$\ctxthree$ is not applicative}. Then $\ctxthree=\ctxhole$ and the context $\ctxfour$ given by the \ih\ is empty, \ie\ $\ctxtwo=\ctxhole\tmfour$ is applicative and verifies $\ctxtwo\isub\vartwo\tmtwo=\ctxhole\tmfour\isub\vartwo\tmtwo=\ctx$, \ie\ this is an instance of case \ref{p:ctx-and-sub-app-1}.
					\end{varenumerate}
			\end{varenumerate}
			
			\item \emph{The hole of $\ctx$ is in the right subterm $\tmfour\isub\vartwo\tmtwo$}. Then $\ctx=\tmthree\isub\vartwo\tmtwo\ctxthree$ and so $\tmfour\isub\vartwo\tmtwo=\ctxthreep\var$.
			\begin{varenumerate}
				\item Analogous to the previous case.
				\item If $\ctx$ is applicative then $\ctxthree$ is applicative and we use the \ih.
			\end{varenumerate}
		\end{varitemize}
	\end{varitemize}
\end{proof}

The next lemma states the technical relationship between positions and unfolding. It will be used in both the proofs of the normal form property (\refprop{opt-nf-to-nf}) and of the key lemma for the projection property (\reflemma{useful-projection}).

\begin{lemma}[Positions and Unfolding]
	\label{l:varoc-relunf-new} %
	Let $\tm$ be a \lsc\ term and $\unf{\tm}=\ctxp\var$, where $\ctx$ does not capture $\var$. Then 
	\begin{varenumerate}
		\item there exists a shallow context $\ctxtwo$ s.t. $\tm=\ctxtwop\vartwo$ (possibly with $\vartwo=\var$) and $\unf{\ctxtwo}\outin\ctx$;
		\item if in addition $\ctx$ is applicative then either 
		\begin{varenumerate}
			\item \label{p:varoc-relunf-new-1} $\ctxtwo$ is applicative and $\unf{\ctxtwo}=\ctx$, or
			\item \label{p:varoc-relunf-new-2} there is an applicative context $\apctx$ s.t. $\relunf{\vartwo}{\ctxtwo}=\apctxp\var$.
		\end{varenumerate}
	\end{varenumerate}
\end{lemma}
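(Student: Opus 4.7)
The plan is to prove both claims simultaneously by induction on the structure of $\tm$, using \reflemma{ctx-and-sub} as the technical engine for the substitution case, \reflemma{inst-and-sub} to propagate $\outin$ through unfoldings of substitutions, and \reflemma{apctx-stable} to propagate applicativity through implicit substitutions.

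The base and easy inductive cases are essentially dictated by the shape of $\unf{\tm}$. If $\tm$ is a variable then $\ctx=\ctxhole$ and one takes $\ctxtwo := \ctxhole$; point 2 is vacuous since $\ctxhole$ is not applicative. If $\tm = \l\varthree.\tmthree$ (assuming $\varthree\neq\var$ by $\alpha$-conversion, since $\ctx$ does not capture $\var$) then $\unf{\tm} = \l\varthree.\unf{\tmthree}$ forces $\ctx = \l\varthree.\ctxthree$ with $\unf{\tmthree} = \ctxthreep{\var}$, and the IH applied to $(\tmthree,\ctxthree)$ yields $\ctxfour$, so that $\ctxtwo := \l\varthree.\ctxfour$ works. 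The application case $\tm = \tmthree\tmfour$ is analogous, depending on whether the hole of $\ctx$ lies in $\unf{\tmthree}$ or in $\unf{\tmfour}$. In all these cases a direct factorization argument shows that if $\ctx$ is applicative then so is the sub-context passed to the IH (the decomposition $\ctxp{\sctx\tm}$ must cross the outer constructor), and the resulting $\ctxtwo$ inherits applicativity on the outside; moreover, relative unfoldings are transparent through $\l\varthree.\cdot$, $\cdot\,\tmfour$, and $\tmthree\,\cdot$, so in the second alternative of the IH the applicative witness transfers unchanged.

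The substitution case $\tm = \tmthree\esub\varthree\tmfour$ is the real work. Here $\unf{\tm} = \unf{\tmthree}\isub\varthree{\unf{\tmfour}} = \ctxp{\var}$, and \reflemma{ctx-and-sub} provides a context $\ctxthree$ and a variable $\vartwo' \in \set{\var,\varthree}$ with $\unf{\tmthree} = \ctxthreep{\vartwo'}$ and $\ctxthree\isub\varthree{\unf{\tmfour}} \outin \ctx$. The IH applied to $(\tmthree, \ctxthree)$ yields a shallow $\ctxfour$ and a variable $\vartwo$ with $\tmthree = \ctxfourp{\vartwo}$ and $\unf{\ctxfour} \outin \ctxthree$. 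Setting $\ctxtwo := \ctxfour\esub\varthree\tmfour$, which is shallow because adding an outer substitution preserves shallowness, one has $\tm = \ctxtwop{\vartwo}$ directly, and $\unf{\ctxtwo} = \unf{\ctxfour}\isub\varthree{\unf{\tmfour}} \outin \ctxthree\isub\varthree{\unf{\tmfour}} \outin \ctx$ by \reflemma{inst-and-sub} and transitivity of $\outin$. For point 2 one splits on the two alternatives in the second point of \reflemma{ctx-and-sub}. In the first alternative, $\ctxthree$ is applicative and the IH's point 2 is available: in its first sub-alternative $\ctxtwo$ is applicative with $\unf{\ctxtwo} = \ctx$ by direct computation; in the second sub-alternative one computes $\relunf{\vartwo}{\ctxtwo} = \apctx\isub\varthree{\unf{\tmfour}}\ctxholep{\vartwo'\isub\varthree{\unf{\tmfour}}}$ and finishes via \reflemma{apctx-stable}, with a small extra observation in the sub-subcase $\vartwo' = \varthree$: matching $\ctxp{\var} = \ctxp{\unf{\tmfour}}$ (after expanding the unfolding through the substitution on $\varthree$) forces $\unf{\tmfour} = \var$.

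The main obstacle is the second alternative, where $\ctxthree$ need not be applicative, so the IH's point 2 is unavailable and the applicative witness must be rebuilt by hand. That alternative guarantees $\vartwo' = \varthree$ together with an applicative context $\apctx$ such that $\unf{\tmfour} = \apctxp{\var}$. Since $\unf{\ctxfour}\outin\ctxthree$, one can write $\ctxthree = \unf{\ctxfour}\ctxholep{\gctx}$ for some $\gctx$, so that $\relunf{\vartwo}{\ctxfour} = \gctxp{\varthree}$ and hence $\relunf{\vartwo}{\ctxtwo} = \gctx\isub\varthree{\unf{\tmfour}}\ctxholep{\apctxp{\var}}$. The required applicative witness is the composition of $\gctx\isub\varthree{\unf{\tmfour}}$ with $\apctx$: $\gctx$ is shallow because $\unf{\tmthree}$ is a pure $\l$-term (so all its contexts are trivially shallow), hence so is $\gctx\isub\varthree{\unf{\tmfour}}$, and plugging an applicative context into a shallow outer context preserves the factorization $\ctxp{\sctx\tm}$ demanded by the grammar of applicative contexts.
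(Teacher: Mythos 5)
Your overall architecture matches the paper's: a simultaneous induction on $\tm$, with \reflemma{ctx-and-sub} driving the substitution case, \reflemma{inst-and-sub} propagating $\outin$ through the implicit substitution, and \reflemma{apctx-stable} keeping the rebuilt witnesses applicative. Your substitution case is sound; in the second alternative of \reflemma{ctx-and-sub} (where $\ctxthree$ need not be applicative and point 2 of the \ih\ is unavailable) your reconstruction of the witness as the composition of $\gctx\isub\varthree{\unf{\tmfour}}$ with $\apctx$, justified by the shallowness of contexts of pure $\l$-terms, is if anything spelled out more explicitly than in the paper.

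There is, however, a genuine gap in the application case $\tm=\tmthree\tmfour$. You assert that if $\ctx$ is applicative then so is the sub-context handed to the \ih, because ``the decomposition $\ctxp{\sctx\tm}$ must cross the outer constructor''. It need not: when the hole of $\ctx$ sits in head position of the outermost application, \ie\ $\ctx=\ctxhole\,\unf{\tmfour}$, the context $\ctx$ is applicative (take the empty shallow context and $\sctx=\ctxhole$ in the grammar $\ctxp{\sctx\tm}$), yet the sub-context passed to the \ih\ is $\ctxhole$, which is not applicative, so point 2 of the \ih\ yields nothing. This configuration genuinely occurs --- it happens exactly when $\unf{\tmthree}=\var$, \eg\ for $\tmthree=\vartwo\esub\vartwo\var$ --- and it must be handled by a separate construction: $\unf{\tmthree}=\var$ forces $\tmthree=\sctxp\vartwo$ for some substitution context $\sctx$ with $\relunf{\vartwo}{\sctx}=\var$, and one takes $\ctxtwo\defeq\sctx\,\tmfour$, which is applicative and satisfies $\unf{\ctxtwo}=\ctxhole\,\unf{\tmfour}=\ctx$ and $\relunf{\vartwo}{\ctxtwo}=\var$, landing in case \ref{p:varoc-relunf-new-1}. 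The paper devotes an explicit subcase to precisely this situation, and it is not an ignorable corner: it is the configuration in which a variable in head position unfolds to an abstraction, which is exactly what the normal form property (\refprop{opt-nf-to-nf}) later needs this lemma for.
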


\begin{remark}
\label{rem:compact-condition-2}
In \refpoint{varoc-relunf-new-1} $\unf{\ctxtwo}=\ctx$ is equivalent to $\relunf{\vartwo}{\ctxtwo}=\var$, because $\unf{\tm}=\unf{\ctxtwo}{\ctxholep{\relunf{\vartwo}{\ctxtwo}}}$ (by \reflemma{ctx-unf}.\ref{p:ctx-unf-three}) and $\unf{\tm}=\ctxp\var$ (by hypothesis).
\end{remark}

\begin{proof}
We prove both points together, by induction on $\tm$. Cases:
\begin{varitemize}
	\item \emph{Variable}. 
	\begin{varenumerate}
		\item If $\tm=\var$ then $\unf{\tm}=\var$, $\ctx=\ctxhole$ and it is enough to take $\ctxtwo\defeq\ctxhole$, for which $\unf{\ctxtwo}=\ctxhole\outin\ctxhole=\ctx$. 
		\item $\ctx$ cannot be applicative.
	\end{varenumerate}
	
	\item \emph{Abstraction}. It follows immediately by the \ih, since $\unf{\tm}$ is an abstraction.
	
	\item \emph{Application $\tm=\tmtwo\tmthree$}. By definition $\unf{\tm}=\unf{\tmtwo}\unf{\tmthree}$ and $\unf{\tm}=\ctxp\var$ implies $\ctx=\ctxthree\unf{\tmthree}$ or $\ctx=\unf{\tmtwo}\ctxthree$.
	\begin{varenumerate}
		\item  For both cases of $\ctx$ the \ih\ gives $\ctxtwo'$ satisfying the statement wrt $\ctxthreep\var$. We conclude taking $\ctxtwo\defeq\ctxtwo'\unf{\tmthree}$ or $\ctxtwo\defeq\unf{\tmtwo}\ctxtwo'$, depending on the case under consideration. 
		
		\item Suppose that $\ctx$ is applicative. If $\ctxthree$ is applicative then it follows from the \ih. Otherwise $\ctx=\ctxhole\unf{\tmthree}$, and that can happen only if $\unf{\tmtwo}=\var$, \ie\ if $\tmtwo=\sctxp\vartwo$ (possibly with $\vartwo=\var$) and $\relunf{\vartwo}{\sctx}=\var$. Then take $\ctxtwo\defeq\sctx\tmthree$, which is applicative and verifies $\tm=\ctxtwop\vartwo$ and $\unf{\ctxtwo}=\ctx$. We conclude because $\relunf{\vartwo}{\ctxtwo}=\relunf{\vartwo}{\sctx\tmthree}=\relunf{\vartwo}{\sctx}=\var$, and so we are in case \ref{p:varoc-relunf-new-1}.
	\end{varenumerate}
	
	\item \emph{Substitution $\tm=\tmtwo\esub\vartwo\tmthree$}. By definition $\unf{\tm}=\unf{\tmtwo}\isub\vartwo{\unf{\tmthree}}$.
	\begin{varenumerate}
		\item  By \reflemma{ctx-and-sub} there exists $\ctxthree$ s.t. $\unf{\tmtwo}=\ctxthreep{\varthree}$ with $\varthree\in\set{\vartwo,\var}$, and $\ctxthree\isub{\vartwo}{\unf{\tmthree}}\outin\ctx$. By \ih\ applied to $\tmtwo$, there also exists a context $\ctxfour$ s.t. $\tmtwo=\ctxfourp{\varthree'}$ (possibly with $\varthree'=\varthree$) and $\unf{\ctxfour}\outin\ctxthree$. Take $\ctxtwo\defeq\ctxfour\esub\vartwo\tmthree$. By definition $\unf{\ctxtwo}=\unf{\ctxfour}\isub\vartwo{\unf\tmthree}$. By \reflemma{inst-and-sub} applied to $\unf{\ctxfour}\outin\ctxthree$ we obtain $\unf{\ctxfour}\isub\vartwo{\unf\tmthree}\outin\ctxthree\isub\vartwo{\unf\tmthree}$, implying $\unf{\ctxtwo}\outin\ctxthree\isub\vartwo{\unf\tmthree}\outin\ctx$, that is $\unf{\ctxtwo}\outin\ctx$.
		
		\item If $\ctx$ is applicative then by Lemma \ref{l:ctx-and-sub} one of the two following cases applies:
	\begin{enumerate}
		\item \emph{$\ctxthree$ is applicative and $\ctxthree\isub{\vartwo}{\unf\tmthree}=\ctx$}. The \ih\ gives us one of the two following cases:
		\begin{varenumerate}
			\item \emph{$\ctxfour$ is applicative and $\unf{\ctxfour}=\ctxthree$}. Then $\ctxtwo$ (defined as $\ctxfour\esub\vartwo\tmthree$) is applicative and $\unf{\ctxtwo}=\unf{\ctxfour\esub\vartwo\tmthree}=\unf{\ctxfour}\isub\vartwo{\unf\tmthree}=\ctxthree\isub\vartwo{\unf\tmthree}=\ctx$. So we are in case \ref{p:varoc-relunf-new-1}.
			
			\item \emph{There is an applicative context $\apctxtwo$ s.t. $\relunf{\varthree'}{\ctxfour}=\apctxtwop{\varthree}$}. We have $\tm=\ctxtwop{\varthree'}$ so we want to show that there is an applicative context $\apctx$ s.t. $\relunf{\varthree'}{\ctxtwo}=\apctxp\var$, \ie\ that we are in case \ref{p:varoc-relunf-new-2}. Note that $\relunf{\varthree'}{\ctxtwo}=\relunf{\varthree'}{\ctxfour\esub\vartwo\tmthree}=\relunf{\varthree'}{\ctxfour}\isub\vartwo{\unf\tmthree}=\apctxtwop{\varthree}\isub\vartwo{\unf\tmthree}$. Two further sub-cases:
			\begin{varenumerate}
				\item $\varthree=\var$. Then $\relunf{\varthree'}{\ctxtwo}=\apctxtwop{\var}\isub\vartwo{\unf\tmthree}=\apctxtwo\isub\vartwo{\unf\tmthree}\ctxholep{\var}$. We conclude taking $\apctx\defeq\apctxtwo\isub\vartwo{\unf\tmthree}$, that is applicative by \reflemma{apctx-stable}.

				\item $\varthree=\vartwo$. Note that by $\unf{\ctxfour}=\ctxthree$ and \refremark{compact-condition} it follows that $\unf\tmthree=\var$. Then $\relunf{\varthree'}{\ctxtwo}=\apctxtwop{\vartwo}\isub\vartwo\var=\apctxtwo\isub\vartwo\var\ctxholep{\var}$. We conclude taking $\apctx\defeq\apctxtwo\isub\vartwo\var$, that is applicative by \reflemma{apctx-stable}.
			\end{varenumerate}			
		\end{varenumerate}
	
	\item \emph{$\varthree=\vartwo$ and there is an applicative context $\apctxtwo$ s.t. $\unf\tmthree=\apctxtwop\var$}. Then $\tm=\ctxtwop\vartwo= \ctxfourp\vartwo\esub\vartwo\tmthree$, and $\relunf{\vartwo}{\ctxtwo}=\relunf{\vartwo}{\ctxfour\esub\vartwo\tmthree}=\relunf{\vartwo}{\ctxfour}\isub\vartwo{\unf\tmthree}=\vartwo\isub\vartwo{\unf\tmthree}=\unf\tmthree=\apctxtwop\var$. So taking $\apctx\defeq\apctxtwo$ we are in case \ref{p:varoc-relunf-new-2}.
	
	\end{enumerate}	

	\end{varenumerate}
\end{varitemize}
\end{proof}

We now have the tools to prove the normal form property.

\begin{proposition}[Normal Form]
\label{prop:opt-nf-to-nf}
	Let $\tm$ be a \lsc\ term in useful normal form. Then $\unf{\tm}$ is a $\beta$-normal form.
\end{proposition}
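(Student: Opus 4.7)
The plan is to prove the contrapositive by structural induction on $\tm$: assume $\unf\tm$ contains a $\beta$-redex and construct a useful (shallow) redex in $\tm$. The variable case is vacuous since $\unf\var=\var$ is a normal form. For $\tm=\l\var.\tmtwo$, useful redexes in $\tmtwo$ at position $\ctx$ lift to useful redexes at position $\l\var.\ctx$ in $\tm$---relative unfoldings and applicative status are invariant under an outer $\lambda$---so applying the IH to $\tmtwo$ finishes the case.

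For $\tm=\tmtwo\tmthree$, the same style of lifting works for both subterms, since both $\relunf\tmfour{\ctx\tmthree}=\relunf\tmfour\ctx=\relunf\tmfour{\tmtwo\ctx}$ and the applicative status of a context are preserved under the extension. Hence by IH one may assume $\unf\tmtwo$ and $\unf\tmthree$ are $\beta$-normal, in which case any $\beta$-redex of $\unf\tm=\unf\tmtwo\unf\tmthree$ must sit at the root and force $\unf\tmtwo$ to be an abstraction. I would then factor $\tmtwo=\sctxp{\tmfour}$ by pulling out all outermost substitutions, so that $\tmfour$ is not itself a substitution. Since substitution preserves outermost constructors, $\tmfour$ is either an abstraction---in which case $\tm$ already has a root $\db$-redex---or a variable bound inside $\sctx$, in which case chasing the ES-chain of $\sctx$ locates an $\ls$-step in $\tm$ whose compact position has the applicative form $\sctx'\tmthree$ and whose relative unfolding is still an abstraction, hence useful.

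The main obstacle is the case $\tm=\tmtwo\esub\vartwo\tmthree$, where only the IH on $\tmtwo$ is available: useful redexes inside $\tmthree$ sit at non-shallow positions in $\tm$ and so do not contradict useful-normality. By IH, $\unf\tmtwo$ is $\beta$-normal, so a $\beta$-redex of $\unf\tm=\unf\tmtwo\isub\vartwo{\unf\tmthree}$ must be created by the substitution; that is, $\unf\tmtwo=\apctx''\ctxholep{\vartwo}$ for some applicative $\apctx''$, together with $\unf\tmthree$ an abstraction. I would then invoke Lemma~\ref{l:varoc-relunf-new} to trace this applicative occurrence back into $\tmtwo$, obtaining $\tmtwo=\ctxfourp{\varthree}$ with either $\ctxfour$ applicative and $\unf\ctxfour=\apctx''$ (so, by Remark~\ref{rem:compact-condition-2}, $\relunf\varthree\ctxfour=\vartwo$), or an inner applicative context $\apctx$ such that $\relunf{\varthree}{\ctxfour}=\apctxp{\vartwo}$. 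In both sub-cases the $\ls$-step at position $\ctxfour\esub\vartwo\tmthree$---using the outer binder $\esub\vartwo\tmthree$ when $\varthree=\vartwo$, and the binder of $\varthree$ inside $\ctxfour$ otherwise---is useful: the position is applicative, since wrapping an applicative context with an outer ES preserves applicative status; and its relative unfolding works out to $\relunf{\varthree}{\ctxfour}\isub\vartwo{\unf\tmthree}$, which is either the abstraction $\unf\tmthree$ (first sub-conclusion) or of the form $(\apctx\isub\vartwo{\unf\tmthree})\ctxholep{\unf\tmthree}$---an applicative context applied to the abstraction $\unf\tmthree$, by Lemma~\ref{l:apctx-stable}, and therefore containing a $\beta$-redex---satisfying one of the two clauses of the useful-redex definition.
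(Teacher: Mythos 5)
Your overall strategy (structural induction, tracing occurrences of variables back through the unfolding via \reflemma{varoc-relunf-new}, and using \refremark{compact-condition-2} and \reflemma{apctx-stable} to exhibit a useful redex) is the same as the paper's, and your variable, abstraction and application cases are essentially the paper's argument. The problem is in the substitution case, where your case analysis is incomplete. You claim that if $\unf{\tmtwo}$ is $\beta$-normal and $\unf{\tm}=\unf{\tmtwo}\isub\vartwo{\unf{\tmthree}}$ is not, then necessarily $\unf{\tmtwo}=\apctxp{\vartwo}$ for an applicative context and $\unf{\tmthree}$ is an abstraction. This misses the other way a substitution can break normality: $\unf{\tmthree}$ may itself contain a $\beta$-redex while $\vartwo$ occurs free in $\unf{\tmtwo}$ at an arbitrary (not necessarily applicative) position. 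For instance, with $\tm=(\var\vartwo)\esub{\vartwo}{(\l\varthree.\varthree)(\l\varthree.\varthree)}$ your analysis would find neither an applicative occurrence of $\vartwo$ nor an abstraction on the right, and would wrongly conclude that $\unf{\tm}$ is normal, whereas $\unf\tm=\var\,((\l\varthree.\varthree)(\l\varthree.\varthree))$ is not. The correct conclusion is that such a $\tm$ is not a useful normal form --- but via the \emph{first} clause of the definition of useful step (the relative unfolding of the substituted term \emph{contains a $\beta$-redex}), a clause your substitution case never invokes.

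The repair is exactly the paper's case~\ref{p:optnf-unf-to-nf-sub-redex}: when $\unf{\tmthree}$ has a $\beta$-redex and $\unf{\tmtwo}=\ctxp{\vartwo}$, apply point~1 of \reflemma{varoc-relunf-new} (the non-applicative part, which you did not use) to obtain $\ctxtwo$ with $\tmtwo=\ctxtwop{\varthree}$ and $\unf{\ctxtwo}\outin\ctx$; from $\unf{\tmtwo}=\unf{\ctxtwo}\ctxholep{\relunf{\varthree}{\ctxtwo}}$ one gets $\vartwo\in\fv{\relunf{\varthree}{\ctxtwo}}$, hence $\relunf{\varthree}{\ctxtwo\esub\vartwo\tmthree}=\relunf{\varthree}{\ctxtwo}\isub\vartwo{\unf{\tmthree}}$ contains $\unf{\tmthree}$ as a subterm and therefore a $\beta$-redex, so the $\ls$-step at position $\ctxtwo\esub\vartwo\tmthree$ is useful, contradicting useful-normality. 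With this sub-case added, your proof coincides with the paper's.
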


\begin{proof}
By induction on $\tm$. Cases:
\begin{varenumerate}
	\item \emph{Variable $\tm=\var$}. Obvious.
	\item \emph{Abstraction $\tm=\l \var.\tmtwo$}. By \ih\ $\unf{\tmtwo}$ is a normal form. We conclude, since by definition $\unf{\tm}=\l\var.\unf{\tmtwo}$.
	\item \emph{Application $\tm=\tmtwo\tmthree$}. By hypothesis $\tmtwo$ cannot be an abstraction (otherwise $\tm$ would not be an \opt\ normal form). By definition $\unf{\tm}=\unf{\tmtwo}\unf{\tmthree}$ and by \ih\ $\unf{\tmtwo}$ and $\unf{\tmthree}$ are normal forms. We only need to show that $\unf{\tmtwo}$ cannot be an abstraction. Suppose it is. Then $\tmtwo$ has the form $\sctxp{\var}$ with $\sctx$ a substitution context acting on $\var$ and (\ie\ $\sctx=\sctxtwop{\sctxthree\esub\var\tmfour}$) s.t. $\relunf{\var}{\sctx}$ is an abstraction. Note that $\var$ occurs in an applicative context ($\sctx\tmthree$), and so the action of $\sctx$ on $\var$ is an \opt\ step, against the hypothesis that $\tm$ is an \opt\ normal form.
	
	\item \emph{Substitution $\tm=\tmtwo\esub\var\tmthree$}. By \ih\ $\unf{\tmtwo}$ is a  normal form. Then $\unf{\tm}=\unf{\tmtwo}\isub\var{\unf{\tmthree}}$ is not a normal form only if there is $\ctx$ s.t. $\unf{\tmtwo}=\ctxp\var$, and one of the two following cases holds:
	
	\begin{varenumerate}
		\item \label{p:optnf-unf-to-nf-sub-redex} \emph{$\unf\tmthree$ has a $\beta$-redex}. Then by \reflemma{varoc-relunf-new} applied to $\tmtwo$ there exists a context $\ctxtwo$ s.t. $\tmtwo=\ctxtwop\vartwo$ (possibly with $\vartwo=\var$) and $\unf{\ctxtwo}\outin\ctx$.  From $\unf{\tmtwo}=\unf{\ctxtwo} \ctxholep{\relunf\vartwo\ctxtwo}$ (by \reflemma{ctx-unf}.\ref{p:ctx-unf-three}) and $\unf{\ctxtwo}\outin\ctx$ we obtain that there exists $\ctxthree$ s.t. $\relunf\vartwo\ctxtwo=\ctxthreep\var$. So,
		\begin{varenumerate}
			\item $\tm=\ctxtwop\vartwo\esub\var\tmthree$, 
			\item the context $\ctxtwo\esub\var\tmthree$ can act on $\vartwo$ (also in the case that $\vartwo=\var$), and
			\item $\relunf\vartwo{\ctxtwo\esub\var\tmthree}=\relunf\vartwo{\ctxtwo}\isub\var{\unf\tmthree}=\ctxthreep\var\isub\var{\unf\tmthree}=\ctxthree\isub\var{\unf\tmthree}\ctxholep{\unf\tmthree}$ contains a $\beta$-redex. 
		\end{varenumerate}
		Then $\tm$ has a \opt\ redex, against hypothesis, absurd.
		
		\item \emph{$\unf\tmthree$ is an abstraction and $\ctx$ is applicative}. Then by \reflemma{varoc-relunf-new} applied to $\tmtwo$ there exists a context $\ctxtwo$ s.t. $\tmtwo=\ctxtwop\vartwo$ (possibly with $\vartwo=\var$), $\unf{\ctxtwo}\outin\ctx$, and either
		\begin{varenumerate}
			\item \emph{$\ctxtwo$ is applicative and $\unf{\ctxtwo}=\ctx$}. By \refremark{compact-condition-2} this is equivalent to $\relunf{\vartwo}{\ctxtwo}=\var$. Note that:
			\begin{varenumerate}
				\item $\tm=\ctxtwop\vartwo\esub\var\tmthree$,
				\item $\ctxtwo\esub\var\tmthree$ is applicative,
				\item $\relunf{\vartwo}{\ctxtwo\esub\var\tmthree}=\relunf{\vartwo}{\ctxtwo}\isub\var{\unf\tmthree}=\var\isub\var{\unf\tmthree}=\unf\tmthree$ is an abstraction.

			\end{varenumerate}
			Then $\tm$ has a \opt\ redex, against hypothesis, absurd.
			
			\item \emph{There exists an applicative context $\apctx$ s.t. $\relunf{\vartwo}{\ctxtwo}=\apctxp\var$}.
			\begin{varenumerate}
				\item $\tm=\ctxtwop\vartwo\esub\var\tmthree$,
				\item the context $\ctxtwo\esub\var\tmthree$ can act on $\vartwo$ (also in the case that $\vartwo=\var$), and
				\item $\relunf{\vartwo}{\ctxtwo\esub\var\tmthree}=\relunf{\vartwo}{\ctxtwo}\isub\var{\unf\tmthree}=\apctxp\var\isub\var{\unf\tmthree}=\apctxp{\unf\tmthree}$ is a $\beta$-redex, because $\unf\tmthree$ is an abstraction.
			\end{varenumerate}
			Then $\tm$ has a \opt\ redex, against hypothesis, absurd.				
			
		\end{varenumerate}
	\end{varenumerate}
\end{varenumerate}
\end{proof}

The next lemma shows that useful reductions match their intended semantics, in the sense that every useful redex contributes somehow to a $\beta$-redex. It is not needed for the invariance result.

\begin{lemma}[Inverse Normal Form]
\label{l:nf-to-opt-nf}
	Let $\tm$ be a \lsc\ term s.t. $\unf{\tm}$ is a $\beta$-normal form. Then $\tm$ is a useful normal form.
\end{lemma}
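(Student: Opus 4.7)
The plan is to argue by contrapositive: assume $\tm$ has a useful redex and exhibit a $\beta$-redex in $\unf\tm$. I split on the type of the useful redex.

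If the useful redex is a $\db$-redex, so that $\tm=\ctxp{\sctxp{\l\var.\tmtwo}\tmthree}$, then \reflemma{ctx-unf}.\ref{p:ctx-unf-three} yields $\unf\tm = \unf\ctx\ctxholep{\relunf{\sctxp{\l\var.\tmtwo}\tmthree}{\ctx}}$. Since implicit substitutions commute with abstractions up to $\alpha$, the relative unfolding of $\sctxp{\l\var.\tmtwo}$ is still an abstraction; hence the relative unfolding of $\sctxp{\l\var.\tmtwo}\tmthree$ is a $\beta$-redex sitting in $\unf\tm$.

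If the useful redex is an $\ls$-step $\ctxp\var\tols\ctxp\tmthree$ in compact form, I first use that $\ls$-steps preserve unfoldings, so $\unf\tm=\unf{\ctxp\tmthree}=\unf\ctx\ctxholep{\relunf\tmthree\ctx}$ by \reflemma{ctx-unf}.\ref{p:ctx-unf-three}. The two clauses of the usefulness condition give two sub-cases. If $\relunf\tmthree\ctx$ already contains a $\beta$-redex, plugging preserves it and we are done. If instead $\relunf\tmthree\ctx$ is an abstraction and $\ctx$ is applicative, I must show that $\unf\ctx$ also places its hole in function position of an application, so that plugging the abstraction creates a $\beta$-redex.

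This \emph{applicative-preservation} sub-claim is the crux. Decomposing $\ctx = \ctxtwo\ctxholep{\sctx\tmfive}$ according to the applicative-context grammar, the key calculation is that the inner applicative fragment unfolds to $\ctxhole\,\unf\tmfive$, because each ES of $\sctx$ becomes an implicit substitution acting as the identity on the hole (using $\ctxhole\isub\var\tm=\ctxhole$ and $\unf{\ctxhole}=\ctxhole$). Propagating through $\ctxtwo$ via a context-level analogue of \reflemma{ctx-unf}.\ref{p:ctx-unf-three} then preserves the hole's position, yielding $\unf\ctx=\unf\ctxtwo\ctxholep{\ctxhole\,u}$ for a suitable term $u$. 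The main obstacle lies precisely in this context-into-context unfolding: the paper packages Lemma~\ref{l:ctx-unf} only for a term plugged into a context, so the extension must be handled carefully before the applicative structure can be tracked through $\ctxtwo$. Once this is in place, the rest of the proof is a routine case analysis.
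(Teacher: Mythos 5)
Your proposal is correct and follows essentially the same route as the paper: contraposition, a case split on whether the useful redex is a $\db$- or $\ls$-redex, the observation that shallow redexes cannot be erased by unfolding, and the key fact that applicative contexts unfold to applicative contexts. The only difference is one of detail: the ``applicative-preservation'' sub-claim you single out as the crux is exactly the observation the paper declares evident in order to avoid a formal induction, and your sketch of how to establish it (the hole absorbs the implicit substitutions of $\sctx$, so $\sctx\tmfive$ unfolds to $\ctxhole\,\unf\tmfive$ up to relative unfolding) is the right one.
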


\begin{proof}
By contraposition. Suppose that $\tm\to\tmtwo$ by reducing a useful redex. A formal proof would be by induction on the useful step, however there are high-level observations that allow to avoid a formal proof. Note that since shallow $\db$-redexes cannot be erased by unfolding (because they take place in a shallow context) they project on $\beta$-redexes. Thus, we can assume that the step is a $\ls$-step. Now, since applicative contexts unfold to applicative contexts, it is evident by the definition of useful $\ls$-redex that a useful redex unfolds to a $\beta$-redex (that, as for $\db$-redexes, cannot be erased by the unfolding because its position is shallow). In any case $\unf\tm$ is not $\beta$-normal.
\end{proof}

For the projection property, we first need to show that the \lo\ order is stable by unfolding. As for positions, we first show that the \lo\ order is stable by substitution.

\begin{lemma}[$\leftout$ and Substitution]
\label{l:ctx-lefttor-sub}
Let $\tm$ be a $\l$-term, $\ctx\prefix\tm$ and $\ctxtwo\prefix\tm$. If $\ctx\isub\var\tmtwo\leftout\ctxtwo\isub\var\tmtwo$ then $\ctx\leftout\ctxtwo$.
\end{lemma}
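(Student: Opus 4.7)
The plan is to reduce the backward-stability of $\leftout$ to its already-known forward-stability, via the trichotomy of \reflemma{lefttor-basic}. Applied to $\ctx,\ctxtwo\prefix\tm$, that lemma yields three cases: $\ctx\leftout\ctxtwo$ (the desired conclusion), $\ctx=\ctxtwo$, or $\ctxtwo\leftout\ctx$, and I would rule out the latter two using the hypothesis.

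The case $\ctx=\ctxtwo$ gives $\ctx\isub\var\tmtwo=\ctxtwo\isub\var\tmtwo$ and so contradicts the hypothesis, because $\leftout$ is irreflexive by inspection of its defining rules: the root rule of $\outin$ excludes $\ctxhole\outin\ctxhole$, and each base rule of $\leftright$ forces the holes of the two sides to sit in different immediate subterms, a property preserved by contextual closure.

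For the case $\ctxtwo\leftout\ctx$, I would establish the forward stability of $\leftout$ under $\isub\var\tmtwo$: if $\gctx\leftout\gctxtwo$ with $\gctx$ and $\gctxtwo$ prefixes of a common term, then $\gctx\isub\var\tmtwo\leftout\gctxtwo\isub\var\tmtwo$. The $\outin$-part is exactly \reflemma{inst-and-sub}; for the $\leftright$-part I would induct on the derivation of $\gctx\leftright\gctxtwo$, using \reflemma{ctx-unf}.\ref{p:ctx-unf-two} to commute $\isub\var\tmtwo$ with plugging---so, for instance, $(\ap\gctx\tmthree)\isub\var\tmtwo = \ap{\gctx\isub\var\tmtwo}{\tmthree\isub\var\tmtwo}$, and analogously for the Substitution and Contextual-closure rules, while preserving the prefix side-conditions via the same lemma. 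Applying forward stability to $\ctxtwo\leftout\ctx$ yields $\ctxtwo\isub\var\tmtwo\leftout\ctx\isub\var\tmtwo$, which together with the hypothesis $\ctx\isub\var\tmtwo\leftout\ctxtwo\isub\var\tmtwo$ (both sides being prefixes of $\tm\isub\var\tmtwo$, again by \reflemma{ctx-unf}.\ref{p:ctx-unf-two}) contradicts the asymmetry of $\leftout$; concretely, a second invocation of \reflemma{lefttor-basic} on the substituted contexts leaves only the two $\leftout$-alternatives, which cannot both hold by the same irreflexivity argument as above.

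The main obstacle is making the $\leftright$-part of forward stability watertight, since the derivation may be nested under arbitrarily many layers of contextual closure and $\isub\var\tmtwo$ must be threaded consistently through each of them. The bookkeeping is routine given \reflemma{ctx-unf}.\ref{p:ctx-unf-two}, but it is the one place where actual calculation is required; all other ingredients---trichotomy, irreflexivity of $\leftout$, and the $\outin$-case of forward stability---are either established earlier in the paper or immediate from the defining rules.
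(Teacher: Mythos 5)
Your route is genuinely different from the paper's: the paper proves the lemma by a direct induction on $\tm$, analysing in the application case where the two holes sit and reading off the order, whereas you reduce the backward implication to trichotomy (\reflemma{lefttor-basic}) plus irreflexivity plus forward stability of $\leftout$ under $\isub\var\tmtwo$. The irreflexivity argument (ruling out $\ctx=\ctxtwo$) is fine, and the forward-stability claim is true and provable as you sketch it, with \reflemma{inst-and-sub} handling $\outin$ and an induction on the derivation of $\leftright$ handling the other half.

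The gap is in the last step. Having derived both $\ctx\isub\var\tmtwo\leftout\ctxtwo\isub\var\tmtwo$ and $\ctxtwo\isub\var\tmtwo\leftout\ctx\isub\var\tmtwo$, you dismiss this by appealing to ``the asymmetry of $\leftout$'' and claim it follows from ``a second invocation of \reflemma{lefttor-basic}\ldots{} which cannot both hold by the same irreflexivity argument as above.'' This does not work: \reflemma{lefttor-basic} is a \emph{totality} statement --- at least one of the three alternatives holds --- and says nothing about mutual exclusivity; and irreflexivity of $\leftout$ only shows the two substituted contexts are distinct, not that $\gctx\leftout\gctxtwo$ and $\gctxtwo\leftout\gctx$ cannot hold simultaneously for distinct $\gctx,\gctxtwo$. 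What you actually need is asymmetry of $\leftout$ (in particular that $\gctx\outin\gctxtwo$ excludes $\gctxtwo\leftright\gctx$ and vice versa), which is true but is proved nowhere in the paper and does not follow from irreflexivity without transitivity or a direct positional analysis. Establishing it requires its own induction on the common term, of essentially the same shape and effort as the paper's direct proof --- so the reduction does not come for free, and as written the argument is incomplete at exactly this point.
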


\begin{proof}
	By induction on $\tm$. Cases:
	\begin{varenumerate}
		\item \emph{Variable}. If $\tm$ is a variable then both $\ctx$ and $\ctxtwo$ are the empty context, and so are the contexts $\ctx\isub\var\tmtwo$ and $\ctxtwo\isub\var\tmtwo$. Hence $\ctx\isub\var\tmtwo\not\leftout\ctxtwo\isub\var\tmtwo$ and the statement trivially holds.
		\item \emph{Abstraction $\l\var.\tmthree$}. It follows from the \ih.
		\item \emph{Application $\tmthree\tmfour$}. If $\ctx\isub\var\tmtwo$ is empty then $\ctx$ is empty and $\ctxtwo\isub\var\tmtwo$ is non-empty, that implies $\ctxtwo$ non-empty. Then $\ctx\leftout\ctxtwo$. If $\ctx\isub\var\tmtwo$ is non-empty then so are $\ctx$, $\ctxtwo\isub\var\tmtwo$, and $\ctxtwo$. We have $\tm\isub\var\tmtwo=\tmthree\isub\var\tmtwo\tmfour\isub\var\tmtwo$. Cases:
		\begin{varenumerate}
			\item \emph{$\ctx\isub\var\tmtwo$ and $\ctxtwo\isub\var\tmtwo$ both have their holes in $\tmthree\isub\var\tmtwo$}, \ie\ $\ctx\isub\var\tmtwo=\ctxthree\isub\var\tmtwo\tmfour\isub\var\tmtwo$ and $\ctxtwo\isub\var\tmtwo=\ctxfour\isub\var\tmtwo\tmfour\isub\var\tmtwo$ for some contexts $\ctxthree$ and $\ctxfour$. Since $\ctx\isub\var\tmtwo\leftout\ctxtwo\isub\var\tmtwo$ implies $\ctxthree\isub\var\tmtwo\leftout\ctxfour\isub\var\tmtwo$, by \ih\ we obtain $\ctxthree\leftout\ctxfour$ and so $\ctxthree\tmfour\leftout\ctxfour\tmfour$, \ie\ $\ctx\leftout\ctxtwo$.

			\item \emph{$\ctx\isub\var\tmtwo$ and $\ctxtwo\isub\var\tmtwo$ both have their holes in $\tmfour\isub\var\tmtwo$}. Analogous to the previous case.
			
			\item \emph{$\ctx\isub\var\tmtwo$ has its hole in $\tmthree\isub\var\tmtwo$ and $\ctxtwo\isub\var\tmtwo$ in $\tmfour\isub\var\tmtwo$}, \ie\ $\ctx\isub\var\tmtwo=\ctxthree\isub\var\tmtwo\tmfour\isub\var\tmtwo$ and $\ctxtwo\isub\var\tmtwo=\tmthree\isub\var\tmtwo\ctxfour\isub\var\tmtwo$ for some contexts $\ctxthree$ and $\ctxfour$. Then $\ctx=\ctxthree\tmfour$ and $\ctxtwo=\tmthree\ctxfour$, therefore $\ctx\leftout\ctxtwo$.
		\end{varenumerate}
	\end{varenumerate}
\end{proof}

\begin{lemma}[$\leftout$ and Unfolding]
\label{l:lefttor-prop} 
	Let $\tm$ be a \lsc\ term, $\ctx\prefix\tm$ and $\ctxtwo\prefix\tm$. If $\unf{\ctx}\leftout\unf{\ctxtwo}$ then $\ctx\leftout\ctxtwo$.
\end{lemma}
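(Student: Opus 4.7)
The plan is to proceed by induction on $\tm$, mirroring the structure of the proof of the preceding lemma (\emph{$\leftout$ and Substitution}) but invoking that lemma precisely in the one new case, namely when $\tm$ is a substitution. In the base case $\tm=\var$ both $\ctx$ and $\ctxtwo$ must equal $\ctxhole$, so $\unf\ctx=\unf\ctxtwo=\ctxhole$ and the hypothesis $\unf\ctx\leftout\unf\ctxtwo$ cannot hold; the statement is vacuously true. The abstraction case $\tm=\l\var.\tmthree$ follows immediately from the IH, since both contexts factor through the binder and unfolding commutes with abstraction.

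The application case $\tm=\tmthree\tmfour$ splits by the location of the holes of $\ctx$ and $\ctxtwo$. If either context is empty, then $\ctxhole\outin\ctxtwo$ (resp.\ $\ctxhole\outin\ctx$) gives the conclusion directly. If both holes lie in $\tmthree$, then $\ctx=\ctxthree\tmfour$ and $\ctxtwo=\ctxfour\tmfour$ with $\ctxthree,\ctxfour\prefix\tmthree$; unfolding commutes with application, so $\unf\ctxthree\,\unf\tmfour\leftout\unf\ctxfour\,\unf\tmfour$, from which $\unf\ctxthree\leftout\unf\ctxfour$ and then the IH gives $\ctxthree\leftout\ctxfour$, whence $\ctx\leftout\ctxtwo$ by contextual closure. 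The symmetric case (both holes in $\tmfour$) is analogous. The remaining mixed case, where the hole of $\ctx$ is in $\tmthree$ and that of $\ctxtwo$ in $\tmfour$ (or vice versa), is handled directly: the application clause of $\leftright$ gives $\ctxthree\tmfour\leftright\tmthree\ctxfour$, independently of the hypothesis.

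The substitution case $\tm=\tmthree\esub\var\tmfour$ is where the shallowness assumption and the previous lemma both pay off. Since contexts are shallow, neither $\ctx$ nor $\ctxtwo$ can have its hole inside the argument of the ES, so they factor uniquely as $\ctx=\ctxthree\esub\var\tmfour$ and $\ctxtwo=\ctxfour\esub\var\tmfour$ with $\ctxthree,\ctxfour\prefix\tmthree$. By definition of unfolding, $\unf\ctx=\unf\ctxthree\isub\var{\unf\tmfour}$ and $\unf\ctxtwo=\unf\ctxfour\isub\var{\unf\tmfour}$, so the hypothesis becomes $\unf\ctxthree\isub\var{\unf\tmfour}\leftout\unf\ctxfour\isub\var{\unf\tmfour}$. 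Applying the previous lemma (\emph{$\leftout$ and Substitution}) to the $\l$-terms $\unf\ctxthree$ and $\unf\ctxfour$ viewed as prefixes of $\unf\tmthree$ yields $\unf\ctxthree\leftout\unf\ctxfour$, and the IH on $\tmthree$ then gives $\ctxthree\leftout\ctxfour$. Contextual closure under $\esub\var\tmfour$ concludes $\ctx\leftout\ctxtwo$. The only delicate point is ensuring the factorisation through the ES, which relies on shallowness; everything else is a bookkeeping argument reducing the statement about unfoldings to the already established statement about meta-substitutions.
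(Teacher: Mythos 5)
Your proposal is correct and follows essentially the same route as the paper: induction on $\tm$, with the substitution case reduced to the preceding lemma ($\leftout$ and Substitution) via the shallowness-enforced factorisation $\ctx=\ctxthree\esub\var\tmfour$, and the remaining cases handled by the inductive hypothesis and contextual closure. The only cosmetic difference is that the paper dispatches the empty-context situations once and for all at the start rather than case by case, and, like you, it treats only the mixed application sub-case compatible with the hypothesis.
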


\begin{proof}
 By induction on $\tm$. Note that $\unf{\ctxtwo}$ cannot be the empty context, because there is no context $\ctx$ s.t. $\ctx\leftout\ctxhole$. Moreover, if $\unf{\ctx}$ is the empty context then $\ctx$ is the empty context and the statement is immediately verified, because $\ctxhole\leftout\ctxtwo$ for all contexts $\ctxtwo$. Thus we can always exclude the cases where $\unf{\ctx}$ or $\unf{\ctxtwo}$ is empty. Cases of $\tm$:
	\begin{varenumerate}
		\item \emph{Variable $\var$}. We have $\unf{\var}=\var$ and a variable does not admit two different contexts, so there is nothing to prove.
		
		\item \emph{Abstraction $\l\var.\tmtwo$}. Then $\unf{(\l\var.\tmtwo)}=\l\var.\unf{\tmtwo}$. Then $\ctx=\l\var.\ctxthree$ and $\ctxtwo=\l\var.\ctxfour$, with $\ctxthree\leftout\ctxfour$, and we conclude using the \ih\ and the closure by contexts of $\leftout$.
		
		\item \emph{Application $\tmtwo\tmthree$}. Then $\unf{(\tmtwo\tmthree)}=\unf{\tmtwo} \unf{\tmthree}$. If $\unf{\ctx}$ and $\unf{\ctxtwo}$ both have their hole in $\unf{\tmtwo}$ or both in $\unf{\tmthree}$ then we conclude using the \ih\ and the closure by contexts of $\leftout$. Otherwise $\unf{\ctx}\prefix\unf{\tmtwo}$ and $\unf{\ctxtwo}\prefix\unf{\tmthree}$. Then necessarily $\ctx\prefix\tmtwo$ and $\ctxtwo\prefix\tmthree$, hence $\ctx\leftout\ctxtwo$.
		
		\item \emph{Substitution $\tmtwo\esub\var\tmthree$}. Then $\unf{\tm}=\unf{\tmtwo\esub\var\tmthree}=\unf{\tmtwo}\isub\var{\unf{\tmthree}}$. Since contexts are shallow, $\ctx=\ctxthree\esub\var\tmthree$ and $\ctxtwo=\ctxfour\esub\var\tmthree$ for some contexts $\ctxthree\prefix\tmtwo$ and $\ctxfour\prefix\tmtwo$. Then $\unf{\ctx}=\unf{\ctxthree}\isub\var{\unf{\tmthree}}$ and $\unf{\ctxtwo}=\unf{\ctxfour}\isub\var{\unf{\tmthree}}$ and the hypothesis becomes $\unf{\ctxthree}\isub\var{\unf{\tmthree}}\leftout \unf{\ctxfour}\isub\var{\unf{\tmthree}}$. \reflemma{ctx-lefttor-sub} gives $\unf{\ctxthree}\leftout \unf{\ctxfour}$. The \ih\ gives $\ctxthree\leftout \ctxfour$, that implies $\ctxthree\esub\var\tmthree\leftout \ctxfour\esub\var\tmthree$, \ie\ $\ctx\leftout\ctxtwo$.
	\end{varenumerate}
\end{proof}

We also need the two following straightforward properties.

\begin{lemma}
\label{l:beta-sub-left}
Let $\tm$ and $\tmtwo$ be $\l$-terms. If $\tm\rtob\tmthree$ then $\tm\isub\var\tmtwo\rtob\tmtwo\isub\var\tmtwo$.
\end{lemma}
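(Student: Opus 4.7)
The plan is to unfold the definition of the root $\beta$-rule (the relation $\rtob$, by analogy with $\rtodb$ being the root rule whose closure by contexts yields $\todb$) and reduce the claim to the standard substitution lemma of the pure $\l$-calculus. No induction is required: at the root level there is only one shape of redex to inspect.

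Suppose $\tm \rtob \tmthree$. Then $\tm = \ap{(\l \varthree.\tmfour)}{\tmfive}$ and $\tmthree = \tmfour\isub{\varthree}{\tmfive}$ for some variable $\varthree$ and $\l$-terms $\tmfour, \tmfive$. By $\alpha$-conversion I may rename $\varthree$ to ensure $\varthree \neq \var$ and $\varthree \notin \fv(\tmtwo)$, so that pushing $\isub\var\tmtwo$ under the binder is safe. Then
\begin{align*}
\tm\isub\var\tmtwo
  & = (\ap{(\l \varthree.\tmfour)}{\tmfive})\isub\var\tmtwo
    = \ap{(\l\varthree.(\tmfour\isub\var\tmtwo))}{(\tmfive\isub\var\tmtwo)} \\
  & \rtob (\tmfour\isub\var\tmtwo)\isub{\varthree}{\tmfive\isub\var\tmtwo},
\end{align*}
and by the standard substitution lemma for $\l$-terms (see \cite{Barendregt84}), which applies precisely because $\varthree \neq \var$ and $\varthree \notin \fv(\tmtwo)$, this last term equals $(\tmfour\isub{\varthree}{\tmfive})\isub\var\tmtwo = \tmthree\isub\var\tmtwo$, which is what we wanted.

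There is no real obstacle in this argument: the whole content is the commutation of implicit substitution with the $\beta$ contractum, and the freshness conditions used are exactly those guaranteed by the paper's standing convention of working modulo $\alpha$-equivalence. The lemma will then be used together with the stability results for positions and applicative contexts (Lemmas \ref{l:apctx-stable}--\ref{l:lefttor-prop}) to prove that useful steps project to genuine $\beta$-steps, which is the projection property of the high-level implementation system.
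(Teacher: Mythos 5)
Your proof is correct and is essentially the paper's own argument: both unfold the unique shape of a root $\beta$-redex, push the substitution through the abstraction and application, fire the root step, and conclude by the standard substitution lemma. The only difference is cosmetic --- you make the $\alpha$-renaming/freshness side conditions explicit, which the paper leaves to its standing convention.
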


\begin{proof}
If $\tm= (\l\vartwo.\tmfour)\tmfive\rtob\tmfour\isub{\vartwo}{\tmfive}=\tmthree$ then:
\[\begin{array}{llllllll}
\tm\isub\var\tmtwo&=&((\l\vartwo.\tmfour)\tmfive)\isub\var\tmtwo &=\\
&&(\l\vartwo.\tmfour\isub\var\tmtwo)\tmfive\isub\var\tmtwo&=\\
&&\tmfour\isub\var\tmtwo\isub{\vartwo}{\tmfive\isub\var\tmtwo}&=\\
&&\tmfour\isub{\vartwo}{\tmfive}\isub\var\tmtwo&=&\tmthree\isub\var\tmtwo
\end{array}\]
\end{proof}

\begin{lemma}
\label{l:rel-unf-and-sub-commute}
Let $\tm$ be a \lsc\ term and $\sctx$ be a substitution context. Then $\relunf{\tm}{\ctxp{\sctx}}=\relunf{\sctxp\tm}{\ctx}$
\end{lemma}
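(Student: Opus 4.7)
The plan is to proceed by induction on $\ctx$, reducing the base case to an auxiliary induction on $\sctx$.

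In the base case $\ctx = \ctxhole$, context plugging gives $\ctxhole\ctxholep{\sctx} = \sctx$, so the left-hand side collapses to $\relunf{\tm}{\sctx}$ and the right-hand side, by the clause $\relunf{-}{\ctxhole} \defeq \unf{-}$, becomes $\unf{\sctxp{\tm}}$. The residual identity $\relunf{\tm}{\sctx} = \unf{\sctxp{\tm}}$ is then proved by a second induction on $\sctx$: when $\sctx = \ctxhole$ both sides reduce to $\unf{\tm}$; when $\sctx = \sctx'\esub{\var}{\tmtwo}$ for some substitution context $\sctx'$, applying the two relevant clauses of the definitions gives $\relunf{\tm}{\sctx'}\isub{\var}{\unf{\tmtwo}}$ on the left and $\unf{\sctx'\ctxholep{\tm}\esub{\var}{\tmtwo}} = \unf{\sctx'\ctxholep{\tm}}\isub{\var}{\unf{\tmtwo}}$ on the right, which agree by the inner inductive hypothesis.

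For the inductive step on $\ctx$, the key observation is that context plugging only replaces the hole, so $\ctxp{\sctx}$ shares its outermost constructor with $\ctx$ whenever $\ctx \neq \ctxhole$. The definition of $\relunf{-}{-}$ then peels off this outer layer identically on both sides: the abstraction and the two application cases simply discard the outer constructor, reducing both sides to $\relunf{\tm}{\ctx_0\ctxholep{\sctx}}$ and $\relunf{\sctxp{\tm}}{\ctx_0}$ respectively, which coincide by the outer inductive hypothesis on $\ctx_0$; the substitution case $\ctx = \ctx_0\esub{\vartwo}{\tmthree}$ factors out the same $\isub{\vartwo}{\unf{\tmthree}}$ on both sides and again closes by the inductive hypothesis on $\ctx_0$.

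The only subtlety to watch for is that $\ctxp{\sctx'\esub{\var}{\tmtwo}}$ is \emph{not} in general equal to $\ctxp{\sctx'}\esub{\var}{\tmtwo}$, since the outer substitution of $\sctx$ does not migrate past the structure of $\ctx$ (e.g.\ for $\ctx = \ctxhole\,\tmthree$ one gets $\sctx'\esub{\var}{\tmtwo}\,\tmthree$ rather than $\sctx'\tmthree\,\esub{\var}{\tmtwo}$). Inducting on $\ctx$ rather than on $\sctx$ sidesteps this trap, and no further obstacle arises.
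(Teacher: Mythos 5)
Your proof is correct and follows the same route as the paper, which simply says ``by induction on $\ctx$'': all the case analyses check out against the definitions of $\unf{(\cdot)}$ and $\relunf{(\cdot)}{(\cdot)}$, and the inner induction on $\sctx$ for the base case (establishing $\relunf{\tm}{\sctx}=\unf{\sctxp\tm}$) is a legitimate way to fill in the one detail the outer induction does not directly provide (it could alternatively be read off from \reflemma{ctx-unf}, since $\unf{\sctx}=\ctxhole$ for any substitution context).
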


\begin{proof}
By induction on $\ctx$.
\end{proof}

We now dispose of all the ingredients for the proof of the key lemma on which the projection theorem relies on. We use $\rtob$ for $\beta$-reduction at top level.

\begin{lemma}[\lou\ $\db$-Step Projects on $\toblo$]
	\label{l:useful-projection}

	Let $\tm$ be a \lsc\ term and $\redex:\tm=\ctxp {\tmthree}\todb\ctxp{\tmfour}=\tmtwo$ with $\tmthree\rtodb\tmfour$. Then:
	\begin{varenumerate}

		\item \emph{Projection}: $\unf{\redex}:\unf{\tm}=\unf{\ctx}\ctxholep{\relunf{\tmthree}{\ctx}}\tob\unf{\ctx}\ctxholep{\relunf{\tmfour}{\ctx}}=\unf{\tmtwo}$ with $\relunf{\tmthree}{\ctx}\rtob\relunf{\tmfour}{\ctx}$;

		\item \emph{Minimality}: if moreover $\redex$ is the \lou\ redex in $\tm$ then $\unf{\redex}$ is the \lo\ $\beta$-redex in $\unf{\tm}$.
	\end{varenumerate}
\end{lemma}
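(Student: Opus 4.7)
The plan is to prove Part 1 (projection) by induction on the shallow context $\ctx$, reducing to showing $\relunf{\tmthree}{\ctx} \rtob \relunf{\tmfour}{\ctx}$ once \reflemma{ctx-unf}.\refpoint{ctx-unf-three} has been used to rewrite both $\unf{\tm}$ and $\unf{\tmtwo}$. Part 2 (minimality) is then proved by contradiction using \reflemma{lefttor-prop}: any $\beta$-redex of $\unf{\tm}$ that is $\leftout$-smaller than $\unf{\redex}$ is traced back to a useful redex of $\tm$ that is $\leftout$-smaller than $\redex$, contradicting the \lou-hypothesis on $\redex$.

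For Part 1, unpack the $\db$-redex as $\tmthree = \ap{\sctxp{\l\var.\tmfive}}{\tmsix}$ and $\tmfour = \sctxp{\tmfive\esub{\var}{\tmsix}}$. The base case $\ctx = \ctxhole$ asks for $\unf{\tmthree} \rtob \unf{\tmfour}$. A short sub-induction on $\sctx$ using $\unf{\sctx} = \ctxhole$ (a direct computation from the definition of unfolding on contexts) and the freshness of $\var$ (by $\alpha$-conversion) yields $\unf{\tmthree} = \ap{(\l\var.\relunf{\tmfive}{\sctx})}{\unf{\tmsix}}$, whose $\rtob$-reduct $\relunf{\tmfive}{\sctx}\isub{\var}{\unf{\tmsix}}$ matches $\unf{\tmfour}$ after the standard substitution-commutation lemma (the required freshness hypotheses are again supplied by $\alpha$-equivalence). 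The inductive step is straightforward: for $\ctx$ equal to $\l\vartwo.\ctx'$, $\ap{\ctx'}{\tm'}$, or $\ap{\tm'}{\ctx'}$ the definition of $\relunf{\cdot}{\cdot}$ gives $\relunf{\tmthree}{\ctx} = \relunf{\tmthree}{\ctx'}$ (and likewise for $\tmfour$), closing by \ih; for $\ctx = \ctx'\esub{\vartwo}{\tm'}$ we have $\relunf{\tmthree}{\ctx} = \relunf{\tmthree}{\ctx'}\isub{\vartwo}{\unf{\tm'}}$ and \reflemma{beta-sub-left} transports the top-level $\beta$-step across the outer substitution.

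For Part 2, assume by contradiction that $\unf{\redex}$ is not the leftmost-outermost $\beta$-redex of $\unf{\tm}$: some $\beta$-redex sits at a position $\ctx_0$ of $\unf{\tm}$ with $\ctx_0 \leftout \unf{\ctx}$. The aim is to exhibit a useful redex $\redextwo$ of $\tm$ at a position $\ctxtwo$ with $\unf{\ctxtwo} \leftout \unf{\ctx}$; \reflemma{lefttor-prop} then yields $\ctxtwo \leftout \ctx$, against the \lou-ness of $\redex$. To build $\ctxtwo$, trace the abstraction-applied-to-argument $\beta$-redex at $\ctx_0$ back into $\tm$ by a structural analysis in the spirit of \refprop{opt-nf-to-nf}: the left-hand abstraction of the $\beta$-redex must arise in $\unf{\tm}$ either (a) directly from a $\sctx$-prefixed abstraction already present in $\tm$ at the matching position, yielding a shallow $\db$-redex of $\tm$ (useful by definition) at a position $\ctxtwo$ with $\unf{\ctxtwo} = \ctx_0$; or (b) indirectly via a variable occurrence in $\tm$ whose relative unfolding produces the offending abstraction (or even the whole $\beta$-redex), yielding a useful $\ls$-step at a position $\ctxtwo$ with $\unf{\ctxtwo} \outin \ctx_0$. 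In either case, composing with $\ctx_0 \leftout \unf{\ctx}$ and using the elementary fact that $\gctx \outin \gctxtwo \leftout \gctxthree$ entails $\gctx \leftout \gctxthree$ gives $\unf{\ctxtwo} \leftout \unf{\ctx}$ as required.

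The main obstacle is precisely this trace-back step, which needs an analog of \reflemma{varoc-relunf-new} targeting abstraction positions sitting inside applicative contexts rather than variable occurrences, together with careful bookkeeping through the nested $\ls$-unfoldings in $\tm$ that may have contributed to producing the outer $\beta$-redex in $\unf{\tm}$, so as to certify the $\outin$-relationship between $\unf{\ctxtwo}$ and $\ctx_0$ in case (b).
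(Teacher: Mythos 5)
Your Part~1 is essentially the paper's proof: induction on $\ctx$, a direct computation with commuting substitutions in the base case, and \reflemma{beta-sub-left} for the case $\ctx=\ctxtwo\esub{\vartwo}{\tm'}$. That part is fine.

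Part~2 has a genuine gap, and you have in fact named it yourself: the entire argument rests on a ``trace-back'' step --- given an arbitrary $\beta$-redex of $\unf{\tm}$ at a position $\ctx_0\leftout\unf{\ctx}$, produce a \emph{useful} redex of $\tm$ at a position $\ctxtwo$ with $\unf{\ctxtwo}\outin\ctx_0$ or $\unf{\ctxtwo}=\ctx_0$ --- and you do not prove it, observing only that it ``needs an analog of \reflemma{varoc-relunf-new} targeting abstraction positions'' plus ``careful bookkeeping through the nested $\ls$-unfoldings.'' That missing lemma is not a routine variation: the redex at $\ctx_0$ may live entirely inside $\relunf{\vartwo}{\ctxfour}$ for some variable occurrence $\vartwo$ of $\tm$, possibly through several layers of substitution, and one must simultaneously certify usefulness of the resulting $\ls$-step and the $\outin$-relation between positions. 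This is comparable in difficulty to the statement being proved (it is essentially the position-aware strengthening of \reflemma{nf-to-opt-nf}), so reducing to it does not constitute a proof.

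The paper avoids ever needing such a global trace-back by proving minimality \emph{inside} the same induction on $\ctx$ as projection. The inductive hypothesis already gives that $\unf{\redex}$ is \lo\ in the unfolding of the immediate subterm, so at each step only the redexes that become newly visible ``to the left'' must be excluded, and these are severely constrained: in the application cases the only candidate is a root $\beta$-redex, ruled out by analysing whether the left subterm is $\sctxp{\l\var.\tmfive}$ or $\sctxp{\var}$ (using \refprop{opt-nf-to-nf} in the right-of-application case to know $\unf{\tmthree}$ is normal); in the substitution case $\ctx=\ctxtwo\esub{\var}{\tmfour}$ the new \lo\ redex must have been \emph{created by} $\isub{\var}{\unf{\tmfour}}$, hence it involves an occurrence of $\var$, which is exactly what \reflemma{varoc-relunf-new} --- a lemma about \emph{variable} occurrences, not abstraction positions --- can trace back, after which \reflemma{lefttor-prop} transfers the order. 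If you want to salvage your global strategy you would have to state and prove the trace-back lemma in full generality; the shorter path is to interleave minimality with the induction on $\ctx$ as the paper does.
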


The first point is an ordinary projection of reductions. The second one is instead involved, as it requires to prove that if $\unf{\redex}$ is not \lo\ then $\redex$ is not \lou, \ie\ to be able to somehow trace \lo\ redexes back through unfoldings. The proof is by induction on $\ctx$, that by hypothesis is the position of the \lou\ redex. The difficult case---not surprisingly---is when $\ctx=\ctxtwo\esub\var\tmfour$, and where the two lemmas relating the unfolding with positions (\reflemma{varoc-relunf-new}) and the order (\reflemma{lefttor-prop}) are applied. The proof also uses the normal form property, when the position $\ctx$ is on the argument $\tmfour$ of an application $\tmthree\tmfour$. Since $\redex$ is \lou, $\tmthree$ is useful-normal. To prove that $\unf{\redex}$ is the \lo\ $\beta$ redex in $\unf{(\tmthree\tmfour)}=\unf{\tmthree}\unf{\tmfour}$ we use the fact that $\unf{\tmthree}$ is normal.

\begin{proof}
By induction on $\ctx$. \emph{Notation}: if $\sctx=\esub{\var_1}{\tmthree_1}\ldots\esub{\var_n}{\tmthree_n}$ we denote with $\sctximp$ the list of implicit substitutions $\isub{\var_1}{\tmthree_1}\ldots\isub{\var_n}{\tmthree_n}$. Cases:
\begin{varenumerate}
	\item \emph{Empty context $\ctx=\ctxhole$}. Then $\tm=\tmp=\sctxp{\l\var.\tmthree}\tmfour$, $\unf{\tm}=\relunf{\tmp}{\ctx}$, $\tmtwo=\tmtwop$, $\unf{\tmtwo}=\relunf{\tmtwop}{\ctx}$, and
		\begin{center}
			\commDiagramRed{$\tm=\sctxp{\l\var.\tmthree}\tmfour$            }{
											$\sctxp{\tmthree\esub\var\tmfour}=\tmtwo$		       }{
											$\unf{\tm}=(\l\var.\unf{\tmthree}\sctximp) (\unf{\tmfour})   $}{
											$\unf{\tmthree}\sctximp \isub\var{\unf{\tmfour}}=\unf{\tmthree}\isub\var{\unf{\tmfour}}\sctximp   =\unf{\tmtwo}$ }{
											$\db$}{$\beta$}{\unfsym}{\unfsym}
		\end{center}
		where the first equality in the South-East corner is given by the fact that $\var$ does not occur in $\sctx$ and the variables on which $\sctximp$ substitutes do not occur in $\tmfour$, as it is easily seen by looking at the starting term. Thus the implicit substitutions $\sctximp$ and $ \isub\var{\unf{\tmfour}}$ commute. The redex $\redex$ is \lou\ and $\unf{\redex}$ is the \lo\ $\beta$-redex in $\unf{\tm}$.

	\item \emph{Abstraction $\ctx=\l\var.\ctxtwo$}. It follows immediately by the \ih.
	
	\item \emph{Left of an application $\ctx=\ctxtwo\tmthree$}. By \reflemma{ctx-unf}.\ref{p:ctx-unf-three} we know that $\unf{\tm}=\unf{\ctxtwo}\ctxholep{\relunf{\tmp}{\ctxtwo}}\unf{\tmthree}$. Using the \ih\ we derive the following diagram:		
		\begin{center}
			\commDiagramRed{$
											\ctxtwop\tmp\tmthree
											$}{$
											\ctxtwop\tmtwop\tmthree
											$}{$
											\unf{\ctxtwo}\ctxholep{\relunf{\tmp}{\ctxtwo}}\unf{\tmthree}
											$}{$
											\unf{\ctxtwo}\ctxholep{\relunf{\tmtwop}{\ctxtwo}}\unf{\tmthree}
											$}{$
											\db$}{$\beta$}{\unfsym}{\unfsym}
		\end{center}
		Suppose that the redex $\redex$ reduced in the top side of the diagram is the \lou\ redex in $\tm$, and so in $\ctxtwop\tmp$. The \ih\ also gives that the  redex $\unf{\redex}$ reduced in the bottom side is \lo\ in $\unf{\ctxtwo}\ctxholep{\relunf{\tmp}{\ctxtwo}}$. Suppose it is not \lo\ in $\unf{\tm}$. This is only possible if $\unf{\ctxtwo}\ctxholep{\relunf{\tmp}{\ctxtwo}}$ is an abstraction so that $\unf{\tm}$ is a $\beta$-redex. Note that $\ctxtwop\tmp$ is not of the form $\sctxp{\l\var.\tmfive}$, otherwise the step $\ctxtwop\tmp\tmthree\todb\ctxtwop\tmtwop\tmthree$ would not be the \lou\ step in $\tm$. Then $\ctxtwop\tmp$ has the form $\sctxp\var$ (because a term of the form $\sctxp{\tmfour\tmfive}$ would not unfold into an abstraction), which is $\todb$-normal, absurd.

\item \emph{Right of an application $\ctx=\tmthree\ctxtwo$}. By \reflemma{ctx-unf}.\ref{p:ctx-unf-three} we know that $\unf{\tm}=\unf{\tmthree}\unf{\ctxtwo}\ctxholep{\relunf{\tmp}{\ctxtwo}}$. Using the \ih\ we derive the following diagram:		
		\begin{center}
			\commDiagramRed{$
											\tmthree\ctxtwop\tmp
											$}{$
											\tmthree\ctxtwop\tmtwop
											$}{$
											\unf{\tmthree}\unf{\ctxtwo}\ctxholep{\relunf{\tmp}{\ctxtwo}}
											$}{$
											\unf{\tmthree}\unf{\ctxtwo}\ctxholep{\relunf{\tmtwop}{\ctxtwo}}
											$}{$
											\db$}{$\beta$}{\unfsym}{\unfsym}
		\end{center}
		Suppose that the redex $\redex$ reduced in the top side of the diagram is the \lou\ redex in $\tm$, and so in $\ctxtwop\tmp$. The \ih\ also gives that the  redex $\unf{\redex}$ reduced in the bottom side is \lo\ in $\unf{\ctxtwo}\ctxholep{\relunf{\tmp}{\ctxtwo}}$. Suppose it is not \lo\ in $\unf{\tm}$. Note that $\tmthree$ is an \opt\ normal form. Then by \refprop{opt-nf-to-nf} $\unf{\tmthree}$ is a $\beta$-normal form. The only possibility is that $\unf{\tmthree}$ is an abstraction so that $\unf{\tm}$ has a root $\beta$-redex. Note that $\tmthree$ is not of the form $\sctxp{\l\var.\tmfive}$, otherwise the step $\tmthree\ctxtwop\tmp\todb\tmthree\ctxtwop\tmtwop$ would not be the \lou\ step in $\tm$. Then $\tmthree$ has the form $\sctxp\var$ (because a term of the form $\sctxp{\tmfour\tmfive}$ would not unfold into an abstraction) and $\sctx$ can act on $\var$ substituting a term $\tmsix$ s.t. $\relunf{\tmsix}{\sctx}$ is an abstraction. Since that occurrence of $\var$ is in an applicative context in $\tm$ we get that $\tm$ has an \opt\ redex on the left of $\redex$, absurd.

	\item \emph{Substitution $\ctx=\ctxtwo\esub\var\tmfour$}. By \ih\ $\unf{\ctxtwo}\ctxholep{\relunf{\tmp}{\ctxtwo}}\tob \unf{\ctxtwo}\ctxholep{\relunf{\tmtwop}{\ctxtwo}}$
	Then:
\[\begin{array}{llllllll}
\unf{\tm}&=&
\unf{\ctxtwo}\ctxholep{\relunf{\tmp}{\ctxtwo}}\isub\var{\unf\tmfour} &= && \reflemma{ctx-unf}.\ref{p:ctx-unf-two}\\
&&\unf{\ctxtwo}\isub\var{\unf\tmfour}\ctxholep{\relunf{\tmp}{\ctxtwo}\isub\var{\unf\tmfour}}
&\tob & &\reflemma{beta-sub-left}\\
&&\unf{\ctxtwo}\isub\var{\unf\tmfour}\ctxholep{\relunf{\tmtwop}{\ctxtwo}\isub\var{\unf\tmfour}} &=&& \reflemma{ctx-unf}.\ref{p:ctx-unf-two}\\
&&\unf{\ctxtwo}\ctxholep{\relunf{\tmtwop}{\ctxtwo}}\isub\var{\unf\tmfour}&=&
\unf{\tmtwo}
\end{array}\]

	Suppose that the redex $\redex$ reduced in $\tm$ is \lo\ but the redex $\unf{\redex}$ reduced in $\unf{\tm}$ is not. The \ih\ also gives that the redex $\unf{\ctxtwo}\ctxholep{\relunf{\tmp}{\ctxtwo}}\tob \unf{\ctxtwo}\ctxholep{\relunf{\tmtwop}{\ctxtwo}}$ is  \lo. Consequently, the \lo\ redex of $\unf{\tm}$ has been created by the substitution $\isub\var{\unf\tmfour}$, \ie\ $\unf{\ctxtwo}\ctxholep{\relunf{\tmp}{\ctxtwo}}=\ctxthreep\var$ with $\ctxthree\leftout\unf{\ctxtwo}$ and ${\unf\tmfour}$ has a $\beta$-redex or $\unf\tmfour$ is an abstraction and $\ctxthree$ is applicative. Before treating the two cases separately, we deal with some common facts. By \reflemma{varoc-relunf-new} there is a context $\ctxfour$ s.t. $\ctxtwop\tmp=\ctxfourp\vartwo$ and $\unf{\ctxfour}\outin\ctxthree$. By $\ctxthree\leftout\unf{\ctxtwo}$ we obtain $\unf{\ctxfour}\leftout\unf{\ctxtwo}$, and by \reflemma{lefttor-prop} $\ctxfour\leftout\ctxtwo$. Then $\ctxfour\esub\var\tmfour\leftout\ctxtwo\esub\var\tmfour=\ctx$. To conclude we need to show that $\ctxfour\esub\var\tmfour$ is the position of an \opt\ redex, that being on the left of $\ctx$ would give us a contradiction. Cases:
	
	\begin{varenumerate}
		\item \emph{$\unf\tmfour$ has a $\beta$-redex}. From $\unf{\ctxtwo}\ctxholep{\relunf{\tmp}{\ctxtwo}}=\ctxthreep\var$, $\ctxtwop\tmp=\ctxfourp\vartwo$ and $\unf{\ctxfourp\vartwo}=_{\reflemma{ctx-unf}.\ref{p:ctx-unf-two}}\unf{\ctxfour}\ctxholep{\relunf{\vartwo}{\ctxfour}}$ it follows that $\unf{\ctxfour}\ctxholep{\relunf{\vartwo}{\ctxfour}}=\ctxthreep\var$, and from $\unf{\ctxfour}\outin\ctxthree$ we obtain $\var\in\fv{\relunf{\vartwo}{\ctxfour}}$. So, $\unf\tmfour$ is a subterm of $\relunf{\vartwo}{\ctxfour\esub\var\tmfour}=\relunf{\vartwo}{\ctxfour}\isub\var{\unf\tmfour}$. Summing up:
		\begin{varenumerate}
			\item $\tm=\ctxfourp{\vartwo}\esub\var\tmfour$,
			\item $\relunf{\vartwo}{\ctxfour\esub\var\tmfour}$ contains a $\beta$-redex.
		\end{varenumerate} 
		
		Then $\ctxfour\esub\var\tmfour$ is the position of a \opt\ redex in $\tm$ on the left of $\redex$, absurd.
		
		\item \emph{$\unf\tmfour$ is an abstraction and $\ctxthree$ is applicative}. By \reflemma{varoc-relunf-new}.\ref{p:varoc-relunf-new-2} there are two sub-cases:
		
		\begin{varenumerate}
			\item \emph{$\ctxfour$ is applicative and $\unf{\ctxfour}=\ctxthree$}. By \refremark{compact-condition-2} we obtain $\relunf{\vartwo}{\ctxfour}=\var$, and so:
			\begin{varenumerate}
				\item $\tm=\ctxfourp{\vartwo}\esub\var\tmfour$,
				
				\item $\ctxfour\esub\var\tmfour$ is an applicative context,
				
				\item $\relunf{\vartwo}{\ctxfour\esub\var\tmfour}=\relunf{\vartwo}{\ctxfour}\isub\var{\unf\tmfour}=\var\isub\var{\unf\tmfour}=\unf\tmfour$ is an abstraction.
			\end{varenumerate}
			Then $\tm$ has a \opt\ redex on the left of $\redex$, absurd.
			
			\item \emph{there exists an applicative context $\apctx$ s.t. $\relunf{\vartwo}{\ctxfour}=\apctxp\var$}. Then:
			\begin{varenumerate}
				\item $\tm=\ctxfourp{\vartwo}\esub\var\tmfour$,
				
				\item $\relunf{\vartwo}{\ctxfour\esub\var\tmfour}=\relunf{\vartwo}{\ctxfour}\isub\var{\unf\tmfour}=\apctxp\var\isub\var{\unf\tmfour}=\apctxp{\unf\tmfour}$ is a $\beta$-redex.
			\end{varenumerate}
			Then $\tm$ has a \opt\ redex on the left of $\redex$, absurd.
		\end{varenumerate}
	\end{varenumerate}
	
\end{varenumerate}
\end{proof}

Projection of derivations now follows as an easy induction:

\begin{theorem}[Projection]
	\label{tm:projection}
	Let $\tm$ be a \lsc\ term and $\deriv:\tm\tolou^*\tmtwo$. Then there is a \lo\ $\beta$-derivation $\unf{\deriv}:\unf{\tm}\tob^*\unf{\tmtwo}$ s.t. $\size{\unf{\deriv}}=\sizedb{\deriv}$.
\end{theorem}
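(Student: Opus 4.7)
The plan is induction on the length $k = \size\deriv$. The base case $k = 0$ is immediate: $\unf\deriv$ is the empty derivation at $\unf\tm$, trivially \lo, and $\size{\unf\deriv} = \sizedb\deriv = 0$.

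For the inductive step, decompose $\deriv = \derivtwo; \redex$ with $\derivtwo : \tm \tolou^{k-1} \tmthree$ and $\redex : \tmthree \tolou \tmtwo$. By the induction hypothesis, there is a \lo\ $\beta$-derivation $\unf\derivtwo : \unf\tm \toblo^* \unf\tmthree$ with $\size{\unf\derivtwo} = \sizedb\derivtwo$. If $\redex$ is a $\db$-step, both points of \reflemma{useful-projection} apply (since $\redex$ is \lou\ in $\tmthree$), yielding a \lo\ $\beta$-step $\unf\redex : \unf\tmthree \toblo \unf\tmtwo$; setting $\unf\deriv \defeq \unf\derivtwo; \unf\redex$ produces a \lo\ $\beta$-derivation with $\size{\unf\deriv} = \size{\unf\derivtwo} + 1 = \sizedb\derivtwo + 1 = \sizedb\deriv$. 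If instead $\redex$ is a $\ls$-step, I rely on the auxiliary fact that $\tols$-reduction preserves unfolding, i.e.\ $\tmthree \tols \tmtwo$ implies $\unf\tmthree = \unf\tmtwo$; with this, $\unf\deriv \defeq \unf\derivtwo$ already satisfies the conclusion, since $\sizedb\deriv = \sizedb\derivtwo$ and $\size{\unf\deriv} = \size{\unf\derivtwo}$.

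The only piece of content not already established in the excerpt is the preservation of $\unf{\cdot}$ by $\tols$-steps, a standard property of the \lsc. I would prove it by induction on the outer context of the compact form $\ctxtwop{(\ctxp\var)\esub\var\tmfive} \tols \ctxtwop{(\ctxp\tmfive)\esub\var\tmfive}$, using \reflemma{ctx-unf} to commute plugging with unfolding, together with the $\tols$ rule's side condition that no free variable of $\tmfive$ is captured in $\ctx$; this ensures that the implicit substitutions introduced by the unfolding commute with the substitution for $\var$ in the expected way, so that both sides unfold to the same $\l$-term. Since the delicate alignment of redex positions through unfolding has already been carried out in \reflemma{useful-projection}, this auxiliary lemma is essentially a routine computation, and the theorem presents no genuine conceptual obstacle beyond \reflemma{useful-projection} itself.
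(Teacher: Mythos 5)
Your proof is correct and follows essentially the same route as the paper: induction on $\size\deriv$, discarding $\ls$-steps because they preserve the unfolding, and invoking \reflemma{useful-projection} for $\db$-steps. The only difference is that you explicitly flag and sketch the auxiliary fact that $\tols$-steps preserve $\unf{\cdot}$, which the paper simply asserts ("then $\unf{\tmfour}=\unf{\tmtwo}$, and there is nothing to prove"); your extra care there is harmless and arguably an improvement.
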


\begin{proof}
	 By induction on the length $k$ of $\deriv$. If $k=0$ the statement is trivially true. If $k>0$ then $\deriv$ has the form $\tm\to^k\tmfour\to\tmtwo$ and the prefix $\tm\to^k\tmfour$ by \ih\ verifies the statement. Cases of $\tmfour\to\tmtwo$:
	\begin{varenumerate}
		\item $\tmfour\tols\tmtwo$: then $\unf{\tmfour}=\unf{\tmtwo}$, and there is nothing to prove. 
			
		\item $\tmfour\todb\tmtwo$: by \reflemma{useful-projection} such a step unfolds to exactly one \lo\ $\beta$-step $\unf{\tmfour}\tob\unf{\tmtwo}$, that together with the \ih\ proves the statement.
	\end{varenumerate}
\end{proof}

\ignore{
\begin{lemma}
	Let $\tm$ be a useful normal form and consider $\deriv:\tm\to^*\tmtwo$. Then $\tmtwo$ is a useful normal form and $\sizedb\deriv=0$.
	\end{varenumerate}
\end{lemma}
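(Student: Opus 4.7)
The plan is to chain the two directions of the normal-form correspondence (\refprop{opt-nf-to-nf} and \reflemma{nf-to-opt-nf}) across steps that leave the unfolding unchanged. The starting observation is that since $\tm$ is a useful normal form, \refprop{opt-nf-to-nf} gives that $\unf\tm$ is a $\beta$-normal form. Moreover, every $\db$-step is useful by definition, so a useful normal form contains \emph{no} $\db$-redex at all; the only redexes of $\tm$ (if any) are therefore useless $\lssym$-redexes.

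I would then proceed by induction on the length $k$ of $\deriv$. The case $k=0$ is immediate. For the inductive step, decompose $\deriv$ as $\tm\to\tmthree\to^*\tmtwo$. By the observation above the first step must be a useless $\lssym$-step, and such a step leaves the unfolding invariant: $\unf\tm=\unf\tmthree$ (a routine property of $\lssym$-reduction in the LSC, already used implicitly in the proof of \reftm{projection} where $\lssym$-steps are erased under unfolding). Hence $\unf\tmthree$ is still $\beta$-normal, and \reflemma{nf-to-opt-nf} applied to $\tmthree$ gives that $\tmthree$ is a useful normal form. The inductive hypothesis applied to $\tmthree\to^*\tmtwo$ then yields that $\tmtwo$ is a useful normal form and that this suffix contains no $\db$-step. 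Since the first step is a $\lssym$-step as well, $\sizedb\deriv=0$.

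The only real subtlety is the justification that the first step cannot be a $\db$-step and that $\lssym$-steps preserve unfoldings; both are small facts but they are what carries the whole argument, because once one has them the result is just the composition of the direct and inverse normal-form lemmas. I expect the cleanest presentation is to state as a preliminary observation that $\unf\cdot$ is invariant under $\tols$ and that useful-normality excludes $\db$-redexes, and then let the induction run as above with no further work.
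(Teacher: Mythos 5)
Your proof is correct, but it takes a genuinely different route from the one in the paper. The paper's argument peels off the \emph{last} step, obtains by the inductive hypothesis that the penultimate term is a useful normal form (hence $\db$-normal, so the last step is an $\lssym$-step), and then tries to show directly that a single $\lssym$-step out of a useful normal form lands again on a useful normal form by a secondary induction on the position of that step, with an explicit case analysis on the surrounding context. You instead transfer useful-normality through the unfolding: the two normal-form lemmas (\refprop{opt-nf-to-nf} and \reflemma{nf-to-opt-nf}) together characterise useful normal forms exactly as the terms whose unfolding is $\beta$-normal, and since $\lssym$-steps leave the unfolding unchanged (a fact the paper itself invokes in the proof of \reftm{projection}) the characterisation propagates along the derivation with no case analysis at all. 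Your preliminary observations are both sound: $\db$-redexes are useful by definition, so a useful normal form is $\db$-normal and the derivation consists only of $\lssym$-steps, giving $\sizedb\deriv=0$. What your approach buys is modularity and brevity --- the syntactic induction on positions disappears entirely; what it costs is a dependence on \reflemma{nf-to-opt-nf}, the inverse normal-form lemma, which the paper explicitly flags as not needed for the invariance result and proves only by a high-level argument rather than a detailed induction. If one wants the present lemma without leaning on that informally proved converse, the paper's direct syntactic case analysis is the way to go; otherwise your composition of the two directions is the cleaner proof.
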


\begin{proof}
	By induction on $k=\size\deriv$. If $k=0$ then the statement trivially holds. If $k>0$ then by \ih\ $\deriv$ has the form $\tm\tols^{k-1}\tmthree\to\tmtwo$ with $\tmthree$ useful normal form and---as such---$\db$-normal. Consequently, the last step $\tmthree\to\tmtwo$ is a $\ls$-step. We are left to show that $\tmthree$ is a useful normal form. By induction on the position $\ctx$ of the last step $\tmthree=\ctxp\var\tols\ctxp\tmfour=\tmtwo$.
	\begin{varenumerate}
		\item \emph{Empty context, \ie\ $\ctx=\ctxhole$}. Impossible, because $\tmthree=\var$ would be normal.
		
		\item \emph{Abstraction, \ie\ $\ctx=\lambda\var.\ctxtwo$}. It follows by the \ih.
		
		\item \emph{Left of an Application, \ie\ $\ctx=\ctxtwo\tmfive$}. 
	\end{varenumerate}
	
\end{proof}
}

\section{The Syntactic Bound Property, via Nested Derivations}
\label{sect:nested}
In this section we show that \lou\ derivations have the syntactic
bound property. Instead of proving this fact directly, we introduce an
abstract property, the notion of \emph{nested derivation} and then
prove that 1) nested derivations ensure the syntactic bound property,
and 2) \lou\ derivations are nested. Such an approach helps to
understand both \lou\ derivations and the syntactic bound property.

\begin{definition}[Nested Derivation]
Two $\ls$-steps $\tm\tols\tmtwo\tols\tmthree$ are \deff{nested} if the second one substitutes on the subterm substituted by the first one, \ie\ if exist $\ctx$ and $\ctxtwo$ s.t. the two steps have the compact form $\ctxp{\var}\tols\ctxp{\ctxtwop{\vartwo}}\tols\ctxp{\ctxtwop{\tmtwo}}$. A derivation is nested if any two consecutive substitution steps are nested.
\end{definition}

For instance, the first of the following two sequences of steps is nested while the second is not:
\begin{align*}
  (\ap\var\vartwo)\esub\var{\ap\vartwo\tm}\esub\vartwo\tmtwo&\tols(\ap{(\ap\vartwo\tm)}\vartwo)\esub\var{\ap\vartwo\tm}\esub\vartwo\tmtwo\\
    &\tols(\ap{(\ap\tmtwo\tm)}\vartwo)\esub\var{\ap\vartwo\tm}\esub\vartwo\tmtwo;\\
  (\ap\var\vartwo)\esub\var{\ap\vartwo\tm}\esub\vartwo\tmtwo&\tols(\ap{(\ap\vartwo\tm)}\vartwo)\esub\var{\ap\vartwo\tm}\esub\vartwo\tmtwo\\
    &\tols(\ap{(\ap\vartwo\tm)}\tmtwo)\esub\var{\ap\vartwo\tm}\esub\vartwo\tmtwo.
\end{align*}
The idea is that nested derivations ensure the syntactic bound property because no substitution can be used twice in a nested sequence $\tmtwo\tols^k\tmthree$, and so $k$ is necessarily bounded by $\esmeas{\tmtwo}$.


\begin{lemma}[Nested + Subterm = Syntactic Bound]
\label{l:nested-trace}
Let $\tm$ be a $\l$-term, $\deriv:\tm\to^n \tmtwo\tols^k\tmthree$ be a derivation having the subterm property and whose suffix $\tmtwo\tols^k\tmthree$ is nested. Then $k\leq\esmeas{\tmtwo}$.
\end{lemma}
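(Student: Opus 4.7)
The plan is to associate with each of the $k$ $\tols$-steps of the nested suffix a specific explicit substitution that is consumed, and to show that these ES are pairwise distinct ES of $\tmtwo$. Since at most $\esmeas{\tmtwo}$ such ES exist, this gives $k \leq \esmeas{\tmtwo}$.

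First, I will use the subterm property to track ES through the suffix. Each $\tols$-step duplicates the content $\tmfour_i$ of the ES it uses, and by the subterm property $\tmfour_i$ is a subterm of the $\l$-term $\tm$, hence contains no ES. Therefore ES-positions are neither created nor destroyed along the suffix, $\esmeas{\tm_i}=\esmeas{\tmtwo}$ for all $i$, and each ES $e_i=\esub{\var_i}{\tmfour_i}$ consumed at step $\tm_i\tols\tm_{i+1}$ can be identified unambiguously with an ES-position of $\tmtwo$.

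The heart of the proof is the claim that $e_{i+1}$ sits strictly outside $e_i$ in the scoping structure of $\tmtwo$. By nestedness of consecutive steps $\tm_i\tols\tm_{i+1}\tols\tm_{i+2}$, the variable $\var_{i+1}$ occurs inside the copy of $\tmfour_i$ just placed at the position formerly held by $\var_i$. Since $\tmfour_i$ contains no ES, the binder of this occurrence of $\var_{i+1}$ cannot live inside that copy; it must be an ES of $\tmtwo$ whose scope surrounds $\tmfour_i$, and hence (as $\tmfour_i$ is the content of $e_i$) strictly surrounds $e_i$ itself. Iterating, this yields a strict chain $e_0\sqsubset e_1\sqsubset\cdots\sqsubset e_{k-1}$ in the partial order ``is strictly outside'' on the ES-positions of $\tmtwo$, so the $e_i$ are pairwise distinct and $k\leq\esmeas{\tmtwo}$.

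The main obstacle I expect is precisely this scope-tracking argument: one must use capture-avoidance and (crucially) the subterm property to rule out the binder of $\var_{i+1}$ lying inside $\tmfour_i$, and then conclude that such a binder sits properly outside $e_i$. Once this is in place, the combinatorial bound via the strict chain is immediate, and the role of each hypothesis becomes transparent — the subterm property rules out ES inside duplicated content (so that $e_{i+1}$ lives in the ambient term), while the nested hypothesis is exactly what forces $e_{i+1}$ to bind a variable whose occurrence lies in the freshly inserted copy of $\tmfour_i$.
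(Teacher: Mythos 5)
Your proof is correct and follows essentially the same route as the paper's: both arguments use the subterm property to ensure duplicated contents are ES-free, then use nestedness plus capture-avoidance to show that each consecutive step consumes an ES strictly outside the previous one, yielding a strict chain of $k$ distinct explicit substitutions of $\tmtwo$. The only cosmetic difference is that the paper phrases the chain in terms of the external contexts of the steps being ordered by the prefix relation $\outin$, whereas you order the explicit substitutions themselves by scope inclusion.
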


\begin{proof}
Let $\tmtwo=\tmtwo_0\tols\tmtwo_1\tols\ldots\tols \tmtwo_k=\tmthree$ be the nested suffix of $\deriv$ and $\tmtwo_i\tols\tmtwo_{i+1}$ one of its steps, for $i\in\set{0,\ldots,k-2}$. Let us use $\ctx_i$ for the external context of the step, \ie\ the context s.t. $\tmtwo_i=\ctx_i\ctxholep{\ctxtwop{\var}\esub{\var}{\tmfour}}\tols \ctx_i\ctxholep{\ctxtwop{\tmfour}\esub{\var}{\tmfour}}=\tmtwo_{i+1}$. The following nested step $\tmtwo_{i+1}\tols\tmtwo_{i+2}$ substitutes on the substituted occurrence of $\tmfour$. By the subterm property, $\tmfour$ is a subterm of $\tm$ and so it is has no explicit substitution. Then the explicit substitution acting in $\tmtwo_{i+1}\tols\tmtwo_{i+2}$ is on the right of $\esub{\var}{\tmfour}$, \ie\ the external context $\ctx_{i+1}$ is a prefix of $\ctx_i$, in symbols $\ctx_{i+1}\outin\ctx_i$.  Since the derivation $\tmtwo_0\tols\tmtwo_1\tols\ldots\tols \tmtwo_k$ is nested we obtain a sequence $\ctx_k\outin\ctx_{k-1}\outin\ldots\outin\ctx_0$ of contexts of $\tmtwo$. In particular, every $\ctx_i$ corresponds to a different explicit substitution in $\tmtwo$, and so $k\leq\esmeas{\tmtwo}$.
\end{proof}

We are left to show that our small-step implementation of $\beta$ --- \lou\ derivations --- indeed are nested derivations with the subterm property. We already know that they have the subterm property (\refcoro{subterm-for-lou}), so we only need to show that they are nested.

We need a technical lemma. A context $\ctx$ is \lo\ if given a fresh variable $\var$ the position $\ctxtwo$ of any redex in $\ctxp\var$ is s.t. $\ctx\leftout\ctxtwo$. 

\begin{lemma}\label{l:lookunfold}
Let $\tm$ be a $\l$-term in normal form and $\ctx$ a \lo\ context (possibly containing ES). If $\relunf{\tm}{\ctx}$ has a $\beta$-redex then there is a context $\ctxtwo$ s.t. $\tm=\ctxtwop\var$, $\ctx$ acts on $\var$, and the induced step $\ctxp{\ctxtwop{\var}}\tols\ctxp{\ctxtwop{\tmtwo}}$ is \lou.
\end{lemma}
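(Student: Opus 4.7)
The plan is to induct on the structure of the pure $\beta$-normal $\lambda$-term $\tm$, distinguishing three cases: a variable $\var$, an abstraction $\lambda\vartwo.\tm'$, or a neutral application $\vartwo\,\tm_1\cdots\tm_n$ with each $\tm_i$ normal. Since $\tm$ carries no explicit substitutions, $\relunf{\tm}{\ctx}$ is simply $\tm$ with the implicit substitution induced by $\ctx$ applied at its free variables, so any $\beta$-redex in $\relunf{\tm}{\ctx}$ must trace back to a free-variable occurrence in $\tm$ that $\ctx$ substitutes. The induction is intended to extract such an occurrence together with the inner context $\ctxtwo$ reaching it.

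In the variable case $\unf\var=\var$ has no $\beta$-redex, so $\ctx$ is forced to act on $\var$; I take $\ctxtwo=\ctxhole$, and useful clause (1) applies because the $\beta$-redex of $\relunf{\var}{\ctx}$ persists in the relative unfolding of $\ctx$'s value $\tmtwo$; the \lou\ property is immediate from the \lo\ hypothesis on $\ctx$. In the abstraction case the redex lies in the body, so I apply the induction hypothesis to $\tm'$ with the extended context $\ctxp{\lambda\vartwo.\ctxhole}$ and set $\ctxtwo=\lambda\vartwo.\ctxtwo'$ using the IH witness $\ctxtwo'$. In the application case I first inspect the head $\vartwo$: its compact position lies in the applicative context $\ctxp{\ctxhole\,\tm_1\cdots\tm_n}$, with applicativeness propagating under composition by \reflemma{apctx-stable}, so if $\ctx$ substitutes $\vartwo$ with a term whose relative unfolding is an abstraction, clause (2) of useful fires; if that unfolding already contains a $\beta$-redex, clause (1) fires; in either subcase take $\ctxtwo=\ctxhole\,\tm_1\cdots\tm_n$. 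Otherwise $\relunf{\vartwo}{\ctx}$ is a normal non-abstraction, so the $\beta$-redex must lie inside some $\relunf{\tm_i}{\ctx'}$; I pick the leftmost such $i$, apply the IH to $\tm_i$ with the context $\ctx'$ obtained from $\ctx$ by extending with $\vartwo\,\tm_1\cdots\ctxhole\cdots\tm_n$, and compose $\ctxtwo$ from the IH witness. The inverse normal form Lemma \reflemma{nf-to-opt-nf} rules out useful steps in earlier $\tm_j$, ensuring that the chosen $i$ is indeed leftmost.

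The hard part will be preserving the \lou\ property across each inductive call: verifying that the extended context inherits the \lo\ property from $\ctx$, and --- most delicately --- that no useful redex external to $\tm$ sits $\leftout$-to-the-left of the selected $\ls$-step. I expect the hypothesis that $\relunf{\tm}{\ctx}$, and not merely $\ctxp{\tm}$, has a $\beta$-redex to be the essential mechanism ruling out such external competitors (e.g.\ spurious $\db$-redexes created outside $\tm$ by plugging an abstraction into an applicative outer structure). The transfer of the $\leftout$ order between \lsc\ terms and their unfoldings should be mediated by Lemmas \reflemma{varoc-relunf-new} and \reflemma{lefttor-prop}, while Lemma \reflemma{ctx-unf} handles the commutation of relative unfoldings with substitutions required to identify the specific $\beta$-redex produced by the chosen $\ls$-step. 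Once these translations are in place the remaining argument is case-analysis bookkeeping.
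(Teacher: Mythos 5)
Your proposal follows essentially the same route as the paper: induction on the structure of the normal $\l$-term $\tm$, tracing the leftmost $\beta$-redex of $\relunf{\tm}{\ctx}$ back to a variable occurrence on which $\ctx$ acts (head position giving an applicative $\ctxtwo$ when the head's relative unfolding is an abstraction, recursion into the leftmost offending subterm otherwise), with the \lo\ hypothesis on $\ctx$ excluding external competitors. The paper's proof is somewhat lighter than you anticipate --- since $\tm$ is a pure term it never needs \reflemma{varoc-relunf-new}, \reflemma{lefttor-prop} or \reflemma{nf-to-opt-nf}, only the definition of relative unfolding and of applicative context --- but the decomposition and the witnesses are the same.
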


\begin{proof}
By induction on $\tm$. Cases:
\begin{varenumerate}
	\item \emph{Variable $\tm=\var$}. Then $\ctx$ acts on $\var$ otherwise $\relunf{\tm}{\ctx}$ would not have a $\beta$-redex, \ie\ $\ctx$ has the form $\ctxthreep{\ctxfour\esub \var\tmtwo}$. Let $\ctxtwo\defeq\ctxhole$ and consider the induced step $\ctxp\var\tols\ctxp\tmtwo$. To prove that it is useful we have to analyse $\relunf{\tmtwo}{\ctxthree}$. Now, $\relunf{\tm}{\ctx}=\relunf{\var}{\ctx}=\relunf{\var}{\ctxthreep{\ctxfour\esub \var\tmtwo}}=\relunf{\var}{\ctxthreep{\ctxhole\esub \var\tmtwo}}=\relunf{\tmtwo}{\ctxthree}$ and so $\relunf{\tmtwo}{\ctxthree}$ has a $\beta$-redex, that is, the step is \opt. Since $\ctxtwo=\ctxhole$ and $\ctx$ is \lo, it follows that the step is also \lo.

	\item \emph{Abstraction $\tm=\l \vartwo.\tmthree$}. Note that $\relunf{\tm}{\ctx}=\relunf{(\l \vartwo.\tmthree)}{\ctx}=\relunf{\tmthree}{\ctxp{\l \vartwo.\ctxhole}}$. Then $\relunf{\tmthree}{\ctxp{\l \vartwo.\ctxhole}}$ has a $\beta$-redex and we can apply the \ih, and obtain a context $\ctxtwo'$ satisfying the statement wrt $\tmthree$. To conclude it is enough to take $\ctxtwo\defeq\l\vartwo.\ctxtwo'$.

	\item \emph{Application $\tm=\tmthree\tmfour$}. Note that the hypothesis on $\tm$ forbids $\tmthree$ to be an abstraction. There are three sub-cases, depending on where is the \lo\ $\beta$-redex in $\relunf{\tm}{\ctx}=\relunf{\tmthree}{\ctx}\relunf{\tmfour}{\ctx}$:

	\begin{varenumerate}
		\item \emph{It is the outermost application of $\relunf{\tm}{\ctx}$, \ie\ $\relunf{\tmthree}{\ctx}$ is an abstraction}. Since $\tmthree$ is not an abstraction, it can only be a variable $\var$ on which $\ctx$ acts by substituting $\tmtwo$ (\ie\ $\ctx=\ctxthreep{\ctxfour\esub \var\tmtwo}$). As in the variable case, we obtain $\relunf{\var}{\ctx}=\relunf{\tmtwo}{\ctxthree}$, \ie\ $\relunf{\tmtwo}{\ctxthree}$ is an abstraction. Then let $\ctxtwo\defeq \ctxhole\tmfour$. The induced step $\ctxp{\ctxtwop{\var}}\tols\ctxp{\ctxtwop{\tmtwo}}$ is \opt\ because $\ctxtwo\defeq\var\tmfour$ is applicative, and it is \lo\ because both $\ctx$ and $\ctxtwo$ are \lo.
		
		\item \emph{It is in the left subterm $\relunf{\tmthree}{\ctx}$}. It is enough to apply the \ih\ as in the abstraction case (obtaining a context $\ctxtwo'$ satisfying the statement wrt to $\tmthree$, and then take $\ctxtwo\defeq \ctxtwo'\tmfour$).
		\item \emph{It is in the right subterm $\relunf{\tmfour}{\ctx}$}. Analogous to the previous case.
	\end{varenumerate}
\end{varenumerate}
\end{proof}

We conclude with:

\begin{proposition}
\lou\ derivations are nested, and so they have the syntactic bound property.
\end{proposition}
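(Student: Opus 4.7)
The plan is to establish that any two consecutive $\ls$-steps in a \lou\ derivation are nested; the syntactic bound property then follows from \reflemma{nested-trace} applied with \refcoro{subterm-for-lou}. Fix two such steps $\redex_1:\tm=\ctxp{\var}\tols\ctxp{\tmfour}=\tmtwo$ and $\redex_2:\tmtwo\tols\tmthree$, with $\ctx$ the compact position of $\redex_1$, so $\ctx$ contains $\esub{\var}{\tmfour}$ above its hole; by \refcoro{subterm-for-lou} the content $\tmfour$ is a pure $\l$-term. Denote by $\ctxthree$ the position of $\redex_2$. The goal is to prove $\ctxthree = \ctxp{\ctxtwo}$ for some context $\ctxtwo$: writing $\tmfour = \ctxtwop{\vartwo}$ for the variable $\vartwo$ substituted by $\redex_2$, the two steps then take the nested compact form $\ctxp{\var}\tols\ctxp{\ctxtwop{\vartwo}}\tols\ctxp{\ctxtwop{\tmfive}}$.

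The next step is to rule out every other possibility for $\ctxthree$. Its hole cannot be strictly $\outin$-outer to that of $\ctx$: $\ctxthree$'s hole is a variable leaf of $\tmtwo$, but if $\ctx$'s hole were a proper descendant then $\ctxthree$'s hole would be non-terminal. Nor can $\ctxthree$ be $\leftright$-left of $\ctx$: the step $\redex_1$ rewrites only the leaf $\var$ at $\ctx$'s hole and leaves every other subterm and every substitution content of $\tm$ intact, hence the relative unfoldings at positions disjoint from and left of $\ctx$ are preserved, and with them the useful/useless status; a useful redex left of $\ctx$ in $\tmtwo$ would therefore witness one in $\tm$, contradicting $\redex_1$ being \lou. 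It remains to rule out that $\ctxthree$ is $\leftright$-right of $\ctx$, which follows as soon as we exhibit in $\tmtwo$ a useful redex at a position $\outin$-inside-or-equal $\ctx$, since (by the analysis of $\leftout$ around \reflemma{lefttor-basic}) any such position is $\leftout$-less than anything right of $\ctx$.

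Producing that witness proceeds by case split on the usefulness of $\redex_1$. In the applicative-abstraction clause (where $\relunf{\tmfour}{\ctx}$ is an abstraction and $\ctx$ is applicative), $\redex_1$ creates a new $\db$-redex at the enclosing application, whose position is strictly $\outin$-outer to $\ctx$; by the preservation argument of the previous paragraph, no useful redex of $\tmtwo$ is $\leftout$-below this $\db$-redex, so it would be the \lou\ redex and $\redex_2$ would be a $\db$-step, contradicting our hypothesis. Hence we must be in the other clause: $\relunf{\tmfour}{\ctx}$ contains a $\beta$-redex. A structural induction on $\tmfour$, patterned on the inverse-normal-form analysis that underlies \refprop{opt-nf-to-nf} and invoking \reflemma{varoc-relunf-new} to trace positions through unfoldings, then delivers a useful shallow redex of $\tmtwo$ at a position $\outin$-inside-or-equal $\ctx$, as required.

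The main obstacle is precisely the tracing step in the $\beta$-redex case. Although $\tmfour$ is pure, it need not be $\beta$-normal, so \reflemma{lookunfold} does not apply off the shelf; one must run a tailored induction on $\tmfour$ that splits on whether the $\beta$-redex in $\relunf{\tmfour}{\ctx}$ already sits inside $\tmfour$ (producing a $\db$-redex in the fresh copy), is created by substituting via $\ctx$ a free variable of $\tmfour$ whose content unfolds to an abstraction at an applicative position (producing a useful $\ls$-redex), or results from a composition of these effects, and in each case reads off an explicit witness position $\outin$-inside-or-equal $\ctx$.
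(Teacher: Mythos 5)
Your overall route is the paper's: reduce the syntactic bound property to nestedness via \reflemma{nested-trace} and \refcoro{subterm-for-lou}, then show that the next \lou\ $\ls$-redex must sit inside the subterm just substituted, using totality of $\leftout$ and the persistence of uselessness on the left. The genuine problem is your treatment of the applicative-abstraction clause. You assert that there $\redex_1$ ``creates a new $\db$-redex at the enclosing application'', derive a contradiction with $\redex_2$ being an $\ls$-step, and conclude that the other clause ($\relunf{\tmfour}{\ctx}$ contains a $\beta$-redex) must hold. This fails when the substituted subterm $\tmfour$ is a \emph{variable} whose relative unfolding is an abstraction: nothing of the shape $\ap{\sctxp{\l\var.\tmfive}}\tmsix$ appears, so no $\db$-redex is created. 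Consider $(\ap\var\tmthree)\esub\var\vartwo\esub\vartwo{\l\varthree.\varthree}$: replacing $\var$ by $\vartwo$ at the compact position $\ctx=(\ap\ctxhole\tmthree)\esub\var\vartwo\esub\vartwo{\l\varthree.\varthree}$ is useful by the applicative-abstraction clause, $\relunf{\vartwo}{\ctx}=\l\varthree.\varthree$ contains no $\beta$-redex, and the next \lou\ redex is again an $\ls$-redex, at the very same position. So no contradiction arises, and your fallthrough to the $\beta$-redex clause is unsound: the two usefulness clauses are not exclusive, and the failure of your argument in one does not put you in the other. This is not a corner case --- it is exactly how chains of nested substitutions leading to an abstraction arise. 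The repair is the one the paper makes: since $\tmfour$ is a pure $\l$-term that is not an abstraction but unfolds to one, it is a variable, and its fresh occurrence is itself a useful $\ls$-redex at position $\ctx$, which is the nested configuration with $\ctxtwo=\ctxhole$.

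Separately, your $\beta$-redex clause is only a sketch. You rightly observe that \reflemma{lookunfold} does not apply when $\tmfour$ is not normal, but the ``tailored induction'' you invoke is precisely the missing content, and it amounts to reproving a strengthened \reflemma{lookunfold}. The paper avoids this by splitting on whether $\tmfour$ contains a $\db$-redex: if not, $\tmfour$ is a normal $\l$-term and \reflemma{lookunfold} applies off the shelf; if so, the \lo\ $\db$-redex inside the copy serves as an anchor, and \reflemma{useless-pers} together with the fact that $\redex_1$ was \lou\ forces any \lou\ $\ls$-redex to lie $\leftout$-between $\ctx$ and that anchor, hence inside $\ctx$. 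You should either adopt that decomposition or actually carry out your induction.
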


\begin{proof}
We prove the following implication: if the reduction step
$\ctxp{\var}\tols\ctxp{\tmtwo}$ is
\lou, and the \lou\ redex in $\ctxp\tmtwo$ is a $\ls$-redex then its position $\ctxtwo$ occurs in $\tmtwo$, \ie\ $\ctx\outin\ctxtwo$ or $\ctx=\ctxtwo$, \ie\ the two steps are nested. The syntactic bound property then follows from \refprop{lou-der-are-standard} (\lou\ derivations are standard), \refcoro{subterm}.\ref{p:subterm-1} (standard derivations have the subterm property), and \reflemma{nested-trace} (nested plus the subterm property implies the syntactic bound property). Two cases, depending on \emph{why} the reduction
step $\ctxp{\var}\tols\ctxp{\tmtwo}$ is \opt:
\begin{varenumerate}
\item
   \emph{$\ctx$ is applicative and $\relunf{\tmtwo}{\ctx}$ is an
  abstraction}. Two sub-cases:
  \begin{varenumerate}
  \item
    \emph{$\tmtwo$ is an abstraction}. Then the \lou\ redex in $\ctxp{\tmtwo}$ is the multiplicative redex having $\tmtwo$ as abstraction, and there is nothing to prove (because the \lou\ redex is not a substitution redex).
  \item
    \emph{$\tmtwo$ is not an abstraction}. Then it must a variable $\varthree$ (because it is a $\lambda$-term),
    and $\relunf{\varthree}{\ctx}$ is an abstraction. But then $\ctxp{\tmtwo}$
    is simply $\ctxp{\varthree}$ and the given occurrence of $\varthree$ marks another
    \opt\ substitution redex, that is the \lou\ redex because $\ctx$ already was the position of the \lou\ redex at the preceding step.
  \end{varenumerate}
\item
  \emph{$\relunf{\tmtwo}{\ctx}$ is not an abstraction or $\ctx$ is not applicative, but $\relunf{\tmtwo}{\ctx}$ contains a $\beta$-redex}. Two sub-cases:
  \begin{varenumerate}
  \item
    \emph{$\tmtwo$ contains a $\db$-redex itself}. Then consider the position $\ctxthree$ of the \lo\ $\db$-redex $\redex$ in $\tmtwo$. Two sub cases:
    \begin{varenumerate}
    		\item \emph{$\redex$ is also the \lou\ redex in $\ctxp\tm$}. Then there is nothing to prove, because the \lou\ redex is not a substitution redex.
		\item \emph{$\redex$ is not the \lou\ redex in $\ctxp\tm$}. Then there is a \lou\ substitution redex $\redextwo$ of position $\ctxfour\leftout\ctxp\ctxthree$. Since 
		\begin{varenumerate}
			\item $\relunf{\tmtwo}{\ctx}$ is not an abstraction or $\ctx$ is not applicative, 
			\item the previous step was the \lou\ step, 
			\item useless steps do not become useful by reducing useful redexes on their right \reflemma{useless-pers}
		\end{varenumerate}
		we necessarily have $\ctx\leftout\ctxfour$ and we conclude.
    \end{varenumerate}
  \item
    \emph{$\tmtwo$ does not contain a redex}. Remember that $\relunf{\tmtwo}{\ctx}$ does contain a $\beta$-redex, so we can apply \reflemma{lookunfold} and obtain that there exists $\ctxthree$ s.t. $\tmtwo=\ctxthreep{\varthree}$ and 
    $\ctxp{\ctxthreep{\varthree}}$ identifies the \lou\ redex.
  \end{varenumerate}
\end{varenumerate} 
\end{proof}

At this point, we proved all the abstract properties implying the high-level implementation theorem. 

\section{The Selection Property, or Computing Functions in Compact Form}\label{sect:properties}
\label{sect:algorithm}
\newcommand{\alg}{\mathcal{A}}
\newcommand{\algone}{\mathcal{A}}
\newcommand{\algtwo}{\mathcal{B}}
\newcommand{\algthree}{\mathcal{C}}
\newcommand{\fun}{f}
\newcommand{\funone}{f}
\newcommand{\funtwo}{g}
\newcommand{\funequal}{f_{=}}
\newcommand{\nattm}{\mathbb{T}}
\newcommand{\natvar}[1]{\mathsf{var}(#1)}
\newcommand{\natlam}{\mathsf{lam}}
\newcommand{\natapp}{\mathsf{app}}
\newcommand{\natval}{n}
\newcommand{\bool}{\mathbb{B}}
\newcommand{\btrue}{\mathsf{true}}
\newcommand{\bfalse}{\mathsf{false}}
\newcommand{\bval}{b}
\newcommand{\funnat}{\mathit{nature}}
\newcommand{\funred}{\mathit{redex}}
\newcommand{\funav}{\mathit{apvars}}
\newcommand{\funfv}{\mathit{freevars}}
\newcommand{\vars}{\mathcal{VARS}}
\newcommand{\ifnempty}[3]{#1\Downarrow_{#2,#3}}
\newcommand{\len}[1]{|#1|}
\newcommand{\eq}{\mathit{alpha}}
\renewcommand{\setone}{V}
\renewcommand{\settwo}{W}
\newcommand{\targetset}{A}

This section proves the selection property for \lou\ derivations,
which is the missing half of the proof that they are mechanisable,
\ie\ that they enjoy the low-level implementation theorem. The proof
consists in providing a polynomial algorithm for testing the
usefulness of a substitution step. The subtlety is that the test has
to check whether a term in the form $\relunf{\tm}{\ctx}$ contains a
$\beta$-redex, or whether it is an abstraction, without explicitly
computing $\relunf{\tm}{\ctx}$ (which, of course, takes exponential
time in the worst case). If one does not prove that this can be done
in time polynomial in (the size of) $\tm$ and $\ctx$, then firing
\emph{each} reduction step can cause an exponential blowup!

Our algorithm consists in the simultaneous computation of 4 correlated
functions on terms in compact form, two of which will provide the
answer to our problem. We need some abstract preliminaries about
computing functions in compact form.

A function $\funone$ from $n$-uples of $\lambda$-terms to
a set $\targetset$ is said to have \emph{arity} $n$, and we write 
$\funone:n\rightarrow\targetset$ in this case. 
The function $\funone$ is said to be:
\begin{varitemize}
\item
  \emph{Efficiently computable} if there is a polynomial time
  algorithm $\alg$ such that for every $n$-uple of $\lambda$-terms
  $(\tm_1,\ldots,\tm_n)$, the result of
  $\alg(\tm_1,\ldots,\tm_n)$ is precisely $\funone(\tm_1,\ldots,\tm_n)$.
\item
  \emph{Efficiently computable in compact form} if there is a polynomial
  time algorithm $\alg$ such that for every $n$-uple of LSC terms
  $(\tm_1,\ldots,\tm_n)$, the result of
  $\alg(\tm_1,\ldots,\tm_n)$ is precisely $\funone(\unf{\tm_1},\ldots,\unf{\tm_n})$.
\item
  \begin{sloppypar}
    \emph{Efficiently computable in compact form relatively to a context} if there is a polynomial
    time algorithm $\alg$ such that for every $n$-uple of pairs of LSC terms and contexts
    $((\tm_1,\ctx_1),\ldots,(\tm_n,\ctx_n))$, the result of
    $\alg((\tm_1,\ctx_1),\ldots,(\tm_n,\ctx_n)))$ is precisely 
    $\funone(\relunf{\tm_1}{\ctx_1},\ldots,\relunf{\tm_n}{\ctx_n})$.
  \end{sloppypar}
\end{varitemize}
An example of function is $\eq:2\rightarrow\bool$, which given two
$\lambda$-terms $\tm$ and $\tmtwo$, returns $\btrue$ if $\tm$ and
$\tmtwo$ are $\alpha$-equivalent and $\bfalse$
otherwise. In~\cite{DBLP:conf/rta/AccattoliL12}, $\eq$ is shown to be
efficiently computable in compact form, via a dynamic programming
algorithm $\algtwo_\eq$ taking in input two LSC terms and computing,
for every pair of their subterms, whether the (unfoldings) are
$\alpha$-equivalent or not. Proceeding bottom-up, as usual in dynamic
programming, allows to avoid the costly task of computing unfoldings
explicitly, which takes exponential time in the worst-case. More
details about $\algtwo_\eq$ can be found
in~\cite{DBLP:conf/rta/AccattoliL12}.

Each one of the functions of our interest take values in one of the following sets:
\begin{align*}
	\vars &= \mbox{ the set of finite sets of variables}\\
  \bool&=\{\btrue,\bfalse\}\\
    \nattm&=\{\natvar{\var}\mid\mbox{ $\var$ is a variable}\}\cup\{\natlam,\natapp\}
\end{align*}
Elements of $\nattm$ represent the \emph{nature} of a term. The functions are:
\begin{varitemize}
\item
  $\funnat:1\rightarrow\nattm$, which returns the nature of the input term;
\item
  $\funred:1\rightarrow\bool$, which returns $\btrue$ if the input term
  contains a redex and $\bfalse$ otherwise;
\item
  $\funav:1\rightarrow\vars$, which returns the set of variables occurring
  in applicative position in the input term;
\item
  $\funfv:1\rightarrow\vars$, which returns the set of free variables occurring
  in the input term.
\end{varitemize}
Note that they all have arity 1 and that showing $\funred$ and $\funnat$ to be \emph{efficiently computable in compact form relatively to a context}
is precisely what is required to prove the efficiency of useful reduction. 

The four functions above can all be proved to be efficiently computable (in the three
meanings). It is convenient to do so by giving an algorithm computing the product function 
$\funnat\times\funred\times\funav\times\funfv:1\rightarrow\nattm\times\bool\times\vars\times\vars$
(which we call $\funtwo$) compositionally, on the structure of the input term, because the four function are correlated (for example, $\tm\tmtwo$ has a redex, \ie\ $\funred(\tm\tmtwo)=\btrue$, if $\tm$ is an abstraction, \ie\ if $\funnat(\tm)=\natlam$). The algorithm computing $\funtwo$ on terms is $\alg_\funtwo$ and is defined in Figure~\ref{fig:explicit}.
\begin{figure*}
\begin{center}
\fbox{
\begin{minipage}{.97\textwidth}
\begin{align*}
  \alg_\funtwo(\var)&=(\natvar{\var},\bfalse,\emptyset,\{\var\});\\
  \alg_\funtwo(\l\var.\tm)&=(\natlam,\bval_\tm,\setone_\tm-\{\var\},\settwo_\tm-\{\var\})\\
  &\mbox{where }\alg_\funtwo(\tm)=(\natval_\tm,\bval_\tm,\setone_\tm,\settwo_\tm);\\
  \alg_\funtwo(\tm\tmtwo)&=(\natapp,\bval_\tm\vee\bval_\tmtwo\vee(\natval_\tm=\natlam),\setone_\tm\cup\setone_\tmtwo\cup\{\var\mid\natval_\tm=\natvar{\var}\},\settwo_\tm\cup\settwo_\tmtwo)\\
  &\mbox{where }\alg_\funtwo(\tm)=(\natval_\tm,\bval_\tm,\setone_\tm,\settwo_\tm)\mbox{ and }\alg_\funtwo(\tmtwo)=(\natval_\tmtwo,\bval_\tmtwo,\setone_\tmtwo,\settwo_\tmtwo);
\end{align*}
\end{minipage}}
\end{center}
\caption{Computing $\funtwo$ in explicit form.}\label{fig:explicit}
\end{figure*}
 
 The interesting case in the algorithms for the two compact cases is the one for ES, that makes use of a special notation: given two sets of variables $\setone,\settwo$ and a variable $\var$, $\ifnempty{\setone}{\var}{\settwo}$ is defined to be $\setone$ if $\var\in\settwo$ and the empty set $\emptyset$ otherwise. The algorithm $\algtwo_\funtwo$ computing $\funtwo$ on LSC terms is defined in Figure~\ref{fig:implicit}.
\begin{figure*}
\begin{center}
\fbox{
\begin{minipage}{.97\textwidth}
\begin{align*}
  \algtwo_\funtwo(\var)&=(\natvar{\var},\bfalse,\emptyset,\{\var\});\\
  \algtwo_\funtwo(\l\var.\tm)&=(\natlam,\bval_\tm,\setone_\tm-\{\var\},\settwo_\tm-\{\var\})\\
  &\mbox{where }\algtwo_\funtwo(\tm)=(\natval_\tm,\bval_\tm,\setone_\tm,\settwo_\tm);\\
  \algtwo_\funtwo(\tm\tmtwo)&=(\natapp,\bval_\tm\vee\bval_\tmtwo\vee(\natval_\tm=\natlam),\setone_\tm\cup\setone_\tmtwo\cup\{\var\mid\natval_\tm=\natvar{\var}\},\settwo_\tm\cup\settwo_\tmtwo)\\
  &\mbox{where }\algtwo_\funtwo(\tm)=(\natval_\tm,\bval_\tm,\setone_\tm,\settwo_\tm)\mbox{ and }\algtwo_\funtwo(\tmtwo)=(\natval_\tmtwo,\bval_\tmtwo,\setone_\tmtwo,\settwo_\tmtwo);\\    
  \algtwo_\funtwo( \tm\esub\var\tmtwo)&=(\natval,\bval,\setone,\settwo)\\
  &\mbox{where }\algtwo_\funtwo(\tm)=(\natval_\tm,\bval_\tm,\setone_\tm,\settwo_\tm)\mbox{ and }\algtwo_\funtwo(\tmtwo)=(\natval_\tmtwo,\bval_\tmtwo,\setone_\tmtwo,\settwo_\tmtwo)\mbox{ and:}\\
  &\qquad\natval_\tm=\natvar{\var}\Rightarrow\natval=\natval_\tmtwo;\quad\natval_\tm=\natvar{\vartwo}\Rightarrow\natval=\natvar{\vartwo};\\
  &\qquad\natval_\tm=\natlam\Rightarrow\natval=\natlam;\quad\natval_\tm=\natapp\Rightarrow\natval=\natapp;\\
  &\qquad\bval=\bval_\tm\vee(\bval_\tmtwo\wedge\var\in\settwo_\tm)\vee((\natval_\tmtwo=\natlam)\wedge(\var\in\setone_\tmtwo));\\
  &\qquad\setone=(\setone_\tm-\{\var\})\cup\ifnempty{\setone_\tmtwo}{\var}{\settwo_\tm}\cup\;\{\vartwo\mid\natval_\tmtwo=\natvar{\vartwo}\wedge\var\in\setone_\tm\};\\
  &\qquad\settwo=(\settwo_\tm-\{\var\})\cup\ifnempty{\settwo_\tmtwo}{\var}{\settwo_\tm}
\end{align*}
\end{minipage}}
\end{center}
\caption{Computing $\funtwo$ in implicit form.}\label{fig:implicit}
\end{figure*}
The algorithm computing $\funtwo$ on pairs in the form $(\tm,\ctx)$ (where $\tm$ is a LSC term and $\ctx$ is
a shallow context) is defined in Figure~\ref{fig:implicitcontext}.
\begin{figure*}
\begin{center}
\fbox{
\begin{minipage}{.97\textwidth}
\begin{align*}
  \algthree_\funtwo(\tm,\ctxhole)&=\algtwo_\funtwo(\tm);\\
  \algthree_\funtwo(\tm,\l\var.\ctx)&=\algthree_\funtwo(\tm,\ctx);\\
  \algthree_\funtwo(\tm,\ctx\tmtwo)&=\algthree_\funtwo(\tm,\ctx);\\
  \algthree_\funtwo(\tm,\tmtwo\ctx)&=\algthree_\funtwo(\tm,\ctx);\\
  \algthree_\funtwo(\tm,\ctx\esub\var\tmtwo)&=(\natval,\bval,\setone,\settwo)\\
  &\mbox{where }\algthree_\funtwo(\tm,\ctx)=(\natval_{\tm,\ctx},\bval_{\tm,\ctx},\setone_{\tm,\ctx},\settwo_{\tm,\ctx})
   \mbox{ and }\algtwo_\funtwo(\tmtwo)=(\natval_\tmtwo,\bval_\tmtwo,\setone_\tmtwo,\settwo_\tmtwo)\mbox{ and:}\\
  &\qquad\natval_\tm=\natvar{\var}\Rightarrow\natval=\natval_\tmtwo;
  \quad\natval_\tm=\natvar{\vartwo}\Rightarrow\natval=\natvar{\vartwo};\\
  &\qquad\natval_\tm=\natlam\Rightarrow\natval=\natlam;
  \quad\natval_\tm=\natapp\Rightarrow\natval=\natapp;\\
  &\qquad\bval=\bval_\tm\vee(\bval_\tmtwo\wedge\var\in\settwo_{\tm,\ctx})\vee((\natval_\tmtwo=\natlam)\wedge(\var\in\setone_\tmtwo));\\
  &\qquad\setone=(\setone_{\tm,\ctx}-\{\var\})\cup\ifnempty{\setone_\tmtwo}{\var}{\settwo_{\tm,\ctx}}
  \cup\;\{\vartwo\mid\natval_\tmtwo=\natvar{\vartwo}\wedge\var\in\setone_\tm\};\\
  &\qquad\settwo=(\settwo_{\tm,\ctx}-\{\var\})\cup\ifnempty{\settwo_\tmtwo}{\var}{\settwo_{\tm,\ctx}}
\end{align*}
\end{minipage}}
\end{center}
\caption{Computing $\funtwo$ in implicit form, relative to a context}\label{fig:implicitcontext}
\end{figure*}

First of all, we need to convince ourselves about the \emph{correctness} of the proposed algorithms: do they really
compute the function $\funtwo$? Actually, the way the algorithms are defined, namely by primitive recursion on the
input terms, helps very much here: a simple induction suffices to prove the following:
\begin{proposition}
  The algorithms $\alg_\funtwo$,$\algtwo_\funtwo$,$\algthree_\funtwo$ are all correct, namely for every
  $\lambda$-term $\tm$, for every term $\tmtwo$ and for every context $\ctx$, it holds that
  $$
  \alg_\funtwo(\tm)=\funtwo(\tm);\qquad\algtwo_\funtwo(\tmtwo)=\funtwo(\unf{\tmtwo});\qquad\algthree_\funtwo(\tmtwo,\ctx)=\funtwo(\relunf{\tmtwo}{\ctx}).
  $$
\end{proposition}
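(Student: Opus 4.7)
The plan is to prove the three equalities by three separate structural inductions, in the order stated, the second using the first and the third using the second. Correctness of $\alg_\funtwo$ on pure $\lambda$-terms is an immediate induction on $\tm$: each of the three cases unfolds the definitions of $\funnat$, $\funred$, $\funav$ and $\funfv$, and the algorithm's clauses match verbatim. No substitution is involved, so nothing subtle happens.

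Correctness of $\algtwo_\funtwo$ then proceeds by structural induction on the input LSC term $\tm$. The variable, abstraction and application cases reduce immediately to the corresponding cases of $\alg_\funtwo$ via the inductive hypothesis, because unfolding commutes with these constructors. The interesting case is $\tm\esub\vartwo\tmtwo$, whose unfolding is $\unf\tm\isub\vartwo{\unf\tmtwo}$. Here one needs a \emph{substitution lemma} for each of the four components, proved by an auxiliary induction on the host $\lambda$-term: (i) the nature of $\tmthree\isub\vartwo\tmfour$ equals $\funnat(\tmfour)$ if $\tmthree = \vartwo$ and $\funnat(\tmthree)$ otherwise; (ii) the free variables of $\tmthree\isub\vartwo\tmfour$ are $(\funfv(\tmthree)\setminus\{\vartwo\}) \cup \funfv(\tmfour)$ if $\vartwo \in \funfv(\tmthree)$ and $\funfv(\tmthree)$ otherwise; (iii) a redex occurs in $\tmthree\isub\vartwo\tmfour$ exactly when $\tmthree$ already contains one, or $\tmfour$ contains one and $\vartwo \in \funfv(\tmthree)$, or $\tmfour$ is an abstraction while $\vartwo$ sits in an applicative position of $\tmthree$; (iv) the applicative variables of $\tmthree\isub\vartwo\tmfour$ are $\funav(\tmthree)\setminus\{\vartwo\}$, together with $\funav(\tmfour)$ whenever $\vartwo\in\funfv(\tmthree)$, together with $\{\vartwo'\}$ when $\tmfour$ is a variable $\vartwo'$ and $\vartwo\in\funav(\tmthree)$. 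These four clauses match exactly the combination assembled by $\algtwo_\funtwo$ on $\tm\esub\vartwo\tmtwo$, so instantiating the lemmas with $\tmthree = \unf\tm$, $\tmfour = \unf\tmtwo$ and invoking the inductive hypothesis on $\tm$ and $\tmtwo$ closes the case.

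Finally, correctness of $\algthree_\funtwo$ is by structural induction on the context $\ctx$. The base case $\ctx = \ctxhole$ reduces to $\algtwo_\funtwo$. For $\ctx = \l\var.\ctxtwo$, $\ctxtwo\tmtwo$ and $\tmtwo\ctxtwo$, the definition of relative unfolding gives $\relunf{\tm}{\ctx} = \relunf{\tm}{\ctxtwo}$, so the inductive hypothesis applies verbatim. The remaining case $\ctx = \ctxtwo\esub\vartwo\tmtwo$ uses $\relunf{\tm}{\ctx} = \relunf{\tm}{\ctxtwo}\isub\vartwo{\unf\tmtwo}$ and appeals to the same four substitution lemmas, now with $\tmthree = \relunf{\tm}{\ctxtwo}$, combined with the inductive hypothesis on $\ctxtwo$ and the already-established correctness of $\algtwo_\funtwo$ on $\tmtwo$. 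The main technical obstacle lies in the substitution lemma for $\funav$ (and, to a lesser extent, for $\funred$): the applicative variables of the substituted term receive two structurally different contributions from $\tmfour$ — inheritance through each free occurrence of $\vartwo$, and promotion of $\tmfour$ itself when it is a variable landing at an applicative occurrence of $\vartwo$ — and tracking these two contributions uniformly across every case of the host induction requires care, especially when $\tmfour$ is itself a variable equal to $\vartwo$ so that the promotion and the inheritance rules interact.
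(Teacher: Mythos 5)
Your proof is correct and follows essentially the same route as the paper's: three structural inductions, with correctness of $\algtwo_\funtwo$ relying on that of $\alg_\funtwo$ and correctness of $\algthree_\funtwo$ relying on that of $\algtwo_\funtwo$. The only difference is one of detail: you make explicit the four substitution lemmas needed for the explicit-substitution cases (which correctly match the clauses assembled by the algorithms), whereas the paper leaves them implicit in its one-line inductions and only works out the application case of $\alg_\funtwo$.
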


  \begin{proof}
    \begin{varitemize}
    \item
      The equation $\alg_\funtwo(\tm)=\funtwo(\tm)$ can be proved by induction on the structure of $\tm$. Some interesting
      cases:
      \begin{varitemize}
      \item
        If $\tm=\tmtwo\tmthree$, then we know that:
        \begin{align*}
          \alg_\funtwo(\tmtwo\tmthree)&=(\natapp,\bval_\tmtwo\vee\bval_\tmthree\vee(\natval_\tmtwo=\natlam),\setone_\tmtwo\cup\setone_\tmthree\cup\{\var\mid\natval_\tmtwo=\natvar{\var}\},\settwo_\tmtwo\cup\settwo_\tmthree)\\
          &\mbox{where }\alg_\funtwo(\tmtwo)=(\natval_\tmtwo,\bval_\tmtwo,\setone_\tmtwo,\settwo_\tmtwo)\mbox{ and }\alg_\funtwo(\tmthree)=(\natval_\tmthree,\bval_\tmthree,\setone_\tmthree,\settwo_\tmthree);
        \end{align*}
        Now, first of all observe that $\funred(\tm)=\btrue$ if and only if there is a redex in $\tmtwo$ or 
        a redex in $\tmthree$ or if $\tmtwo$ is a $\lambda$-abstraction. Moreover, the variables occurring in
        applicative position in $\tm$ are those occurring in applicative position in either $\tmtwo$ or in
        $\tmthree$ or $\var$, if $\tmtwo$ is $\var$ itself. Similarly, the variables occurring free in $\tm$
        are simply those occurring free in either $\tmtwo$ or in $\tmthree$. The thesis then descends easily
        from the inductive hypothesis.
      \end{varitemize}
    \item
      The equation $\algtwo_\funtwo(\tmtwo)=\funtwo(\unf{\tmtwo})$ can be proved by induction on the structure of $\tmtwo$, using
      the correctness of $\alg$.
    \item
      The equation $\algthree_\funtwo(\tmtwo,\ctx)=\funtwo(\relunf{\tmtwo}{\ctx})$ can be proved by induction on the structure of $\tmtwo$, using
      the correctness of $\alg$.
    \end{varitemize}
    This concludes the proof.
  \end{proof}
The way the algorithms above have been defined also helps while proving that they work in bounded time, e.g., the number of recursive
calls triggered by $\alg_\funtwo(\tm)$ is linear in $\len{\tm}$ and each of them takes polynomial time. As a consequence, we can also easily bound the complexity of the three algorithms at hand.
\begin{proposition}
  The algorithms $\alg_\funtwo$,$\algtwo_\funtwo$,$\algthree_\funtwo$ all work in polynomial time. Thus \lou\ derivations are mechanisable.
\end{proposition}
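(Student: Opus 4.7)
\emph{Plan.} The goal is twofold: prove polynomial complexity for the three algorithms, and then deduce the selection property (and hence mechanisability) for \lou-derivations. For the complexity analysis, I will treat each algorithm separately, but in each case the same template applies: bound the number of recursive calls by the size of the structural input, and bound the cost of each call by a polynomial in the sizes of the intermediate results. The key nontrivial point is that the values returned (elements of $\nattm\times\bool\times\vars\times\vars$) stay \emph{small}: the components in $\nattm$ and $\bool$ have constant size, and the two sets of variables are subsets of the free/applicative-position variables, hence are bounded by the number of distinct variables occurring in the input. Since the output sizes remain polynomial in the input size, the set-theoretic operations (union, difference, membership, the $\ifnempty{\cdot}{\cdot}{\cdot}$ operator) all cost polynomial time.

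\emph{Step 1: complexity of $\alg_\funtwo$.} The algorithm is defined by structural recursion on a $\lambda$-term $\tm$, with exactly one recursive call per subterm. Hence the number of calls is $O(|\tm|)$. In each call, the sets $\setone_\tm$ and $\settwo_\tm$ are contained in the set of variables occurring in $\tm$, so they have size $O(|\tm|)$. Each case performs a constant number of set unions, differences, membership tests, and constant-time boolean/nature manipulations, all of which run in polynomial time. Thus $\alg_\funtwo$ runs in polynomial time.

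\emph{Step 2: complexity of $\algtwo_\funtwo$ and $\algthree_\funtwo$.} For $\algtwo_\funtwo$, the recursion structure is again one call per subterm, so $O(|\tm|)$ calls. The only new case is the ES case $\tm\esub\var\tmtwo$, which involves additional set operations (again a constant number per call) of the same form. Since the sets returned on any subterm are subsets of the variables appearing in the whole input, they remain of size $O(|\tm|)$, so each call is polynomial. For $\algthree_\funtwo$, the recursion is on the context $\ctx$: $O(|\ctx|)$ calls, each performing at most one call to $\algtwo_\funtwo$ on a subterm of the input (in the base case and in the ES case) plus a constant number of set operations on sets of size $O(|\tm|+|\ctx|)$. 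Composing polynomial bounds yields a polynomial overall complexity for $\algthree_\funtwo$.

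\emph{Step 3: selection property and mechanisability.} By the previous proposition, $\algthree_\funtwo(\tmtwo,\ctx)$ correctly computes $\funtwo(\relunf{\tmtwo}{\ctx})=(\funnat,\funred,\funav,\funfv)(\relunf{\tmtwo}{\ctx})$ in polynomial time. To locate the \lou-redex in a term $\tmthree$, I would traverse $\tmthree$ in $\leftout$ order, enumerating candidate redexes. For a $\db$-redex no extra test is needed; for a candidate $\ls$-redex of compact form $\ctxp{\var}\tols\ctxp{\tmfour}$ inside $\tmthree$, usefulness is tested by running $\algthree_\funtwo(\tmfour,\ctx)$: the step is useful iff the returned $\funred$-component is $\btrue$, or the returned $\funnat$-component is $\natlam$ and $\ctx$ is an applicative context (an easy syntactic check on $\ctx$ in time linear in $|\ctx|$). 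The enumeration yields at most $O(|\tmthree|)$ candidates and each test is polynomial in $|\tmthree|$, hence finding the \lou-redex in $\tmthree$ takes polynomial time, establishing the selection property. Combined with the subterm property from \refcoro{subterm-for-lou}, this shows that \lou-derivations are mechanisable.

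\emph{Main obstacle.} The only subtle point is controlling the size of the intermediate values: since $\relunf{\tmtwo}{\ctx}$ may be exponentially larger than $(\tmtwo,\ctx)$, it is crucial that the four components of $\funtwo$ do not depend on the size of the unfolding but only on its set of variables (bounded by the input) and on constant-size flags. Once this invariant is checked case by case---which requires no more than inspecting the definitions in Figures~\ref{fig:implicit} and~\ref{fig:implicitcontext}---the polynomial bounds follow by a standard structural induction.
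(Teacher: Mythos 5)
Your argument is correct and follows essentially the same route as the paper: bound the number of recursive calls by the size of the structural input ($O(\len{\tm})$ for $\alg_\funtwo$ and $\algtwo_\funtwo$, $O(\len{\tm}+\len{\ctx})$ for $\algthree_\funtwo$), and bound the per-call cost by observing that the output tuples have size polynomial in the input because the two variable sets are subsets of the variables occurring in the term and context. Your Step~3, which spells out how $\algthree_\funtwo$ is invoked on each candidate $\ls$-redex to decide usefulness via the $\funred$ and $\funnat$ components, is only implicit in the paper's proof (it is deferred to the surrounding discussion of the selection property), but it is a correct and welcome completion of the ``thus \lou\ derivations are mechanisable'' part of the claim.
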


  \begin{proof}
    The three algorithms are defined by primitive recursion. More specifically:
    \begin{varitemize}
    \item
      Any call $\alg_\funtwo(\tm)$ triggers at most $\len{\tm}$ calls to $\alg_\funtwo$;
    \item
      Any call $\algtwo_\funtwo(\tm)$ triggers at most $\len{\tm}$ calls to $\algtwo_\funtwo$;
    \item
      Any call $\algthree_\funtwo(\tm,\ctx)$ triggers at most $\len{\tm}+\len{\ctx}$ calls to $\algtwo$ and
      at most $\len{\ctx}$ calls to $\algthree$;
    \end{varitemize}
    Now, the amount of work involved in any single call (not counting the, possibly recursive, calls) 
    is itself polynomial, simply because the tuples produced in output are made of objects whose
    size is itself bounded by the length of the involved terms and contexts.
\end{proof}

\section{Summing Up}

The various ingredients from the previous sections can be combined
together so as to obtain the following result:
\begin{theorem}[Invariance]\label{theo:invariance}
  There is an algorithm which takes in input a $\l$-term $\tm$ and
  which, in time polynomial in $\nos{\toblo}{\tm}$ and $\size{\tm}$,
  outputs an \lsc\ term $\tmtwo$ such that $\unf{\tmtwo}$ is the normal
  form of $\tm$.
\end{theorem}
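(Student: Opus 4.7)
The plan is to assemble the machinery developed in the preceding sections by instantiating the abstract scheme of Section~\ref{sect:abstract-proof} with $\tostrat\defeq\toblo$ and $\toes\defeq\tolou$. The algorithm I propose is simply to iterate \lou-reduction starting from $\tm$, locating the next \lou-redex via the polynomial-time algorithm $\algthree_\funtwo$ of Section~\ref{sect:algorithm}, and to return the \lou-normal form $\tmtwo$ eventually reached.

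For correctness, given the computed derivation $\deriv:\tm\tolou^*\tmtwo$, I would invoke the Normal Form property (\refprop{opt-nf-to-nf}) to conclude that $\unf{\tmtwo}$ is a $\beta$-normal form, and the Projection Theorem (Theorem~\ref{tm:projection}) to conclude that $\tm\toblo^*\unf{\tmtwo}$. Confluence and normalisation of $\toblo$ then give that $\unf{\tmtwo}$ is \emph{the} $\beta$-normal form of $\tm$.

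For complexity, the key point is that all four clauses required by the High-Level Implementation Theorem have been established for $(\toblo,\tolou)$: subterm and trace via \refcoro{subterm-for-lou}, normal form and projection via \refprop{opt-nf-to-nf} and Theorem~\ref{tm:projection}, and syntactic bound via the nested-derivations analysis in Section~\ref{sect:nested}. Invoking that theorem yields $\size{\deriv}=O(\sizedb{\deriv}^2)=O(\nos{\toblo}{\tm}^2)$, where the second equality uses $\sizedb{\deriv}=\size{\unf{\deriv}}=\nos{\toblo}{\tm}$ from the projection property. The two clauses of mechanisability, namely the subterm property (\refcoro{subterm-for-lou}) and polynomial-time selection (Section~\ref{sect:algorithm}), then let me invoke the Low-Level Implementation Theorem: simulating $k$ \lou-steps costs time polynomial in $k$ and $\size{\tm}$. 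Composing with $k=O(\nos{\toblo}{\tm}^2)$ delivers the stated bound.

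The hardest parts have already been handled: the design of \lou-reduction so that it simultaneously projects correctly onto $\toblo$, preserves the subterm/trace/syntactic-bound properties, and admits polynomial-time detection of useful redexes in compact form. The invariance theorem itself is a pure assembly step, and the only thing worth double-checking is that the quadratic blow-up from the high-level half composes polynomially with the per-step cost from the low-level half — this is immediate, since the no-size-explosion corollary of the subterm property bounds every intermediate term linearly in $\nos{\toblo}{\tm}$ and $\size{\tm}$.
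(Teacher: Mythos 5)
Your proposal is correct and follows exactly the route the paper intends (the paper itself leaves this theorem as an assembly of the previously established ingredients): instantiate the abstract scheme with $(\toblo,\tolou)$, verify the four high-level clauses and the two mechanisability clauses by the cited results, and compose the quadratic high-level overhead with the polynomial low-level per-step cost. The only point worth making explicit in a final write-up is the appeal to uniqueness of $\beta$-normal forms to upgrade ``$\unf{\tmtwo}$ is a $\beta$-normal form reachable from $\tm$'' to ``$\unf{\tmtwo}$ is \emph{the} normal form of $\tm$'', which you do correctly.
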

As we have already mentioned, the algorithm witnessing the invariance
of $\l$-calculus does \emph{not} produce in output a $\l$-term, but a
compact representation in the form of a term with
ES. Theorem~\ref{theo:invariance}, together with the fact that
equality of terms can be checked efficiently \emph{in compact form}
entail the following formulation of invariance, akin in spirit to,
\eg, Statman's Theorem~\cite{DBLP:journals/tcs/Statman79a}:
\begin{corollary}\label{coro:statmanlike}
  There is an algorithm which takes in input two $\l$-terms $\tm$ and
  $\tmtwo$ and checks whether $\tm$ and $\tmtwo$ have the same
  normal form in time polynomial in $\nos{\toblo}{\tm}$,
  $\nos{\toblo}{\tmtwo}$, $\size{\tm}$, and $\size{\tmtwo}$.
\end{corollary}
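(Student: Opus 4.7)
The plan is to reduce the corollary to a straightforward composition of Theorem~\ref{theo:invariance} with the efficient equality test for \lsc\ terms in compact form from the authors' previous work~\cite{DBLP:conf/rta/AccattoliL12}. First I would invoke Theorem~\ref{theo:invariance} twice: once on $\tm$ to produce an \lsc\ term $\tmthree$ with $\unf{\tmthree}$ the \lo-normal form of $\tm$, and once on $\tmtwo$ to produce an \lsc\ term $\tmfour$ with $\unf{\tmfour}$ the \lo-normal form of $\tmtwo$. By that theorem, the two calls run in time polynomial in $(\nos{\toblo}{\tm},\size{\tm})$ and $(\nos{\toblo}{\tmtwo},\size{\tmtwo})$ respectively.

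Next, I would feed $\tmthree$ and $\tmfour$ to the algorithm $\algtwo_{\eq}$ recalled in Section~\ref{sect:algorithm}, which tests whether $\unf{\tmthree}$ and $\unf{\tmfour}$ are $\alpha$-equivalent and, by~\cite{DBLP:conf/rta/AccattoliL12}, runs in time polynomial in $\size{\tmthree}$ and $\size{\tmfour}$, \emph{without} ever materialising the (possibly exponentially larger) unfoldings. By confluence of $\beta$-reduction the normal form is unique up to $\alpha$, so the output of $\algtwo_{\eq}$ is exactly the desired answer: $\tm$ and $\tmtwo$ share their normal form iff $\unf{\tmthree}=_{\alpha}\unf{\tmfour}$.

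For the overall complexity bound, the no size explosion property (Corollary~\ref{coro:subterm}.\ref{p:subterm-2}), combined with the high-level implementation theorem that gives $\size{\deriv}=O(\nos{\toblo}{\tm}^{2})$, yields $\size{\tmthree} \leq (\size{\deriv}+1)\cdot\size{\tm}$, which is polynomial in $\nos{\toblo}{\tm}$ and $\size{\tm}$; symmetrically for $\tmfour$. Hence the inputs handed to $\algtwo_{\eq}$ have size polynomial in the four parameters in the statement, and its runtime is too. Composing the three polynomials gives the claimed global polynomial bound.

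There is no serious obstacle beyond assembling these pieces: the heavy lifting was done in Theorem~\ref{theo:invariance} (computing a compact normal form efficiently via \lou\ derivations) and in~\cite{DBLP:conf/rta/AccattoliL12} (testing equality on compact forms). The only delicate point worth flagging is that the statement implicitly presupposes $\nos{\toblo}{\tm}$ and $\nos{\toblo}{\tmtwo}$ are finite, i.e.\ that both terms are $\toblo$-normalising; if either diverges the bound is vacuous and the question of a common normal form is ill-posed, so the corollary is understood in the normalising case.
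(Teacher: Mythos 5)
Your proposal is correct and matches the paper's own (implicit) argument exactly: the corollary is obtained by running the algorithm of Theorem~\ref{theo:invariance} on both terms and then applying the compact-form $\alpha$-equivalence test $\algtwo_{\eq}$ from~\cite{DBLP:conf/rta/AccattoliL12}, with correctness following from confluence and the size bounds from the no size explosion property. The paper even makes the same closing remark you do, namely that the algorithm (correctly) diverges when one of the terms is not normalisable.
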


\noindent If one instantiates Corollary~\ref{coro:statmanlike} to the
case in which $\tmtwo$ is a normal form, one obtains that checking
whether the normal form of any term $\tm$ is equal to $\tmtwo$ can be
done in time polynomial in $\nos{\toblo}{\tm}$, $\size{\tm}$, and
$\size{\tmtwo}$. This is particularly relevant when the size of
$\tmtwo$ is constant, \eg, when the $\l$-calculus computes decision
problems and the relevant results are truth values.

Please observe that whenever one (or both) of the involved terms are
\emph{not} normalisable, the algorithms above (correctly) diverge.

\section{Discussion}
Here we further discuss invariance and some potential optimisations,
that, however, are outside the scope of this work (which only deals
with asymptotical bounds and is thus foundational in spirit).

\paragraph{Mechanisability vs Efficiency.} 
Let us stress that the study of invariance is about
\emph{mechanisability} rather than \emph{efficiency}. One is not
looking for the smartest or shortest evaluation strategy. But rather, for
one that does not hide the complexity of its implementation in the
cleverness of its definition, as it is the case for \levy's optimal
evaluation. Indeed, an optimal derivation can be even shorter then the
shortest sequential strategy, but --- as shown by Asperti and Mairson
\cite{DBLP:conf/popl/AspertiM98} --- its definition hides
hyper-exponential computations, so that optimal derivations do not
provide an invariant cost model. The leftmost-outermost strategy, is a
sort of \emph{maximally unshared} normalising strategy, where redexes
are duplicated whenever possible (and unneeded redexes are never
reduced), somehow dually with respect to optimal derivations. It is
exactly this \emph{inefficiency} that induces the subterm property,
the key point for its mechanisability. It is important to not confuse
two different levels of sharing: our \lou\ derivations share
\emph{subterms}, but not \emph{computations}, while \levy's optimal
derivations do the opposite. By sharing computations, they collapse
the complexity of many steps into a single one, making the number of
steps an unreliable measure.

\paragraph{Call-by-Value and Call-by-Need.}
Call-by-name evaluation is in many cases less efficient than
call-by-value or call-by-need evaluation. Since we follow the
call-by-name policy, the same kind of inefficiency shows up
here. However, as already said, invariance is not about absolute
efficiency: call-by-name and call-by-value are incomparable ---
sometimes one can even be exponentially faster than the other,
sometimes the other way around --- but this fact does not forbid both
to be invariant, \ie\ reasonably mechanisable.

We did not prove call-by-value/need invariance. Nonetheless, we strove
to provide an abstract view of both the problem and of the
architecture of our solution, having already in mind the adaptation to
call-by-value/need $\l$-calculi. Recently, the first author and
Sacerdoti Coen show~\cite{valueVariablesDraft} that (in the much
simpler weak case) these policies provide an improved high-level
implementation theorem, where evaluation in the \lsc\ has a
\emph{linear} overhead, rather than quadratic.

\paragraph{Usefulness.} Another source of inefficiency is the fact that at each
reduction step we need to check whether the \lo\ redex is useful
before firing it, and this potentially amounts to doing a global
analysis of the term. One could imagine decorating terms with
additional tags in such a way that the check for usefulness becomes
local \emph{and} updating tags is not too costly, so that useful
reduction may be implemented more efficiently. In particular, building
on the already established relationships between the \lsc\ and
abstract machines \cite{disttilingDraft}, we expect to be able to
design an abstract machine implementing \lou\ evaluation and testing
for usefulness in time linear in the size of the starting term.

\section{Conclusions}
This work can be seen as the last tale in the long quest for an
invariant cost model for the $\lambda$-calculus.  In the last ten
years, the authors have been involved in various works in which
\emph{parsimonious} time cost models have been shown to be invariant
for more and more general notions of reduction, progressively relaxing
the conditions on the use of
sharing~\cite{DBLP:journals/tcs/LagoM08,DBLP:journals/corr/abs-1208-0515,DBLP:conf/rta/AccattoliL12}. None
of the results in the literature, however, concerns reduction to
normal form as instead we do here.

By means of explicit substitutions --- our tool for sharing --- we
provided the first full answer to a long-standing open problem: we
proved that the $\l$-calculus is indeed a reasonable machine, by
showing that the length of the leftmost-outermost derivation to normal
form is an invariant cost model.

The solution required the development of a whole new toolbox: an
abstract deconstruction of the problem, a detailed study of
unfoldings, a theory of useful derivations, and a general view of
functions efficiently computable in compact form. Along the way, we
showed that standard derivations for explicit substitutions enjoy the
crucial \emph{subterm property}. Essentially, it ensures that standard
derivations are mechanisable, unveiling a very abstract notion of
machine hidden deep inside the $\l$-calculus itself, and also a
surprising perspective on the standardisation theorem, a classic
result apparently unrelated to the complexity of evaluation.

Among the downfalls of our results, one can of course mention that
proving systems to characterise time complexity classes equal or
larger than $\mathbf{P}$ can now be done merely by deriving bounds on
the \emph{number} of leftmost-outermost reduction steps to normal
form. This could be useful, e.g., in the context of \emph{light
  logics}~\cite{DBLP:conf/csl/GaboardiR07,DBLP:journals/lmcs/CoppolaLR08,DBLP:journals/iandc/BaillotT09}. The
kind of bounds we obtain here are however more \emph{general} than
those obtained in implicit computational complexity (since we deal
with a universal model of computation). \ignore{Moreover, the emphasis
  here is of course on \emph{relative} bounds.}

While there is room for finer analyses (\eg\ studying call-by-value or
call-by-need evaluation), we consider the understanding of time
invariance essentially achieved. However, the study of complexity
measures for $\l$-terms is far from being over. Indeed, the study of
space complexity for functional programs has only made its very first
steps
\cite{DBLP:conf/lics/Schopp07,DBLP:conf/popl/GaboardiMR08,DBLP:conf/esop/LagoS10},
and not much is known about invariant \emph{space} cost models.

\bibliographystyle{alpha}
\bibliography{\macrospath/biblio}

\end{document}